\definecolor{Darkblue}{rgb}{0,0,0.4}
\definecolor{Brown}{cmyk}{0,0.81,1.,0.60}
\definecolor{Purple}{cmyk}{0.45,0.86,0,0}
\newcommand{\lref}[2][]{\hyperref[#2]{#1~\ref*{#2}}}
\newenvironment{proof}{\noindent {\bf Proof:  }}{\hfill\rule{2mm}{2mm}}
\numberwithin{figure}{section}
\numberwithin{equation}{section}
\theoremstyle{definition}
\newtheorem{assumption}{Assumption}[section]
\theoremstyle{theorem}
\newtheorem{corollary}{Corollary}[section]
\newtheorem{theorem}{Theorem}[section]
\newtheorem{lemma}{Lemma}[section]
\newtheorem{claim}{Claim}[section]
\providecommand{\Appendix}{}
\renewcommand{\Appendix}[2][?]{%
        \refstepcounter{section}%
        \vspace{\parskip}%
        {\flushright\large\bfseries\appendixname\ \thesection: #1}%
        \vspace{\baselineskip}%
}
\renewcommand{\appendix}{%
        \newpage
        \renewcommand{\section}{\secdef\Appendix\Appendix}%
        \renewcommand{\thesection}{\Alph{section}}%
        \setcounter{section}{0}%
}
\newcommand{\ve}{\varepsilon}
\newcommand{\R}{\mathbb{R}}
\newcommand{\Z}{\mathbb{Z}}
\newcommand{\OPT}{\ensuremath{\mathsf{OPT}}\xspace}
\newcounter{note}[section]
\renewcommand{\thenote}{\thesection.\arabic{note}}
\newcommand{\initOneLiners}{%
    \setlength{\itemsep}{0pt}
    \setlength{\parsep }{0pt}
    \setlength{\topsep }{0pt}
}
\newcommand{\ignore}[1]{}
\newcommand*\samethanks[1][\value{footnote}]{\footnotemark[#1]}
\newcommand{\shortv}[1]{}
\newcommand{\defeq}{\coloneqq}
\newcommand{\copt}{\ensuremath{C^{\text{\rm opt}}}\xspace}
\newcommand{\popt}{\ensuremath{P^{\text{\rm opt}}}\xspace}
\newcommand{\hubert}[1]{\refstepcounter{note}$\ll${\sf Hubert's
Comment~\thenote:} {\sf \textcolor{red}{#1}}$\gg$\marginpar{\tiny\bf HC~\thenote}}
\title{Online Convex Covering and Packing Problems}
\author{T-H. Hubert Chan\thanks{Department of Computer Science, the University of Hong Kong. {\texttt{\{hubert,zhiyi,nkang\}@cs.hku.hk}}} \and Zhiyi Huang\samethanks \and Ning Kang\samethanks}
\date{}
\begin{document}

\begin{titlepage}

\maketitle

\begin{abstract}
\thispagestyle{empty}
We study the online convex covering problem and online convex packing problem.
The (offline) convex covering problem is modeled by the following convex program:
\[
\min_{\vec{x} \in \R_{+}^n} f(\vec{x}) \text{\rm ~s.t.~} A \vec{x} \ge \vec{1} 
\]
where $f : \R_{+}^n \mapsto \R_{+}$ is a monotone and convex cost function, and $A$ is an $m \times n$ matrix with non-negative entries.
Each row of the constraint matrix $A$ corresponds to a covering constraint.
In the online problem, each row of $A$ comes online and the algorithm must maintain a feasible assignment $\vec{x}$ and may only increase $\vec{x}$ over time.
The (offline) convex packing problem is modeled by the following convex program:
\[
\max_{\vec{y} \in \R_{+}^m} \textstyle \sum_{j = 1}^m y_j - g(A^T \vec{y}) 
\]
where $g : \R_{+}^n \mapsto \R_{+}$ is a monotone and convex cost function.
It is the Fenchel dual program of convex covering when $g$ is the convex conjugate of $f$.
In the online problem, each variable $y_j$ arrives online and the algorithm must decide the value of $y_j$ on its arrival.

\smallskip

We propose simple online algorithms for both problems using the online primal dual technique, and obtain nearly optimal competitive ratios for both problems for the important special case of polynomial cost functions.
For instance, for any convex homogeneous polynomial cost functions with non-negative coefficients and degree $\tau$, we introduce an $O(\tau \log n)^\tau$-competitive online convex covering algorithm, and an $O(\tau)$-competitive online convex packing algorithm, matching the known $\Omega(\tau \log n)^\tau$ \cite{AzarCP14} and $\Omega(\tau)$ \cite{HuangK15} lower bounds respectively.

\smallskip

There is a large family of online resource allocation problems that can be modeled under this online convex covering and packing framework, including online covering and packing problems (with linear objectives) \cite{BuchbinderN09a},  online mixed covering and packing \cite{AzarBFP13}, and online combinatorial
auction~\cite{BartalGN03,BlumGMS11}.
Our framework allows us to study these problems using a unified approach.
\end{abstract}

\thispagestyle{empty}
\end{titlepage}

\section{Introduction}

Due to its wide practical applications,
online optimization has received much attention in the communities of computer science, operation research, and applied mathematics.
In particular, in an online resource allocation problem, there is a set of resources and a set of requests, each of which demands certain amount of each resource.
The requests arrive one by one online, and the online algorithm must decide whether to serve a request and/or how to serve the request immediately on its arrival without information of future requests.

Most previous work on online resource allocation problems can be divided into two classes:
\emph{online covering problems}, where the online algorithm must satisfy all (covering) requests with an objective of minimizing a linear cost function for using the resources;
and \emph{online packing problems}, where the online algorithm aims to maximize the total number of served requests (or, more generally, the total value generated from the served requests) subject to fixed resource capacities.\footnote{Note that the other two possible combinations, i.e., covering problems with fixed resource capacities and packing with linear resource costs, have trivial $1$-competitive online algorithms.}
However, the two extreme cases of either having linear resource costs or having fixed resource capacities (i.e., zero-infinity step cost function) are far from capturing the whole story in practical scenarios.
For many typical resources, such as computing cycles, memory, labor, oil, etc., additional resources can be obtained at increasing marginal costs, which cannot be modeled in the traditional online covering and packing framework.
This observation motivates the natural question of how to solve online resource allocation problems with general convex cost functions.

On the other hand, there have been a few spontaneous studies on problems with mixed covering and packing constraints, e.g., online facility location \cite{AlonAABN06}, online unrelated machine scheduling \cite{AspnesAFPW97, AwerbuchAGKKV95}, and the recent work by Azar et al.~\cite{AzarBFP13} for an attempt to develop a unified framework for such problems.
Problems with both covering and packing constraints are traditionally considered to be more difficult than problems with only one types of constraints. 
Indeed, most known algorithms for mixed covering and packing problems and their analysis are more complicated compared to their counterparts in covering or packing problems.


\subsection{Our Contributions}

The first contribution of this paper is a conceptual one.
We propose an \emph{online convex covering and packing} framework that allows us to study a large family of online resource allocation problems with general convex cost functions under a unified framework.
Interestingly, we can also remodel the mixed covering and packing problem by Azar et al.~\cite{AzarBFP13}, which was originally modeled using linear programs, as a pure covering problem, but with a convex cost function.
Then, we are able apply our unified framework to the problem after remodelling and obtain a better competitive ratio with a simpler algorithm and analysis.

Our main technical contributions are two simple online algorithms for online convex covering and online convex packing respectively.
Our algorithms achieve nearly optimal competitive ratios for both problems for the important special case of polynomial cost functions (that are convex and have non-negative coefficients), where the competitive ratio is parameterized by the maximum degree of the cost function.
For online convex covering, we also show that our algorithm is competitive with an additive error for general cost functions.

We focus on polynomial cost functions for two reasons.
First, they are the most common type of cost functions when we model practical problems using convex programs.
For example, if we want to model energy minimization in online scheduling, all known theoretical studies (e.g., \cite{YaoDS95}) consider polynomial power functions.
There are also a large family of problems, e.g., the mixed covering and packing problem \cite{AzarBFP13}, whose objective can be modeled as the $\ell_p$-norm of a set of linear or polynomial functions; hence,
the $p$-th power of the objective is a polynomial.
Second, there are known online optimization problems for which polynomial cost functions are the ``worst'' cost functions which still admit bounded competitive ratios.
For example, Devanur and Huang \cite{DevanurH14} (Thm. 3.2 and 5.1) showed that the problem of minimizing flow-time plus energy admits bounded competitive ratios if and only if there is a parameter $\tau$ such that the power function is ``at most as convex as'' degree-$\tau$ polynomials.
We next discuss our framework and results.

\paragraph{Online Convex Covering:}

In an online covering problem, there are
 $n$ resources and let $x_i$ denote the amount of resource $i$ that are used by the algorithm; 
there are $m$ (covering) requests that arrive online, where each request $j$ is specified by a vector $\vec{a}_j = (a_{j1}, \dots, a_{jn})$, and has the form of $\sum_{i = 1}^n a_{ji} x_i \ge 1$. 
The goal is to minimize the resource costs subject to satisfying all requests.
Unlike the original online covering problem, where the cost function is linear in $\vec{x}$, in the online convex covering framework, we allow a general convex cost function $f$ such that producing $x_i$ amount of each resource $i$ for all $i = 1, \dots, n$ incurs a cost of $f(\vec{x})$.
Clearly, the design of online algorithms and the corresponding competitive analysis will depend on properties of the cost function $f$.
For polynomial cost functions, we show the following result.

\begin{theorem}[Online Convex Covering]
\label{thm:main_cover_poly}
Suppose the cost function $f$ is a convex multivariate polynomial
with non-negative coefficients and maximum degree $\tau$.  
\begin{compactitem}
\item[(a)]  If $f$ is a homogeneous polynomial, then there is an $O(\tau \log n)^\tau$-competitive online algorithm.

\item[(b)] If $f$ is a sum of $N$ monomials,
then there is an $O(\tau \log N)^\tau$-competitive online algorithm.
\end{compactitem}
%
%
%
%
%
\end{theorem}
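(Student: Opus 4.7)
My plan is to apply the online primal-dual framework with the Fenchel dual
\[
  \max_{\vec{y} \ge 0}\; \sum_{j} y_j \;-\; f^*\bigl(A^T \vec{y}\bigr),
\]
where $f^*$ denotes the convex conjugate of $f$. When a new covering row $j$ arrives with $\langle \vec{a}_j, \vec{x}\rangle < 1$, I would continuously raise $y_j$ and, simultaneously, raise each primal coordinate $x_i$ according to a KKT-motivated differential rule roughly of the form
\[
 \frac{d\,\nabla_i f(\vec{x})}{d y_j} \;\approx\; a_{ji},
\]
applied to an inflated variable $x_i + \delta_i$ so that coordinates that start at zero can still grow multiplicatively. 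The boost $\delta_i$ is of order $1/n$ for part~(a) and an analogous $1/N$ version is used for part~(b). The continuous process halts as soon as $\langle \vec{a}_j, \vec{x}\rangle \ge 1$, which makes primal feasibility immediate, so the entire analysis reduces to controlling $f(\vec{x})$ against the dual objective.

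For part~(a), the homogeneity of $f$ is the main lever: Euler's identity $\langle \vec{x}, \nabla f(\vec{x})\rangle = \tau f(\vec{x})$ lets me convert a local gradient bound into a bound on $f$ itself. The core technical step is to establish a per-step inequality
\[
 d f(\vec{x}) \;\le\; \alpha \cdot d y_j \quad\text{with}\quad \alpha = O(\tau \log n)^{\tau},
\]
so that integrating over all arriving constraints gives $f(\vec{x}) \le \alpha\sum_j y_j$. A separate and equally important check is that the constructed $\vec{y}$ satisfies an approximate dual feasibility $(A^T\vec{y})_i \lesssim \nabla_i f(\vec{x})$ up to a factor of order $\alpha^{1/\tau}$, which keeps $f^*(A^T\vec{y})$ a constant fraction of $\sum_j y_j$. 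Weak Fenchel duality then delivers the competitive ratio $\alpha = O(\tau \log n)^{\tau}$, matching the lower bound of~\cite{AzarCP14}.

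For part~(b), I would decompose $f(\vec{x}) = \sum_{k=1}^N f_k(\vec{x})$ into monomials, introduce auxiliary variables $z_k \ge f_k(\vec{x})$, and reformulate the objective as $\min \sum_k z_k$ subject to these per-monomial constraints together with the original covering rows. Because each $f_k$ is itself a homogeneous polynomial of degree at most $\tau$, the part~(a) argument goes through monomial-by-monomial, except that the additive boost is now indexed by monomials rather than coordinates: the $\log n$ in $\alpha$ gets replaced by $\log N$, yielding $O(\tau \log N)^{\tau}$. The principal obstacle, both in~(a) and in~(b), is calibrating the boost and the update rule so that the per-step inequality between primal growth $df(\vec{x})$ and dual growth $dy_j$ holds uniformly across the whole run. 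The degree-$\tau$ behavior $\nabla_i f(\vec{x}) \sim x_i^{\tau-1}$ interacts nontrivially with the multiplicative flow, and it is exactly this interaction that produces the $\tau$-th power in the competitive ratio; it must be tracked carefully both for coordinates that are zero when the process begins and for coordinates that become active only in later rounds.
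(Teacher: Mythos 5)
Your high-level frame (online primal--dual with the Fenchel dual, raising $y_j$ while the new constraint is violated, a multiplicative-style primal update, weak duality) is the same as the paper's, but the proposal stops short of the one idea that makes part (a) true, and the step you defer (``calibrating the boost and the update rule so that the per-step inequality \dots holds uniformly'') is exactly the crux. Your device for coordinates starting at zero is a fixed additive boost $\delta_i\sim 1/n$ with monotone $\vec{y}$. Bounding the dual load then requires integrating $\frac{dz_i}{dx_i}\sim\frac{\nabla_i f(\vec{x})}{x_i+\delta_i}$, which produces $\log\frac{U_i}{\delta_i}$-type factors with $U_i\approx 1/\min_j a_{ji}$; i.e.\ you recover (at best) the entry-dependent guarantee of Theorem~\ref{thm:coveringgeneral}, a ratio $(\tau\ln\mu)^\tau$ with $\mu$ depending on the entries of $A$ plus an additive error, not the entry-independent bound claimed in Theorem~\ref{thm:main_cover_poly}. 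Your heuristic $\nabla_i f(\vec{x})\sim x_i^{\tau-1}$ is also false for general convex homogeneous $f$ (e.g.\ $f=x_1^2+x_1x_2+x_2^2$ has $\nabla_1 f=2x_1+x_2$ bounded away from $0$ as $x_1\to 0$), and it is precisely this regime that creates the scale-dependent loss. The paper's way around it is a surrogate function: inflate every exponent by a factor $1+\lambda$ with $\lambda=1/\log n$ (Lemma~\ref{lemma:lead_term} guarantees the pure powers $x_i^\tau$ appear, Lemma~\ref{lemma:homo_approx} shows $f$ and $\widehat{f}^{1/(1+\lambda)}$ agree up to $O(1)^\tau$), so that $\nabla\widehat{f}$ is $\lambda$-monotone and one gets the clean bound $\vec{z}\le\frac{1}{\lambda\rho}\nabla\widehat{f}(\vec{x})$ of Lemma~\ref{lemma:boundz2} with $\vec{L}=\vec{0}$ and no additive loss, whence Lemma~\ref{lemma:sharp_convex} gives $\big(\tau(1+\lambda)/\lambda\big)^\tau=O(\tau\log n)^\tau$. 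Nothing in your sketch substitutes for this. (A smaller quantitative point: to close the weak-duality bookkeeping you need $z_i\le\gamma\,\nabla_i f(\vec{x})$ with $\gamma\approx\alpha^{-(\tau-1)/\tau}\ll 1$, since $f^*(\gamma\nabla f(\vec{x}))\le\gamma^{\tau/(\tau-1)}(\tau-1)f(\vec{x})$ must be dominated by $\sum_j y_j\ge f(\vec{x})/\alpha$; a slack of order $\alpha^{1/\tau}>1$, as written, is far too weak.)

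For part (b) the monomial-by-monomial reduction does not go through as claimed. Individual monomials such as $x_1x_2x_3$ are not convex, and convexity is used essentially in the part (a) machinery (in Lemma~\ref{lemma:convexf}(c) for the conjugate bound and in Lemma~\ref{lemma:lead_term}); moreover for non-homogeneous $f$ Euler's identity fails and only $\langle\vec{x},\nabla f(\vec{x})\rangle\le\tau f(\vec{x})$ survives. The auxiliary constraints $z_k\ge f_k(\vec{x})$ are also not linear covering constraints, so your reformulation leaves the online convex covering model and it is unclear which dual you would run weak duality against. The paper instead uses the surrogate $\widetilde{f}=\sum_{\vec{d}}c_{\vec{d}}^{1+\lambda}\vec{x}^{(1+\lambda)\vec{d}}$ with $\lambda=1/\log N$, accepts that $\widetilde{f}$ may be non-convex, and repairs the lost convexity by comparing against $g=f^{1+\lambda}$ (Lemmas~\ref{lemma:tf_g} and~\ref{lemma:f_end}). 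Some replacement for this is needed; as written, both parts of the proposal have a genuine gap.
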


The above result is nearly optimal due to a lower bound by Azar et al.~\cite[Theorem 3]{AzarCP14} showing that if the objective is the $\ell_p$-norm of a set of linear cost functions, the competitive ratio of any online algorithm is at least $\Omega(p \log n)$ (for sufficiently large $n$).  To compare with our results, we set $\tau = p$ and
take the $p$-th power of the $\ell_p$-norm to convert it 
into a degree-$\tau$ homogeneous polynomial such that the above lower bound
becomes $\Omega(\tau \log n)^\tau$.
For general cost functions, we show the following.

\begin{theorem}
\label{thm:main_cover_general}
Suppose the cost function $f$ is at most as convex as degree-$\tau$ polynomials in the sense that $\langle x, \nabla f(x) \rangle \le \tau f(x)$.
Suppose in our online convex covering algorithm, the primal vector
$\vec{x}$ is initially set to $\vec{L}$,
and let $\vec{U}$ be an upper bound of $\vec{x}$ (e.g., $U_i = \tfrac{1}{\min_{j \in [m]} a_{ji}}$); denote $\mu = \max_{i \in [n]} \tfrac{U_i}{L_i}$. 
Then, our online convex covering algorithm is $O(\tau \ln \mu)^\tau$-competitive with an additive error of $f(\vec{L})$.
\end{theorem}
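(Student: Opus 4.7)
I would mirror the online primal-dual algorithm underlying Theorem~\ref{thm:main_cover_poly}(a), but (i)~start the primal at $\vec x = \vec L$ instead of $\vec 0$, (ii)~use a purely multiplicative coordinate update calibrated to the range parameter $\mu$ rather than to $n$, and (iii)~replace the Euler identity $\langle x, \nabla f(x)\rangle = \tau f(x)$ for homogeneous polynomials by the one-sided hypothesis $\langle x, \nabla f(x)\rangle \le \tau f(x)$. When the constraint $\vec a_j \cdot \vec x \ge 1$ arrives and is violated, raise $y_j$ at unit rate and simultaneously evolve
\[
\frac{\partial x_i}{\partial y_j} \;=\; \frac{a_{ji}\, x_i}{\beta\, \partial_i f(\vec x)}, \qquad \beta \;=\; \Theta(\tau \ln \mu)^\tau,
\]
halting $y_j$ once either $\vec a_j \cdot \vec x = 1$ or some $x_i$ reaches $U_i$. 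Because the update is log-linear in $y_j$, each $x_i$ can multiply by at most $\mu = U_i/L_i$ over the whole run, and the factor $\ln\mu$ inside $\beta$ pays exactly for this multiplicative range.

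\textbf{Primal-cost step.} Along the update,
\[
\frac{d f(\vec x)}{d y_j} \;=\; \sum_i \partial_i f(\vec x) \cdot \frac{\partial x_i}{\partial y_j} \;=\; \frac{1}{\beta}\sum_i a_{ji} x_i \;\le\; \frac{1}{\beta},
\]
since the algorithm only acts while $\sum_i a_{ji} x_i \le 1$. Integrating yields $f(\vec x^{\mathrm{alg}}) - f(\vec L) \le \beta^{-1}\sum_j y_j$, reducing the proof to a bound of the form $\sum_j y_j \le O(\tau \ln \mu)^\tau \cdot \beta \cdot \OPT$.

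\textbf{Dual-side step.} The Fenchel dual $\max_{\vec y \ge 0}\sum_j y_j - g(A^T\vec y)$ is always feasible in $\vec y$, so by weak duality $\sum_j y_j \le \OPT + g(A^T\vec y)$, and the task reduces to bounding $g(A^T \vec y)$ in terms of $f(\vec x^{\mathrm{alg}})$. I would argue inductively that the update rule maintains the coordinate-wise invariant $A^T \vec y \le C\, \nabla f(\vec x)$ for $C = \Theta(\tau\ln\mu)^\tau$, by checking that $dy_j$ and $d(\nabla f(\vec x))_i$ grow in lock-step under the chosen $\partial x_i/\partial y_j$. Given this invariant, the Fenchel definition $g(z) = \sup_x\langle z, x\rangle - f(x)$ combined with $\langle x, \nabla f(x)\rangle \le \tau f(x)$ bounds $g(A^T\vec y)$ by a constant multiple of $f(\vec x^{\mathrm{alg}})$, which plugged back into the primal inequality closes the loop and gives $f(\vec x^{\mathrm{alg}}) \le O(\tau \ln \mu)^\tau\, \OPT + f(\vec L)$.

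\textbf{Main obstacle.} The delicate part is maintaining the gradient-to-dual invariant $A^T \vec y \le C\,\nabla f(\vec x)$ under only the relaxed hypothesis $\langle x, \nabla f(x)\rangle \le \tau f(x)$. In the homogeneous polynomial case of Theorem~\ref{thm:main_cover_poly}(a) the gradient scales exactly with $x$, so this verification is nearly mechanical; here the gradient can grow non-uniformly across coordinates, and the analysis must absorb the resulting slack into the $(\tau\ln\mu)^\tau$ factor \emph{without} inflating the additive $f(\vec L)$ term. I expect this to force careful tuning of $\beta$ and possibly the addition of a small coordinate-wise safeguard (e.g.\ a $+L_i$ term) to the update, so that the log-range budget $\ln\mu$ and the degree penalty $\tau^\tau$ compose multiplicatively rather than interact badly.
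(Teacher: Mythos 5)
Your overall architecture is the paper's (the same exponential primal--dual update, a bound relating the increase of $f$ to the increase of $\sum_j y_j$, a coordinate-wise invariant $A^T\vec y \le \gamma\,\nabla f(\vec x)$, a conjugate bound driven by $\langle \vec x,\nabla f(\vec x)\rangle\le\tau f(\vec x)$, and weak duality), but the calibration of the speed parameter is inverted, and this breaks the quantitative step you wave at. In the paper's Algorithm~\ref{alg:covering} the rate is $\tfrac{dx_i}{dy_k}=\tfrac{\rho\, a_{ki}x_i}{\nabla_i f(\vec x)}$ with $\rho=\tau^{\tau-1}(\ln\mu)^\tau$ \emph{large}, i.e.\ the primal moves fast per unit of dual, so the dual stays small: integrating $\tfrac{dz_i}{dx_i}=\tfrac{\nabla_i f(\vec x)}{\rho x_i}\le\tfrac{\nabla_i f(\vec x^{(m)})}{\rho x_i}$ (this is where monotonicity of $\nabla f$, Assumption~3.2, and the range $[L_i,U_i]$ enter) gives $\vec z^{(m)}\le\tfrac{\ln\mu}{\rho}\nabla f(\vec x^{(m)})$ with coefficient $\gamma=\tfrac{\ln\mu}{\rho}\le 1$ (Lemma~\ref{lemma:boundz1}), and only for $\gamma\le 1$ does Lemma~\ref{lemma:convexf}(b)(c) yield $f^*(\vec z^{(m)})\le\gamma^{\tau/(\tau-1)}(\tau-1)f(\vec x^{(m)})$. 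Your rate $\tfrac{\partial x_i}{\partial y_j}=\tfrac{a_{ji}x_i}{\beta\,\nabla_i f(\vec x)}$ with $\beta=\Theta(\tau\ln\mu)^\tau$ is the opposite regime: per unit of dual the primal moves slowly, so $y_j$ and hence $\vec z=A^T\vec y$ grow large, and the invariant you can actually maintain is $\vec z\le \beta\ln\mu\cdot\nabla f(\vec x^{(m)})$, i.e.\ $C\gg 1$. For $C\gg 1$ there is no useful upper bound on $f^*(C\,\nabla f(\vec x))$ from the hypotheses (already for $f(x)=x^\tau$ it is $\approx C^{\tau/(\tau-1)}(\tau-1)f(\vec x)$), so your claim that $g(A^T\vec y)$ is a \emph{constant} multiple of $f(\vec x^{\mathrm{alg}})$ is false in this parametrization. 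Worse, the bookkeeping is internally inconsistent: if $g(A^T\vec y)\le O(1)f(\vec x^{\mathrm{alg}})$ did hold, your two displayed inequalities would give $f(\vec x^{\mathrm{alg}})-f(\vec L)\le \tfrac1\beta\bigl(\OPT+O(1)f(\vec x^{\mathrm{alg}})\bigr)$ and hence a competitive ratio of order $1/\beta\ll 1$, which is impossible; when the coefficient is corrected to its true size the self-referential inequality has a negative leading coefficient and yields nothing.

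The fix is exactly the paper's proof of Theorem~\ref{thm:coveringgeneral}: put the large factor $\rho=\tau^{\tau-1}(\ln\mu)^\tau$ in the \emph{numerator} of the update, accept the weaker dual lower bound $\sum_k y_k\ge\tfrac1\rho\bigl(f(\vec x^{(m)})-f(\vec L)\bigr)$ (Lemma~\ref{lemma:yandf}), and win because $f^*(\vec z^{(m)})$ is then tiny; the arithmetic $\tfrac1\rho-\bigl(\tfrac{\ln\mu}{\rho}\bigr)^{\tau/(\tau-1)}(\tau-1)=\tfrac1{(\tau\ln\mu)^\tau}$ closes the primal--dual gap and weak duality gives $C(\vec x^{(m)})\le(\tau\ln\mu)^\tau\copt+\tau\,C(\vec L)$. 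Two smaller points: the ``main obstacle'' you identify (maintaining the gradient invariant under the relaxed hypothesis) is not where the difficulty lies --- the invariant needs only monotonicity of $\nabla f$ plus a per-coordinate integration over $[L_i,U_i]$, while $\langle\vec x,\nabla f(\vec x)\rangle\le\tau f(\vec x)$ is used only in the conjugate bound of Lemma~\ref{lemma:convexf}; and the extra safeguards you propose (halting when $x_i$ hits $U_i$, adding a $+L_i$ term) are unnecessary, since $\vec U$ is merely an a priori bound such as $U_i=1/\min_k a_{ki}$ and the additive $f(\vec L)$ term already accounts for the nonzero start.
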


As a concrete application of our framework and our online convex covering algorithm, we obtain an improved competitive ratio for the online mixed packing and covering problem.


\begin{itemize}[leftmargin=.5cm, topsep=0cm]
%
%
%
\item {\bf $\ell_p$-norm of linear functions:~}
In fact, let us consider a more general problem.
Consider $l$ vectors $\vec{c}_k \in \R^n_+$ ($k \in [l]$),
and define $\vec{\lambda}: \R^n_+ \rightarrow \R^l_+$ such that for $k \in [l]$, $\lambda_k(\vec{x}) := \langle \vec{c}_k, \vec{x} \rangle$.  For some $p \geq 1$,
the cost function is $\| \vec{\lambda}(\vec{x}) \|_p$, which is the $\ell_p$-norm of the $l$-dimensional vector $\vec{\lambda} (\vec{x})$.
This cost function is considered in~Azar et al.~\cite{AzarCP14},
and 
the online mixed covering and packing problem~\cite{AzarBFP13}
is the special case when $p = \infty$.

\medskip

Typically, the problem is characterized by a sparsity parameter $d$, which in our case is the maximum number of non-zero coordinates in each of the vectors $\vec{c}_k$'s.\footnote{The notions of sparsity in different work are somewhat different; see Section \ref{sec:concurrent} for details.}
For finite $p \geq 1$, we consider
the polynomial $f(\vec{x}) := \| \vec{\lambda}(\vec{x}) \|_p^p$,
whose degree is at most $p$.  Observe that since each vector $\vec{c}_k$ has at most $d$ non-zero coordinates,
each term $\lambda_k(\vec{x})^p$ introduces at most $d$ variables.
Hence, $f(\vec{x})$ is a degree-$p$ homogeneous polynomial that depends on at most $ld$ variables.
However, as we shall see in Corollary~\ref{cor:lp_norm}, a careful application of Theorem~\ref{thm:main_cover_poly} can achieve an 
$O(p \log d)^p$-competitive online algorithm with respect to the cost function $f$.
Taking the $p$-th root, the competitive ratio for the $\ell_p$-norm is $O(p \log d)$,
simplifying the competitive analysis of Azar et al.~\cite{AzarBFP13, AzarCP14} by removing the logarithmic dependency on
the ratios of non-zero coordinates in $\vec{a}_j$'s and $\vec{c}_k$'s.


Since the dimension of vector $\vec{\lambda}(\vec{x})$ is $l$, it is well known that its $\ell_\infty$-norm can be approximated within constant factor by the $\ell_{\log l}$-norm.  Hence, using $p = \log l$ in our framework, we obtain a competitive ratio of $O(\log l  \log d)$ for the online mixed covering and packing problem.

\ignore{
In the online mixed covering and packing problem \cite{AzarBFP13}, other than the offline resources and online (covering) requests, there is also a set of $l$ packing constraints of the form $\sum_{i = 1}^n b_{ki} x_i \le \lambda_k$ for $k = 1, \dots, l$, where $\lambda_k$ are themselves variables denoting the ``congestion'' of packing constraint $k$.
The objective is to minimize the maximum congestion, i.e., $\max_{k \in [l]} \lambda_k$, or more generally, the $\ell_p$ norm of the congestions, i.e., $( \sum_{k = 1}^l \lambda_k^p )^{1/p}$.
Clearly, these objectives can be modeled as a convex function of $\vec{x}$ and, thus, fall into our online convex covering framework.

\smallskip

Using our framework, we get an $O(\log n \log l)$-competitive online algorithm minimizing the maximum congestion, and an $O(p \log n)$-competitive online algorithm for minimizing the $\ell_p$-norm, improving the competitive ratio of Azar et al.~\cite{AzarBFP13, AzarCP14} by removing the logarithmic dependency on the parameters $a_{ji}$'s and $b_{ki}$'s.
}
%
\end{itemize}

\paragraph{Online Convex Packing:}

In an online convex packing problem, there are $n$ resources and let $z_i$ denote the amount of resource $i$ that is used by the algorithm.
There are (packing) requests that arrive online, where each request is specified by a vector $\vec{a}_j = (a_{j1}, \dots, a_{jn})$ and $a_{ji}$ is the amount of resource $i$ needed for serving one unit of request $j$.
Unlike the original online packing problems, where the resources are subject to fixed capacities, we will allow a general convex cost function $g$ such that producing $z_i$ amount of each resource $i$ for all $i = 1, \dots, n$ incurs a resource cost of $g(\vec{z})$.
The goal is to maximize the total units of served requests minus the resource costs.
The design of online algorithms and the corresponding competitive analysis will depend on certain properties of the cost function $g$.
It turns out that online convex packing is the Fenchel dual problem of the online convex covering, when $g$ is the convex conjugate of $f$. See Section \ref{sec:prelim} for a brief discussion of convex conjugate and Fenchel duality.

\begin{theorem}
\label{thm:main_packing}
For any polynomial cost function $g$ that is convex, and has non-negative coefficients and maximum degree $\tau \ge 1$, our online convex packing algorithm is $O(\tau)$-competitive.
\end{theorem}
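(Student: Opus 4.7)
The plan is to prove Theorem~\ref{thm:main_packing} by an online primal-dual analysis exploiting the Fenchel duality between online convex packing (the primal) and online convex covering (the dual) summarized in Section~\ref{sec:prelim}. The algorithm I analyze is a continuous-time greedy with a \emph{conservative} stopping rule: when $y_j$ arrives, initialize $y_j = 0$ and raise $y_j$ at unit speed, while maintaining $\vec z = A^T \vec y$ and the candidate dual $\vec x = \nabla g(\vec z)$; stop raising $y_j$ the instant $(A \vec x)_j = 1/\tau$, and keep $y_j = 0$ if that condition is already met on arrival. The threshold $1/\tau$ rather than the naive $1$ is the key design choice: it reserves capacity for later, potentially more efficient requests, and one can check that a naive greedy stopping at $(A\vec x)_j = 1$ is in fact $\Omega(m)$-competitive already on instances with $g(\vec z)=\|\vec z\|_2^2/2$.

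First I would verify that the rescaled vector $\vec x' := \tau \vec x$ is feasible for the offline covering program $\min f(\vec u)$ subject to $A \vec u \ge \vec 1$: at the moment $j$ is processed, $(A \vec x')_j = 1$ by the stopping rule; since $g$ is a convex polynomial with nonnegative coefficients, $\nabla g$ is componentwise monotone in $\vec z$ and $\vec z$ only grows over time, so $(A \vec x')_j$ stays $\ge 1$ forever after, and for $j$ with $y_j = 0$ the inequality held already on arrival. Weak Fenchel duality therefore gives $\OPT \le f(\vec x')$ at termination.

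The technical core is to bound $f(\vec x')$ by $O(\tau)\cdot\text{ALG}$ via a continuous-time amortization using the potential $\Phi := f(\tau \nabla g(A^T \vec y))$. During the continuous increase of $y_j$, $dP/dy_j = 1 - (A\vec x)_j \ge (\tau-1)/\tau$ by the stopping rule. By the Fenchel envelope theorem, $d\Phi/dy_j = \tau \vec w \cdot \nabla^2 g(\vec z) \cdot \vec a_j$, where $\vec w$ is the point satisfying $\nabla g(\vec w) = \tau \nabla g(\vec z)$. For a homogeneous polynomial of degree $\tau$, Euler's identity applied to $\nabla g$ (which is then homogeneous of degree $\tau - 1$) gives $\nabla^2 g(\vec z)\vec z = (\tau-1)\nabla g(\vec z)$ and $\vec w = \tau^{1/(\tau-1)}\vec z$; substituting yields $d\Phi/dy_j = \tau^{\tau/(\tau-1)}(\tau-1)\,(A \vec x)_j$, and combined with the primal bound, $d\Phi/dP \le \tau^{\tau/(\tau-1)} \le 2\tau$ for all $\tau \ge 2$. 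Integrating over the execution and noting $\Phi(\vec 0) = f(\vec 0) = 0$ (for homogeneous $g$), one obtains $\Phi \le O(\tau)\cdot\text{ALG}$, and hence $\OPT \le O(\tau)\cdot\text{ALG}$.

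The main obstacle I foresee is the non-homogeneous case. The monomial-wise Euler inequality $\nabla^2 g(\vec z)\vec z \le (\tau-1)\nabla g(\vec z)$ (componentwise) still holds, but the envelope point $\vec w$ is no longer a scalar multiple of $\vec z$, and if $g$ has a linear part the initial potential $\Phi(\vec 0)$ is no longer zero; the ``effective weight'' of $y_j$ in the primal becomes $1 - (A\nabla g(\vec 0))_j$, so the algorithm should be restated in terms of $g - \langle \vec z,\nabla g(\vec 0)\rangle$ (the polynomial minus its linear part). With these adjustments, I would bound $\vec w \cdot \nabla^2 g(\vec z) \vec a_j$ by summing contributions monomial by monomial; each monomial of degree $d_k \le \tau$ contributes at most $O(\tau)$ to the $d\Phi/dP$ ratio, so the overall competitive factor remains $O(\tau)$.
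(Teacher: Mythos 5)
Your homogeneous analysis is essentially sound, and in fact for a homogeneous degree-$\tau$ polynomial your algorithm coincides with the paper's Algorithm~\ref{alg:packing}: stopping when $\langle \vec{a}_k, \nabla g(\vec{z})\rangle = 1/\tau$ is the same as stopping when $\langle \vec{a}_k, \nabla g(\rho\vec{z})\rangle = 1$ with $\rho = \tau^{1/(\tau-1)}$, and your potential $\Phi = f(\tau\nabla g(\vec{z}))$ integrated along the trajectory is a local version of the paper's telescoping bound (Lemma~\ref{lemma:yinc}) combined with Lemma~\ref{lemma:convexf}(c). (You should still record why each inner loop terminates --- it needs the bounded-optimum assumption, exactly as in Lemma~\ref{lemma:packing_feasible} --- but that is minor.)

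The genuine gap is the non-homogeneous case, which is precisely what the theorem asserts. Your plan rests on bounding $\langle \vec{w}, \nabla^2 g(\vec{z})\,\vec{a}_j\rangle$ where $\vec{w}$ is an envelope point with $\nabla g(\vec{w}) = \tau\nabla g(\vec{z})$, and the monomial-wise Euler inequality only controls $\langle \vec{z}, \nabla^2 g(\vec{z})\,\vec{a}_j\rangle$. To transfer it you need something like a componentwise comparison $\vec{w} \le c\,\vec{z}$ with $c = O(1)$, which is exactly what fails for mixed degrees: such a $\vec{w}$ need not exist in the interior (the maximizer may sit on the boundary with only KKT conditions), it is not a scalar multiple of $\vec{z}$, and when the polynomial contains degree-$2$ terms the natural inflation needed to realize $\tau\nabla g(\vec{z})$ as a gradient is of order $\tau\vec{z}$, not $\tau^{1/(\tau-1)}\vec{z}$. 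Even granting the optimistic bound $\vec{w}\le\tau\vec{z}$, your own computation then gives $d\Phi/dy_j \le \tau\cdot\tau\,(\tau-1)\langle\nabla g(\vec{z}),\vec{a}_j\rangle \le \tau(\tau-1)$ against $dP/dy_j \ge (\tau-1)/\tau$, i.e.\ only $O(\tau^2)$; the claim that ``each monomial contributes at most $O(\tau)$'' is asserted but never derived, and I see no monomial-by-monomial argument that recovers the lost factor. The paper avoids this entirely by decoupling the inflation from $\tau$: after stripping the linear part (your $g - \langle\vec{z},\nabla g(\vec{0})\rangle$ idea, but also rescaling each constraint by $b_k = 1 - \langle\vec{a}_k,\vec{c}\rangle$), every remaining monomial has degree at least $\lambda = 2$, so it fixes $\rho = 2$ and sets $\vec{x} = \nabla \widehat{f}^*(\rho\vec{z})$ (inflate the \emph{argument}, not the gradient). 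The dual value is then bounded with no Hessian or envelope-point comparison at all, via $f(\nabla f^*(\rho\vec{z})) \le (\tau-1)f^*(\rho\vec{z})$ (Lemma~\ref{lemma:convexf}(c)), while the primal gain captures $\tfrac1\rho f^*(\rho\vec{z}) - f^*(\vec{z}) \ge (\tfrac1\rho - \tfrac1{\rho^\lambda})f^*(\rho\vec{z})$ using only the minimum-degree-$2$ property, yielding $(\tau-1)\rho^{\lambda}/(\rho^{\lambda-1}-1) = O(\tau)$. To salvage your route you would either have to prove a genuine $O(1)$ comparison between $\vec{w}$ and $\vec{z}$ for general convex polynomials with non-negative coefficients (doubtful as stated), or switch, as the paper does, to a $\tau$-independent inflation tied to the minimum degree after removing linear terms.
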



The above positive result also extends to a more complicated type of packing problems known as combinatorial auction with production cost.

\begin{itemize}[leftmargin=.5cm, topsep=0cm]
%
%
%
%
%
%
\item {\bf Online combinatorial auction with production cost:}
Online combinatorial auction is a natural extension of online packing. 
In an online combinatorial auction, each online request $j$ is a buyer associated with a value function $v_j$, where $v_j(\vec{y}_j)$ is buyer $j$'s value if he gets $y_{ji}$ units of resource $i$, $i \in [n]$, in the auction.
In other words, there are multiple ways of serving each request $j$; each way generates a different value and consumes a different amount of each resource.
The goal is to maximize the total value of the final allocation minus resource costs.
This problem was first studied in the fixed capacity setting \cite{BartalGN03, BuchbinderG13} and later in a setting with separable resource costs \cite{BlumGMS11, HuangK15}, i.e., the cost can be written as $g(\vec{z}) = \sum_{i = 1}^n g_i(z_i)$.

\smallskip

It is straightforward to extend our our results from online convex packing to online combinatorial auction with production cost.
We obtain an $O(\tau)$-competitive online algorithm for any convex polynomial cost functions with non-negative coefficients and maximum degree $\tau$.
This ratio is tight due to a lower bound result by Huang and Kim~\cite{HuangK15}.
\end{itemize}

\subsection{Our Techniques}

Our technique is based on the online primal dual framework for solving linear programs (see, e.g., \cite{BuchbinderN09b} for a survey), and its recent extensions to convex programs (e.g., \cite{DevanurH14, HuangK15, DevanurJ12, Huang14, GuptaKP13}).
The convex covering and packing problems are modeled by the following convex programs,
where $A$ is an $m \times n$ matrix with non-negative entries and $A^T$ is its transpose:
\begin{align*}
\text{\rm (Covering)} \quad \min_{\vec{x} \in \R_{+}^n} & ~~ f(\vec{x}) ~~ \text{subject to} ~~ A \vec{x} \ge \vec{1} \\
\text{\rm (Packing)} \quad \max_{\vec{y} \in \R_{+}^m} & \textstyle ~~ \sum_{j = 1}^m y_j - g(A^T \vec{y})
\end{align*}
As remarked in the previous subsection, these two problems are Fenchel dual programs of each other when $g$ is the convex conjugate of $f$, i.e., $g = f^*$ where $f^*(\vec{z}) \defeq \max_{\vec{x} \ge 0} \big\{ \langle \vec{x}, \vec{z} \rangle - f(\vec{z}) \big\}$.
In particular, weak duality holds: for any feasible covering assignment $\vec{x} \in \R_+^n$ such that $A \vec{x} \ge {1}$, and for any feasible packing assignment $\vec{y} \in \R_{+}^m$, we have $f(\vec{x}) \ge \sum_{j = 1}^m y_j - f^*(A^T \vec{y})$.

Our online primal dual algorithms simultaneously maintain feasible covering and packing assignments, namely, $\vec{x}$ and $\vec{y}$, such that (1) $\vec{x}$ and $\vec{y}$ are non-decreasing over time, and (2) at all times the covering objective $f(\vec{x})$ is at most $\alpha$ times the packing objective $\sum_{j = 1}^m y_j - g(A^T \vec{y})$ for some parameter $\alpha \ge 1$.
The first condition ensures that our algorithm is a feasible online algorithm for both the covering and packing problems.
The second condition, together with weak duality, shows that our online algorithm is $\alpha$-competitive for both the online convex covering problem with cost function $f$, and the online convex packing problem with cost function $f^*$.

While our framework gives online algorithms for both covering and packing simultaneously, we do design different algorithms for the two problems separately. 
This is because when the covering problem is of primary interest, we will exploit certain natural assumptions on the covering cost function $f$, such as monotone marginal cost (i.e., the gradient $\nabla f$ is monotone), to get good competitive ratios.
Similarly, when the packing problem if of primary interest, we will exploit assumptions such as $\nabla g = \nabla f^*$ being monotone to design and analyze our online algorithms.
Note that such natural assumptions on $f$ does not translate to natural assumptions on its conjugate $g = f^*$ in general.
In particular, $\nabla f$ being monotone does not imply that $\nabla f^*$ is monotone, and vice versa.\footnote{For instance, the function $f(x_1,x_2):= (x_1 + x_2)^2$ and its gradient $\nabla f$ are monotone on $\R^2_+$,
but its convex conjugate $f^*(z_1, z_2) = \max\{\frac{z_1^2}{4}, \frac{z_2^2}{4}\}$ is not differentiable on the line $z_1 = z_2$ and $\nabla f^*$ is not monotone. For example, when $z_2$ is increased from $(1,0)$ to $(1,2)$, $\nabla_1 f^*$ decreases from
$\nabla_1 f^*(1,0) = \frac{1}{2}$ to $\nabla_1 f^*(1,2) = 0$.}

\paragraph{Covering:}

Our online convex covering algorithm is very natural. 
On the arrival of a covering request $j$, our algorithm continuously increases $\vec{x}$ until request $j$ is satisfied, and for each resource~$i$, it increases $x_i$ exponentially at a rate proportional to the bang-per-buck ratio, i.e., $a_{ji}$ (fraction of request $j$ we can cover per unit of resource $i$) divided by $\nabla_i f(\vec{x})$ (the partial derivative of $f$ in the $i$-th coordinate).
In other words, the change of $x_i$ is proportional to $\tfrac{a_{ji}}{\nabla_i f(\vec{x})} x_i$.
Our algorithm generalizes the multiplicative update process for the original online covering problem with linear objectives \cite{BuchbinderN09a} and several other online optimization problems (e.g., \cite{BhawalkarGP14}) to handle general convex objectives.
See Section \ref{sec:covering} for details of our algorithm.

A novel technical ingredient of our approach is a new way to obtain competitive ratios that are independent of the entries of the constraint matrix $A$.
Consider the original online covering problem with linear objective as an example.
It is known that if we use the natural linear program relaxation and its dual, and we update both the covering and the packing assignments in a non-decreasing manner, then the competitive ratio is at least $\Omega(\log n + \log \tfrac{a_{\max}}{a_{\min}})$ (e.g., \cite{BuchbinderN09a}), where $a_{\max}$ and $a_{\min}$ are the maximum and minimum non-zero entries of the constraint matrix $A$.
Previous work such as Buchbinder and Naor~\cite{BuchbinderN09a} and Gupta and Nagarajan~\cite{GuptaN14} remove the dependency on the entries of $A$ by allowing the (dual) packing assignment to decrease.

We use a different approach.
We first observe that the extra dependency on the entries of $A$ is due to the linear part of the cost function---for any ``sufficiently convex'' cost function $f$, our online convex covering algorithm gives a competitive ratio that is independent of the entries of $A$.
Then, to obtain competitive ratios for any polynomial cost functions $f$, we first approximate the cost function $f$ with another surrogate polynomial cost function $\hat{f}$ that is ``sufficiently convex'', and then run our online convex covering algorithm with respect to $\hat{f}$.
See Section~\ref{sec:covering} for details.

We believe that the above technique, while simple, will find further applications in other online optimization problems.
In particular, it suggests an interesting and unexpected way of using our online convex covering and packing framework, as well as work of others, on extending online primal dual to convex programs. Even for problem with natural linear program relaxations, it may be helpful to remodel the problem using convex programs because convexity may improve the competitive ratios of online primal dual algorithms.

\paragraph{Packing:}

For the online convex packing problem, the (dual) covering assignment $\vec{x}$ can be interpreted as unit prices of the resources, where $x_i$ is the unit price of resource $i$.
In the offline optimal solution, the Karush-Kuhn-Tucker conditions (e.g., \cite{BoydV09}) indicate that the unit prices shall be equal to the marginal cost, i.e., $\vec{x} = \nabla g(A^T \vec{y})$.
In the online setting, however, the online algorithm does not know the final demand of resources $A^T \vec{y}$, and thus, does not know the final marginal cost $\nabla g(A^T \vec{y})$.
Instead, the online algorithm may predict the final demand given the current demand.
Our online algorithm uses a particularly simple prediction rule:
for some parameter $\rho > 1$, it predicts the final demand to be $\rho$ times the current demand and set $\vec{x} = \nabla g(\rho A^T \vec{y})$.
This is a natural generalization of the algorithms in \cite{BlumGMS11} and \cite{HuangK15}, which considers only separable cost functions that can be written as $g(\vec{z}) = g_1(z_1) + \dots + g_m(z_m)$.
It turns out this simple approach is able to obtain asymptotically tight competitive ratios for polynomial cost functions.
See Section \ref{sec:packing} for details.

\subsection{Related Work}

This paper employs the online primal dual technique which, informally speaking, utilizes the structure of the linear program relaxation of an optimization problem and its dual program to design and analyze online algorithms.
The technique has led to optimal or nearly optimal online algorithms for a large family of problems, e.g., online covering and packing (with linear objectives) \cite{BuchbinderN09a}, online caching \cite{BansalBN12}, online bipartite matching \cite{DevanurJK13}, etc.
Readers are referred to \cite{BuchbinderN09b} for a survey of results using the online primal dual technique.

In particular, our work adds into the recent online of work on extending the online primal dual technique to problems related to convex programs, e.g., online scheduling with speed-scaling \cite{DevanurH14, GuptaN14}, online matching with concave returns \cite{DevanurJ12}, online combinatorial auction with convex production cost \cite{HuangK15}, etc.
Before our work, the study of the online primal dual technique with convex programs has been on a problem by problem basis. 
Our online convex covering and packing framework can be viewed as a concrete step towards building a unified theory for using the online primal dual technique with convex programs.

Next, we discuss a few related work that are most related to our results.
Buchbinder and Naor \cite{BuchbinderN09a} studied the online covering and packing problem with linear objectives, which are special cases of our more general problems.
They proposed an $O(\log n)$-competitive online algorithm for online covering and an $O(\log n + \log \tfrac{a_{\max}}{a_{\min}})$-competitive online algorithm for online packing.

Azar et al.~\cite{AzarBFP13} proposed the online mixed packing and covering problem as an attempt to unify several online optimization problems with both covering and packing constraints, and proposed the first algorithm for this general problem with competitive ratio $O \big( \log d \log (l \beta \gamma) \big)$ where $\beta$ and $\gamma$ are parameters that depend on the covering and packing constraint matrices of the problem.
We obtain an improved competitive ratio of $O(\log d \log l)$ without the dependency on the parameters of constraint matrices.
Due to a lower bound result of Azar et al.~\cite{AzarCP14}, our competitive ratio is tight.

Blum et al.~\cite{BlumGMS11} proposed the study of online combinatorial auction with production cost, where the seller may produce multiple copies of each item subject to a convex cost function. 
In particular, for polynomial cost functions of degree $\tau$, Blum et al.~\cite{BlumGMS11} gave an $O(\tau)$-competitive online algorithm.
Later, Huang and Kim~\cite{HuangK15} improved the competitive ratios of Blum et al.~\cite{BlumGMS11} by constant factors and showed that the optimal competitive ratio can be characterized by a differential equation.
Note that the cost functions in both Huang and Kim~\cite{HuangK15} and Blum et al.~\cite{BlumGMS11} are defined on a per item basis, which corresponds to the special case of separable cost function that can be written as $g(\vec{z}) = g_1(z_1) + \dots + g_m(z_m)$ in our model.
In contrast, we give an $O(\tau)$-competitive online algorithm that can handle arbitrary degree-$\tau$ polynomial cost functions that may have correlation across different items.

\subsection{Concurrent Work}
\label{sec:concurrent}

In concurrent works, Azar et al.~\cite{AzarCP14} and Buchbinder et al.~\cite{BuchbinderCGNN14} have also independently studied the same problems.
Below we sketch the main differences between our work and those of the other two groups.

Azar et al.~\cite{AzarCP14} have considered the online convex covering problem, but not the packing problem explicitly. 
Their main result is essentially the same as our Theorem \ref{thm:main_cover_general} for general cost functions.
We stress that a major part of our effort in this paper is spent on improving the competitive ratio for general cost functions by focusing on the important special case of polynomial cost functions.
We propose a simple online algorithm with nearly optimal competitive ratio for this case (Theorem~\ref{thm:main_cover_poly}).

Buchbinder et al.~\cite{BuchbinderCGNN14} have studied both online convex covering and packing problems.
Their competitive ratio for online convex covering has a rather involved dependency on the cost function $f$; it does not seem possible to directly compare our results with theirs.
While both the algorithms of Buchbinder et al.~\cite{BuchbinderCGNN14} and ours are based on the online primal dual framework, there are many subtle differences.
For example, the main algorithm of Buchbinder et al.~\cite{BuchbinderCGNN14} may decrease dual variables while our algorithm always maintain dual variables in a non-decreasing manner.

It is instructive to compare the results for the online mixed packing and covering problem, an application of the online convex covering framework which all three groups consider.
Azar et al.~\cite{AzarCP14} obtained an $O(\log l \log d \beta \gamma)$-competitive algorithm, where $\beta$ and $\gamma$ are the maximum-to-minimum ratio of the non-zero entries of $\vec{a}_j$'s and $\vec{c}_k$'s respectively, and both $\vec{a}_j$'s (covering constraint vectors) and $\vec{c}_k$'s (packing constraint vectors) must be $d$-sparse in the sense that each vector may have at most $d$ non-zero entries.
Buchbinder et al.~\cite{BuchbinderCGNN14} obtained an $O(\log l \log d)$-competitive algorithm, removing the logarithmic dependency on $\beta$ and $\gamma$, and \emph{only the covering constraint vectors} need to be $d$-sparse.
In this paper, we propose an $O(\log l \log d)$-competitive algorithm, also removing the logarithmic dependency on $\beta$ and $\gamma$, and \emph{only the packing constraint vectors} need to be $d$-sparse.

For online convex packing, Buchbinder et al.~\cite{BuchbinderCGNN14} exploited the fact that it is the dual problem of online convex covering and directly use one of their online convex covering algorithms. 
As a result, their algorithms and competitive ratios are parameterized by the convex conjugate of the cost function. 
For some specific cost
functions, they managed to reformulate the problem and get around the
problem of having non-monotone gradient of the convex conjugate.
Readers are referred to their paper for details.

\ignore{

For instance, for $p > 1$, they considered cost functions
of the special form $g(\vec{z}) := \frac{1}{p} \sum_{k \in [K]} \lambda_k(\vec{z})^p$,
where for each $k \in [K]$, $\lambda_k(\vec{z}) = \langle \vec{c}_k, \vec{z} \rangle$
for some $\vec{c}_k \in \R^n_+$.  
They considered this function in
the context of online combinatorial auction, but for ease of discussion,
we paraphrase their approach in terms of the online packing problem.

Observe that for these functions, such as $g(z_1, z_2) = (z_1 + z_2)^2$,
their convex conjugates are in general not differentiable and their
gradients are not monotone.
However, it is possible to rewrite $g(\vec{z}) = f^*(C \vec{z})$,
where $f^*: \R^K_+ \rightarrow \R$ is defined as
$f^*(\vec{\lambda}) = \frac{1}{p} \sum_{k \in [K]} \lambda_k^p$,
and $C$ is the $K \times n$ matrix with the $\vec{c}_k$'s as the rows.

The advantage of this transformation is that the conjugate of $f^*$ is
$f(\vec{w}) := \frac{1}{q} \sum_{k \in [K]} w_k^q$ 
(where $\frac{1}{p} + \frac{1}{q} = 1$), which is differentiable
and has monotone gradient.
Hence, it is possible to consider the dual equivalently
as a covering problem with objective function $f$,
while absorbing the matrix $C$ into the covering constraints.
Hence, for this special type of cost functions, they can 
apply their online covering algorithm on the transformed covering instance.
}

\ignore{
However, instead of dealing with the convex conjugate
of $g$ directly, they were able to transform the dual into an equivalent covering problem with linear covering constraints in $\vec{w} \in \R^K_+$,
and objective function $f(\vec{w}) := \frac{1}{q} \sum_{k \in [K]} w_k^q$,
where $\frac{1}{p} + \frac{1}{q} = 1$.
As a result, they could use their online covering algorithm to obtain
competitive ratio $O(q \log K \beta)^q$ with some additive error,
where $\beta$ depends on the ratios between coefficients in the problem.
In contrast, using our online covering algorithm 
results in competitive ratio $O(\log K)^q$;
however, because this function $f$ has a special form,
for $q \geq 2$, our framework can actually attain $O(1)$ ratio using
Lemma~\ref{lemma:sharp_convex}.
}
%

In contrast, we design simple online algorithms that are specialized for the online convex packing problem and the online combinatorial auction problem, and obtain asymptotically optimal competitive ratio for general convex polynomial cost functions (Theorem~\ref{thm:main_packing}).


\section{Preliminaries}
\label{sec:prelim}

For a positive integer $n$, we denote $[n] := \{1,2, \dots, n\}$.
We use $\vec{x}$ to denote a column vector.
For two vectors $\vec{a}$ and $\vec{b}$ of the same dimension, we write $\vec{a}\geq\vec{b}$ if each coordinate of $\vec{a}$ is at least
the corresponding coordinate of $\vec{b}$. We use
$\langle\vec{a},\vec{b}\rangle$ to denote the dot product of $\vec{a}$ and $\vec{b}$. 
Let $\vec{0}$ and $\vec{1}$ denote the all zero's and all one's vectors, respectively.  
A function $g:\R_{+}^{n} \rightarrow \R_{+}^m$ is monotone, if $\vec{a} \leq \vec{b}$ implies that $g(\vec{a}) \leq g(\vec{b})$.  
For $i \in [n]$, we use $\vec{e}_i$ to denote the unit vector whose $i$th coordinate is 1.

\subsection*{Conjugate}

Given a function $f:\R_{+}^{n} \rightarrow \R_{+}$, its conjugate $f^*:\R_{+}^{n} \rightarrow \R_{+}$ is defined as:
\[
f^{*}(\vec{z}) = \max_{\vec{x} \in \R_+^n} \big\{ \langle\vec{x},\vec{z}\rangle - f(\vec{x}) \big\}
\]
For example, if $f(x) = \tfrac{1}{\tau} x^\tau$ is a degree-$\tau$ polynomial ($\tau>1$), then $f^*(z) = (1 - \tfrac{1}{\tau}) z^{\tfrac{\tau}{\tau - 1}}$ is a degree-$\tfrac{\tau}{\tau - 1}$ polynomial.


Note that $\langle \vec{x}, \vec{z}\rangle - f(\vec{x})$ can be interpreted as the negative of the $y$-intercept of a hyperplane that passes through point $x$ and has gradient $z$.
Further, the negative of the $y$-intercept is maximized when the hyperplane is a tangent hyperplane of function $f$.
Hence, $f^*(z)$ can be interpreted as the $y$-intercept of the tangent hyperplane of $f$ that has gradient $z$.


For the rest of this paper, we will focus on functions $f$ that are non-negative, monotone and differentiable,  and $f(\vec{0}) = 0$. 
In this case, the conjugate satisfies the following properties:
%
\begin{compactitem}
\item The conjugate $f^*$ is non-negative, monotone, convex, and $f^*(\vec{0}) = 0$.
\item If $f$ is lower semi-continuous, $f^{**} = f$.
\end{compactitem}
Readers are referred to the textbook of Boyd and Vandenberghe~\cite[Chapter 3.3]{BoydV09} for detailed discussions on conjugate.

\subsection*{Online Convex Covering and Packing Problems}

\paragraph{Convex Covering:}

We will first introduce the offline convex covering problem.
Let there be $n$ resources.
Producing $x_i$ units of resource $i$ for each $i \in [n]$ incurs a production cost of $f(\vec{x})$, where $f:\R_{+}^{n} \rightarrow \R_{+}$ is convex, non-negative, monotone, and differentiable, and $f(\vec{0}) = 0$.
Let there be $m$ covering requests: $\sum_{i \in [n]} a_{ji} x_i \ge 1$ for $j \in [m]$, where $a_{ji}$'s are non-negative.
The objective is to minimize the total production cost $f(\vec{x})$ while satisfying all covering requests.

Let $\vec{a}_j$ denote the non-negative vector $(a_{j1}, a_{j2}, \dots, a_{jn})$ and $A = (a_{ji})$ denote the $m \times n$ matrix whose rows are $\vec{a}_j$'s.
The offline convex covering problem can be formulated as a convex program
\begin{equation}
\label{eq:covering}
\min_{\vec{x} \in \R_{+}^n} ~~ C(\vec{x}) \defeq f(\vec{x}) ~~ \text{subject to} ~~ A \vec{x} \ge \vec{1}
\end{equation}

In the online convex covering problem, the covering requests arrive one by one.
In round $k \in [m]$, covering request $k$ arrives
with the vector $\vec{a}_k \in \R^n_+$ (we use $j$ as a generic index of covering constraints and $k$ as the index of the current request), and the \emph{covering player} must decide immediately how to increase some of the $x_i$'s to satisfy the request $\sum_{i \in [n]} a_{ki} x_i \ge 1$ without knowing future requests. 
Note that the algorithm cannot decrease the values of $x_i$'s throughout the process.
We allow the covering requests to be chosen adversarially depending on the past response of the algorithm.
The goal is to approximately minimize the covering objective $C(\vec{x})$ while satisfying all covering requests. 

\paragraph{Convex Packing:}

We start with the offline problem.
Let there be $n$~resources.
Producing $z_i$~units of resource $i$ for each $i \in [n]$ incurs a production cost of $g(\vec{z})$, where $g:\R_{+}^{n} \rightarrow \R_{+}$ is convex, non-negative, monotone, and differentiable, and $g(\vec{0}) = 0$.
Let there be $m$ packing requests, where each request $j \in [m]$ is specified by a non-negative vector $\vec{a}_j = (a_{j1}, a_{j2}, \dots, a_{jn})$ such that serving one unit of packing request $j$ consumes $a_{ji}$ units of resource $i$.
The objective is to maximize the total number of units of packing requests served minus the total production cost.

Recall that $A = (a_{ji})$ denotes the $m \times n$ matrix whose rows are $\vec{a}_j$'s.
Let $y_j$ denote the number of units of packing request $j$ that the algorithm decides to serve.
Then, the total amount of resource~$i$ consumed is $z_i = \sum_{j \in [m]} a_{ji} y_j$.
The offline convex packing problem can be formulated as the following convex program.
\begin{equation}
\label{eq:packingorig}
\max_{\vec{y} \in \R_{+}^m} ~~ P(\vec{y}) \textstyle \defeq \sum_{j \in [m]} y_j - g \big( A^T \vec{y} \big)
\end{equation}

In the online problem, the packing requests arrive one by one.
In round $k \in [m]$, packing request $k$ arrives with the vector $\vec{a}_k$, and the \emph{packing player} must irrevocably pick a value $y_k \geq 0$.
We can also view that there is some vector $\vec{y}$ that is initially set to $\vec{0}$, and in round $k$, the online algorithm can only increase the $k$-th coordinate of $\vec{y}$.  
The goal is to approximately maximize the packing objective $P(\vec{y})$.

\paragraph{Convex Covering/Packing Duality:}

If the cost function of the convex packing problem is the conjugate of the cost function of the convex covering problem, namely, $g(\vec{z}) = f^*(\vec{z})$, the convex packing program \eqref{eq:packingorig} is the Fenchel dual program of the convex covering program \eqref{eq:covering}.
The reader is referred to~\cite{BoydV09} for details on Fenchel duality.

When we have explicit assumptions on $f$, we emphasize the
performance of the covering player, and let the packing player play
an auxiliary role.  On the other hand, when we have explicit assumptions on the conjugate $f^*$, the roles are reversed, 
and we put emphasize on the performance of the packing player.


\begin{lemma}[Weak Duality]
For any $\vec{x} \in \R_{+}^n$ such that $A \vec{x} \ge 1$ and any $\vec{y} \in \R_{+}^m$, we have that 
\[
C(\vec{x}) \ge P(\vec{y}), 
\]
which holds even if $f$ is not convex.
\end{lemma}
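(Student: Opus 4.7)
The plan is to use the defining inequality of the conjugate together with the feasibility of $\vec{x}$. By the very definition of $f^*$ as a pointwise maximum,
\[
f^*(\vec{z}) \;=\; \max_{\vec{x}' \in \R_+^n}\bigl\{\langle \vec{x}', \vec{z}\rangle - f(\vec{x}')\bigr\} \;\geq\; \langle \vec{x}', \vec{z}\rangle - f(\vec{x}')
\]
for every $\vec{x}', \vec{z} \in \R_+^n$. Note that this Fenchel-Young-style inequality only uses the definition of $f^*$ as a supremum; it does \emph{not} require $f$ to be convex, which is exactly what the lemma's parenthetical remark alludes to.

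The first step is to instantiate this inequality at $\vec{z} = A^T \vec{y}$ and at the given feasible covering assignment $\vec{x}$, which yields
\[
f^*(A^T \vec{y}) \;\geq\; \langle \vec{x}, A^T \vec{y}\rangle - f(\vec{x}),
\]
or equivalently $f(\vec{x}) \geq \langle \vec{x}, A^T \vec{y}\rangle - f^*(A^T \vec{y})$.

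The second step is to use the identity $\langle \vec{x}, A^T \vec{y}\rangle = \langle A\vec{x}, \vec{y}\rangle$ and invoke the feasibility assumption $A\vec{x} \geq \vec{1}$ together with $\vec{y} \geq \vec{0}$. Coordinate-wise these imply $\langle A\vec{x}, \vec{y}\rangle \geq \langle \vec{1}, \vec{y}\rangle = \sum_{j \in [m]} y_j$. Chaining with the previous inequality gives
\[
C(\vec{x}) \;=\; f(\vec{x}) \;\geq\; \sum_{j \in [m]} y_j - f^*(A^T \vec{y}) \;=\; P(\vec{y}),
\]
which is the claim. There is no real obstacle here: the argument is essentially a single application of Fenchel-Young followed by a non-negativity manipulation of the bilinear form $\langle A\vec{x}, \vec{y}\rangle$. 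The only point worth flagging in the write-up is that convexity of $f$ plays no role, because the supremum definition of $f^*$ automatically makes $f^*$ convex and makes the bound $f^*(\vec{z}) + f(\vec{x}') \geq \langle \vec{x}', \vec{z}\rangle$ a tautology regardless of $f$'s curvature.
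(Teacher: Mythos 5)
Your proof is correct: the Fenchel--Young inequality $f^*(A^T\vec{y}) \ge \langle \vec{x}, A^T\vec{y}\rangle - f(\vec{x})$ follows directly from the supremum definition of the conjugate (no convexity needed), and combining it with $\langle A\vec{x},\vec{y}\rangle \ge \langle \vec{1},\vec{y}\rangle$ from feasibility and non-negativity gives exactly $C(\vec{x}) \ge P(\vec{y})$. The paper states this lemma without a written proof, and your argument is precisely the standard one it implicitly relies on, including the observation that convexity of $f$ plays no role.
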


In the rest of this paper, we will use $f^*$ to denote the cost function in the convex packing problem, and the corresponding convex program becomes
\begin{equation}
\label{eq:packing}
\max_{\vec{y} \in \R_{+}^m} ~~ P(\vec{y}) \textstyle \defeq \sum_{j \in [m]} y_j - f^* \big( A^T \vec{y} \big)
\end{equation}
%






\paragraph{Competitive Ratio:}

We will compare the expectation of the objective value achieved by an online algorithm with the offline optimum.
Let $\copt$ and $\popt$ denote the offline optimum of the convex covering program and convex packing program respectively.
We use the convention that the competitive ratio is at least $1$.
Hence, for an online covering algorithm that returns $\vec{x}$, its ratio is $\frac{C(\vec{x})}{\copt}$; for an online packing algorithm that returns $\vec{y}$, its ratio is $\frac{\popt}{P(\vec{y})}$.


\subsection*{Online Primal Dual Framework}  

Readers are referred to, e.g., the survey by Buchbinder and Naor \cite{BuchbinderN09a}, for a comprehensive discussion of the online primal dual framework and its applications for problems with linear program relaxations, and, e.g., Devanur and Jain~\cite{DevanurJ12}, Devanur and Huang~\cite{DevanurH14}, etc., for its recent extensions to convex programs.
Below we briefly describe how the framework works for our problems.

\paragraph{Online Primal Dual Algorithm:}

Our online primal dual algorithms maintain both a feasible covering assignment $\vec{x}$ and a feasible packing assignment $\vec{y} \in \R_{+}^m$ online.
More precisely, we can view an online primal dual algorithm as running an online covering algorithm and an online packing algorithm simultaneously as follows:

\begin{itemize}[leftmargin=.5cm, topsep=0cm]
\item \textbf{Covering Algorithm.} 
A vector $\vec{x} \in \R^n_+$ is initially set to $\vec{0}$.  
In round $k \in [m]$, the covering player can increase some coordinates of $\vec{x}$ such that the covering constraint $\sum_{i \in [n]} a_{ki} x_i = \langle \vec{a}_k, \vec{x} \rangle \geq 1$ is satisfied.
The goal is to minimize $C(\vec{x}) := f(\vec{x})$ at the end of the process while satisfying all covering constraints.
\item \textbf{Packing Algorithm.} 
In round $k \in [m]$, the packing player irrevocably picks a value $y_k \geq 0$.
We can view that there is some vector $\vec{y}$ that is initially set to $\vec{0}$, and in round $k$, the packing player can only increase the $k$-th coordinate of $\vec{y}$.  
The goal is to maximize $P(\vec{y})$.
\end{itemize}

We note that there are online primal dual covering algorithms in literature that may decrease $y_j$'s.
For example, \cite{BuchbinderN09b} showed that decreasing $y_j$'s is crucial for obtaining competitive ratio that is independent of the parameters of the constraint matrix $A$ for the online covering with linear cost if we use the natural linear program relaxation.
In contrast, we obtain a competitive ratio independent of the parameters of the constraint matrix by considering an alternative convex program formulation the problem.
See Sections~\ref{sec:covering} and~\ref{sec:packing} for details.

\paragraph{Online Primal Dual Analysis:}

As the covering and the packing algorithms maintain their corresponding feasible solutions $\vec{x}$ and $\vec{y}$ such that after every round, we seek to preserve an invariant $C(\vec{x}) \leq \alpha \cdot P(\vec{y})$, where $\alpha$ is a parameter that can depend on the function $f$ (and its conjugate $f^*$) and $n$, but is independent on the number of requests $m$.

Because of weak duality $C(\vec{x}) \geq \copt \ge \popt \geq P(\vec{y})$,
it follows that  $C(\vec{x}) \leq \alpha \cdot P(\vec{y}) \leq \alpha \cdot \copt$,
and $\popt \leq C(\vec{x}) \leq \alpha \cdot P(\vec{y})$.
Hence, the online primal dual algorithm described above achieve a competitive ratio at most $\alpha$ for both the online covering and packing problems.

\ignore{

Denote $C(\vec{x})$ and $P(\vec{y})$ as the value of covering and packing problem. According to weak duality theorem, it always holds that
\begin{align*}
C(\vec{x}) \geq \OPT \geq P(\vec{y})
\end{align*}

If we during the online process, we can always find a pair of feasible and dual solution such that $C(\vec{x}) \leq \alpha P(\vec{y})$, then
\begin{align*}
C(\vec{x}) \leq \alpha P(\vec{y}) \leq \alpha \OPT
\end{align*}
An $\alpha$ approximation ratio is achieved for the covering problem. Similarly
\begin{align*}
\OPT \leq C(\vec{x}) \leq \alpha P(\vec{y})
\end{align*}
We can also get an $\alpha$ approximation ratio for the packing problem.

Convex online cover problem is defined as follows: Let $f:R_{+}^{n}$ $\rightarrow$ $R$ be a monotone increasing and convex function. We want to solve the following optimization problem:
\begin{align*}
    \text{min } & f(\vec{x})  \\
    \text{subject to: } & A\vec{x} \geq \vec{1} \\
                        & \vec{x} \in \R_+^n
\end{align*}
$A$ is an $m\times n$ matrix with non-negative entries, and comes row by row. Each time when a new row, which means a new constraint comes, the constraint must be satisfied without decreasing any dimension of $\vec{x}$.

Let $g:R_{+}^{n}$ $\rightarrow$ $R$ also be a monotone increasing and convex function, then the convex online packing problem can be denoted as:
\begin{align*}
	\text{max } & \langle\vec{y},\vec{1}\rangle - g(A^{T}\vec{y}) \\
	\text{subject to: } & \vec{y} \in \R_+^m
\end{align*}
Each dimension of $\vec{y}$ comes online, together with a new column of $A$.

These two problems can be combined together. Define the conjugate as
\begin{align*}
f^{*}(\vec{x^*}) = \max_{\vec{x} \in \R_+^n}{\langle\vec{x},\vec{x^*}\rangle - f(\vec{x})}
\end{align*}

Then the Fenchel dual of convex online cover problem can be denoted as:
\begin{align*}
    \text{max } & \langle\vec{y},\vec{1}\rangle - f^{*}(A^{T}\vec{y}) \\
    \text{subject to: } & \vec{y} \in \R_+^m
\end{align*}

Notice that if possible, then by setting $f^x=g$, the online packing problem is equivalent to the dual of convex online cover problem. In this paper, for simplicity, we use the notation of primal and dual of online set cover to denote the two problems. Also, let $\vec{z}=A^{T}\vec{y}$.

}
\section{Convex Online Covering}

\label{sec:covering}

Recall that the covering and the packing problems are dual of each other.
In this section, we will treat the convex covering problem as primal and the convex packing problem as dual.
%
%
\begin{align}
\text{\rm (Primal)} \quad \min_{\vec{x} \in \R_{+}^n} & ~~ C(\vec{x}) \defeq f(\vec{x}) ~~ \text{subject to} ~~ A \vec{x} \ge \vec{1}\\
\text{\rm (Dual)} \quad \max_{\vec{y} \in \R_{+}^m} & \textstyle ~~ P(\vec{y}) \defeq \sum_{j \in [m]} y_j - f^* \big( A^T \vec{y} \big)
\end{align}
%
%
Recall that for notational convenience, we denote $\vec{z} := A^T \vec{y}$.
We make the following assumptions on the cost function $f$ in the covering (primal) problem throughout this section.
The first set of assumptions are regularity assumptions that we have discussed in the preliminary. 

\begin{assumption}
\label{assumption:function_property}
The function $f$ is convex, differentiable, and monotone, and $f(\vec{0})=0$. 
\end{assumption}

Beyond the above regularity assumptions, we will further assume that the marginal cost is monotone.
Recall that the function $f$ models the production cost of resources.
This assumption is saying that producing more copies of some resource $i$ does not decrease the production cost of resource $i^\prime$ (regardless of whether $i^\prime = i$ or not).  In Section~\ref{sec:coveringpoly}, we will have an even stronger monotone condition $\nabla f$ to get a better competitive ratio.

\begin{assumption}
The gradient $\nabla f$ is monotone.
\end{assumption}

Finally, we introduce a parameter $\tau \ge 1$ that measures how convex the function $f$ is.

\begin{assumption}
\label{assumption:kappa}
There is a parameter $\tau \ge 1$ such that for all $\vec{x} \in \R^n_+$, $\langle \nabla f(\vec{x}), \vec{x} \rangle \leq \tau \cdot f(\vec{x})$.
\end{assumption}

For instance, if $f$ is a multivariate polynomial with non-negative coefficients, $\tau$ can be chosen as the maximum degree of $f$.
In particular, Assumption \ref{assumption:kappa} implies the following inequalities which will be useful in our competitive analysis.  We remark that the convexity of $f$ is only explicitly used
to prove part (c) of Lemma~\ref{lemma:convexf}.

\begin{lemma}[Convexity of $f$] \label{lemma:convexf}
Suppose that for some $\tau > 1$,
for all $\vec{x} \in \R^n_+$,
$\langle \nabla f(\vec{x}), \vec{x} \rangle \leq \tau \cdot f(\vec{x})$.
Then, the following statements hold.
\begin{compactitem}
\item[(a)] For $\delta \geq 1$, for $\vec{x} \in \R^n_+$,
$f(\delta \vec{x}) \leq \delta^\tau f(\vec{x})$.

\item[(b)] For $0 < \gamma \leq 1$, for $\vec{z} \in \R^n_+$,
$f^*(\gamma \vec{z}) \leq \gamma^{\frac{\tau}{\tau-1}} \cdot f^*(\vec{z})$.

\item[(c)] Suppose further that $f$ is convex.  Then, for all $\vec{x} \in \R^n_+$, 

$f^*(\nabla f(\vec{x})) = \langle \vec{x}, \nabla f(\vec{x}) \rangle - f(\vec{x})  \leq (\tau - 1) \cdot f(\vec{x})$. 
\end{compactitem}
In particular, statements (b) and (c) implies that
for $0 \leq \gamma \leq 1$, 
$f^*(\gamma \cdot \nabla f(\vec{x})) \leq \gamma^{\frac{\tau}{\tau-1}} \cdot  (\tau - 1) \cdot f(\vec{x})$.
\end{lemma}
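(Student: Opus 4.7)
The three statements are closely linked: (a) is a scaling bound for $f$ along rays, (b) is its transfer to the conjugate, and (c) is the Fenchel identity at a point whose gradient is known, combined with the hypothesis. My plan is to obtain (a) from a one-dimensional ODE argument, deduce (b) by inverting (a) and making a well-chosen substitution inside the definition of $f^*$, and derive (c) directly from the subgradient inequality for convex $f$.

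For (a), I would fix $\vec{x} \in \R^n_+$ and define $h(\delta) \defeq f(\delta \vec{x})$ for $\delta \geq 1$. Differentiating gives $h'(\delta) = \langle \nabla f(\delta \vec{x}), \vec{x}\rangle = \delta^{-1}\langle \nabla f(\delta \vec{x}), \delta \vec{x}\rangle \leq (\tau/\delta)\, h(\delta)$ by the hypothesis applied at $\delta \vec{x}$. Hence $(\ln h)'(\delta) \leq \tau/\delta$, and integrating from $1$ to $\delta$ yields $h(\delta) \leq \delta^\tau h(1) = \delta^\tau f(\vec{x})$, which is (a). (The degenerate case $f(\vec{x}) = 0$ is handled by monotonicity, as both sides then vanish on the ray.)

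For (b), I first invert (a): writing $\vec{x}' = \delta \vec{x}$ for $\delta \geq 1$ turns the inequality into $f(\gamma \vec{u}) \geq \gamma^\tau f(\vec{u})$ for $\gamma = 1/\delta \in (0,1]$. Then in the definition $f^*(\gamma \vec{z}) = \max_{\vec{x}\geq 0}\{\langle \vec{x}, \gamma \vec{z}\rangle - f(\vec{x})\}$, I substitute $\vec{x} = \gamma^{1/(\tau-1)}\vec{u}$, which lies in $\R^n_+$ since $\gamma^{1/(\tau-1)} \in (0,1]$. The linear term becomes $\gamma \cdot \gamma^{1/(\tau-1)}\langle \vec{u}, \vec{z}\rangle = \gamma^{\tau/(\tau-1)}\langle \vec{u}, \vec{z}\rangle$, and the inverted scaling inequality gives $f(\gamma^{1/(\tau-1)}\vec{u}) \geq \gamma^{\tau/(\tau-1)} f(\vec{u})$, so the bracketed expression is at most $\gamma^{\tau/(\tau-1)}(\langle \vec{u}, \vec{z}\rangle - f(\vec{u}))$. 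Maximizing over $\vec{u} \geq 0$ yields (b). The exponent $1/(\tau-1)$ is forced precisely by the requirement that the linear and $f$ terms scale by the same power of $\gamma$, and this is the only slightly non-routine calculation in the proof.

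For (c), the convexity of $f$ gives the supporting-hyperplane inequality $f(\vec{y}) \geq f(\vec{x}) + \langle \nabla f(\vec{x}), \vec{y} - \vec{x}\rangle$ for all $\vec{y} \geq 0$, which rearranges to $\langle \vec{y}, \nabla f(\vec{x})\rangle - f(\vec{y}) \leq \langle \vec{x}, \nabla f(\vec{x})\rangle - f(\vec{x})$. Taking the max on the left shows $f^*(\nabla f(\vec{x})) \leq \langle \vec{x}, \nabla f(\vec{x})\rangle - f(\vec{x})$, and the reverse inequality is automatic from the definition of $f^*$ at $\vec{y} = \vec{x}$; hence equality holds. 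Applying the hypothesis $\langle \nabla f(\vec{x}), \vec{x}\rangle \leq \tau f(\vec{x})$ then bounds this common value by $(\tau - 1)f(\vec{x})$, giving (c). Finally, the concluding inequality follows by feeding $\vec{z} = \nabla f(\vec{x})$ into (b) and then substituting the bound from (c). The only real obstacle is identifying the scaling exponent in (b); everything else is routine manipulation of the hypothesis and convex-analysis identities.
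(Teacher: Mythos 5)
Your proof is correct and follows essentially the same route as the paper: part (a) via the logarithmic-derivative/integration argument, part (b) via the same rescaling substitution with exponent $1/(\tau-1)$ inside the conjugate, and part (c) by showing the maximum defining $f^*(\nabla f(\vec{x}))$ is attained at $\vec{w}=\vec{x}$ (the paper via concavity and stationarity, you via the equivalent subgradient inequality) before invoking the hypothesis. The only quibble is your parenthetical in (a): when $f(\vec{x})=0$, monotonicity gives the wrong direction ($f(\delta\vec{x})\ge 0$ rather than $\le 0$), and one should instead appeal to a Gronwall-type argument on $h'(\delta)\le \tau h(\delta)/\delta$ --- an edge case the paper's own proof silently ignores as well.
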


\begin{proof}
For statement (a), define the function $g : [1, +\infty) \rightarrow \R$
by $g(\theta) := \ln f(\theta \vec{x})$.
Then, $g'(\theta) = \frac{\langle f(\theta \vec{x}), \vec{x}\rangle}{f(\theta \vec{x})} \leq \frac{\tau}{\theta}$,
where the last inequality follows because 
$\langle f(\theta \vec{x}), \theta \vec{x}\rangle \leq \tau f(\theta \vec{x})$.
Integrating over $\theta \in [1, \delta]$ gives $g(\delta) - g(1) \leq \tau \ln \delta$, which is equivalent to $f(\delta \vec{x}) \leq \delta^\tau f(\vec{x})$.

For statement (b), for $0 < \gamma \leq 1$, let $\delta := (\frac{1}{\gamma})^{\frac{1}{\tau-1}} \geq 1$, we have
\begin{eqnarray*}
f^{*}(\gamma \vec{z}) & = & \max\limits_{\vec{x}\in R_{+}^{n}} \{\langle\vec{x},\gamma \vec{z}\rangle - f(\vec{x}) \} = \gamma^{\frac{\tau}{\tau-1}}\max\limits_{\vec{x}\in R_{+}^{n}} \{\langle  \delta \vec{x}, \vec{z}\rangle - \delta^\tau f(\vec{x}) \} \\
& \le & \gamma^{\frac{\tau}{\tau-1}}\max\limits_{\vec{x}\in R_{+}^{n}} \{\langle  \delta \vec{x}, \vec{z}\rangle - f(\delta \vec{x}) \} = \gamma^{\frac{\tau}{\tau-1}} f^*(\vec{z}) \enspace,
\end{eqnarray*}
where the inequality follows from statement (a).

For statement (c), observe that
$f^* (\nabla f(\vec{x})) = \max_{\vec{w} \in \R^n_+} h(\vec{w})$,
where $h(\vec{w}) := \langle \vec{w}, \nabla f(\vec{x}) \rangle - f(\vec{w})$.
Hence, $\nabla h(\vec{w}) =  \nabla f(\vec{x}) - \nabla f(\vec{w})$.
Since $f$ is a convex function, $h$ is a concave function and $\nabla h(\vec{x}) = \vec{0}$;
it follows that $h$ is maximized when $\vec{w} = \vec{x}$.
Therefore, we have $f^* (\nabla f(\vec{x})) = \langle \vec{x}, \nabla f(\vec{x}) \rangle - f(\vec{x}) \leq (\tau - 1) \cdot f(\vec{x})$,
where the last inequality follows from the assumption on $f$.
\end{proof}

\subsection{Online Covering Algorithm} 

The details are described below in Algorithm~\ref{alg:covering}. 

\begin{algorithm}[H]
\label{alg:covering}
\SetAlgoLined
\textbf{Offline Input:} function $f : \R_+^n \rightarrow \R$

\textbf{Initialize:} $\vec{x} :=\vec{L}$ and $\vec{z}:=\vec{0}$ \;
\While{constraint vector $\vec{a}_k = (a_{k1},
\ldots, a_{kn})$ arrives in round $k$}{
	Set $y_{k} :=0$ \;
	\While{$\sum_{i=1}^{n}a_{ki}x_{i}<1$}{
		Continuously increase $y_k$: \\
		Simultaneously for each $i \in [n]$, increase $x_i$ at rate $\frac{d x_i}{d y_k}=\frac{\rho a_{ki}x_i}{\nabla_{i}f(\vec{x})}$, and \\
		increase $z_i$ at rate $\frac{d z_i}{d y_k} = a_{ki}$.\\
		
	}
}
\caption{Convex online covering}
\end{algorithm}

The algorithm takes the objective function $f$ of the covering problem as input,
but as we shall later see, sometimes the algorithm is run on a modified version of $f$ that might not even be convex.
The vector $\vec{x}$ in the covering problem is the primal variable, and is initialized to some appropriate $\vec{L} \in \R^n_+$.
In each round $k \in [m]$, a new constraint vector $\vec{a}_k = (a_{k1}, a_{k2}, \ldots, a_{kn})$ is presented to the algorithm.
The algorithm increases $x_i$'s continuously in the following manner until the new constraint $\sum_{i \in [n]} a_{ki} \ge 1$ is satisfied:

Imagine the dual variable $y_k$ as time. 
As $y_k$ increases, the algorithm increases variable $x_i$ with rate $\frac{d x_i}{d y_k} = \frac{\rho a_{ki} x_i}{\nabla_i f(\vec{x})}$ simultaneously for all $i \in [n]$, where $\nabla_i f(\vec{x}) = \frac{\partial f(\vec{x})}{\partial x_i}$ is the partial derivative of $f$ with respect to $x_i$, and $\rho > 0$ is some parameter to be determined.  
(We assume that the $x_i$'s are increased as stated, without worrying about how the differential equations are solved.)
In order to maintain the auxiliary vector $\vec{z}= A^T \vec{y}$, we initialize $\vec{z}$ to $\vec{0}$, and in round $k \in [m]$, for each $i \in [n]$, we also increase $z_i$ at rate $\frac{d z_i}{d y_k} = a_{ki}$ while $y_k$ increases.

To show feasibility of the algorithm, we need to prove that all covering constraints are eventually satisfied. 
(Note that dual feasibility is trivial given our rule of updating $\vec{y}$ and $\vec{z}$.)
We will initialize $\vec{x}$ appropriately to ensure that at least one of $x_i$'s with $a_{ki} > 0$ will increase as $y_k$ increases and eventually $\sum_{i \in [n]} a_{ki}$ reaches $1$, at which moment the new constraint is satisfied and $y_k$ stops increasing.
In Section~\ref{sec:mono_strict},
we shall show in Lemma~\ref{lem:mono_strict_general}
that
if $\vec{x}$ is initialized such
that each coordinate is non-zero, then
the covering constraint will eventually be satisfied.
Lemma~\ref{lemma:mono_strict} shall discuss the case
when $\vec{x}$ is initialized to $\vec{0}$ with
additional assumptions on $f$.


\subsection{Competitive Analysis}

We will show two positive results regarding our online convex covering algorithm.
The first result holds for general cost functions that are at most as convex as degree-$\tau$ polynomials in the sense that $\langle \nabla f(\vec{x}), \vec{x} \rangle \le \tau f(\vec{x})$ for all $\vec{x} \in \R_{+}^n$.
We prove that our algorithm is competitive with an additive loss of $\tau \cdot C(\vec{L})$, where $\vec{L}$ is the initial value of $\vec{x}$.
The precise statement is presented below.

\begin{theorem}
\label{thm:coveringgeneral}
Recall that $\vec{L}$ is the initial value of $\vec{x}$.
Suppose that $\vec{x} \le \vec{U}$ throughout the algorithm.\footnote{For instance, $x_i \le \frac{1}{\min_{k \in [m]} a_{ki}}$ for any $i \in [n]$. So $U_i = \frac{1}{\min_{k \in [m]} a_{ki}}$ is a feasible upper bound.}
Then, letting $\mu = \max_{i \in [n]} \frac{U_i}{L_i} \ge 1$, and $\rho = \tau^{\tau-1} \cdot (\ln \mu)^\tau$, after the last round $m$, Algorithm \ref{alg:covering} returns a primal vector $\vec{x}^{(m)}$ such that $C(\vec{x}^{(m)}) \leq (\tau \ln \mu)^\tau \cdot \copt + \tau \cdot C(\vec{L})$.
\end{theorem}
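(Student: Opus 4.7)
The plan is to run the standard online primal--dual analysis: compare the primal cost $C(\vec{x})$ to the dual objective $P(\vec{y})$ using weak duality $P(\vec{y}^{(m)}) \le \copt$, modulo an additive $\tau \cdot C(\vec{L})$ coming from the initialisation $\vec{x}=\vec{L}$. Concretely, I aim to first derive the intermediate inequality
\begin{equation*}
C(\vec{x}^{(m)}) \;\le\; C(\vec{L}) \,+\, \rho\cdot \copt \,+\, \rho\cdot f^*(\vec{z}^{(m)}),
\end{equation*}
and then show that, with $\rho = \tau^{\tau-1}(\ln\mu)^\tau$, the last term is dominated by $\tfrac{\tau-1}{\tau}\,C(\vec{x}^{(m)})$, so that rearrangement yields the claimed bound.

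For the primal-side bound, the chain rule together with the algorithm's update rule $\tfrac{dx_i}{dy_k} = \tfrac{\rho\, a_{ki}\, x_i}{\nabla_i f(\vec{x})}$ gives
\begin{equation*}
\frac{dC}{dy_k} \;=\; \sum_{i \in [n]} \nabla_i f(\vec{x}) \cdot \frac{dx_i}{dy_k} \;=\; \rho\,\langle \vec{a}_k, \vec{x}\rangle \;\le\; \rho,
\end{equation*}
since the inner while-loop is executed only while $\langle \vec{a}_k, \vec{x} \rangle < 1$. Integrating over the entire execution yields $C(\vec{x}^{(m)}) - C(\vec{L}) \le \rho \sum_{k \in [m]} y_k$; substituting $\sum_k y_k = P(\vec{y}^{(m)}) + f^*(\vec{z}^{(m)})$ (by the definition of $P$) together with $P(\vec{y}^{(m)}) \le \copt$ yields the displayed inequality above.

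The main obstacle, and the heart of the argument, is the pointwise invariant
\begin{equation*}
\vec{z}(t) \;\le\; \frac{\ln \mu}{\rho}\, \nabla f(\vec{x}(t))
\end{equation*}
that must hold throughout the algorithm. I would prove it coordinate-wise. Whenever $a_{ki} > 0$ during round $k$, both $z_i$ and $x_i$ evolve and eliminating $y_k$ from the two rate equations yields $\tfrac{dz_i}{d \ln x_i} = \tfrac{\nabla_i f(\vec{x})}{\rho}$; whenever $a_{ki} = 0$ neither changes. Because $\vec{x}$ is coordinate-wise non-decreasing and $\nabla f$ is monotone (by the monotone-gradient assumption), $\nabla_i f(\vec{x}(s))$ is non-decreasing in $s$, so integrating in $\ln x_i$ from $\ln L_i$ to $\ln x_i(t)$ gives $z_i(t) \le \tfrac{\nabla_i f(\vec{x}(t))}{\rho}\,\ln(x_i(t)/L_i) \le \tfrac{\ln \mu}{\rho}\,\nabla_i f(\vec{x}(t))$. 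This is the one place where the initialisation $\vec{x} = \vec{L} > \vec{0}$ is essential: it keeps $\ln x_i$ finite and caps the total change in $\ln x_i$ by $\ln \mu$.

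With the invariant in hand, monotonicity of $f^*$ combined with parts (b) and (c) of Lemma~\ref{lemma:convexf} gives
\begin{equation*}
f^*(\vec{z}^{(m)}) \;\le\; f^*\!\left(\tfrac{\ln\mu}{\rho}\,\nabla f(\vec{x}^{(m)})\right) \;\le\; \left(\tfrac{\ln\mu}{\rho}\right)^{\!\tau/(\tau-1)}\!(\tau-1)\,f(\vec{x}^{(m)}).
\end{equation*}
Plugging in $\rho = \tau^{\tau-1}(\ln\mu)^\tau$ simplifies the chained factor to $\rho \cdot f^*(\vec{z}^{(m)}) \le \tfrac{\tau-1}{\tau}\,C(\vec{x}^{(m)})$; substituting into the primal--dual inequality above and rearranging yields $C(\vec{x}^{(m)}) \le \tau\,C(\vec{L}) + \tau\rho\,\copt = (\tau \ln\mu)^\tau\,\copt + \tau\,C(\vec{L})$, as desired. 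The degenerate case $\tau = 1$, in which Lemma~\ref{lemma:convexf}(b) has a vacuous exponent, corresponds to $f$ being linear and can be handled directly.
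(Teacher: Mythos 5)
Your proposal is correct and takes essentially the same route as the paper: your primal-derivative bound is exactly Lemma~\ref{lemma:yandf}, your invariant $z_i(t) \le \tfrac{\ln\mu}{\rho}\,\nabla_i f(\vec{x}(t))$ is the same integration as Lemma~\ref{lemma:boundz1} (the paper only states it at the end of round $m$), and the conjugate bound via Lemma~\ref{lemma:convexf}(b),(c) combined with weak duality is identical. Your final accounting, which pushes $\rho$ onto $f^*(\vec{z}^{(m)})$ and rearranges against $C(\vec{x}^{(m)})$, is just an algebraically equivalent restatement of the paper's lower bound $P(\vec{y}) \ge \tfrac{C(\vec{x}^{(m)}) - \tau\, C(\vec{L})}{(\tau\ln\mu)^\tau}$.
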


Our second positive result focuses on polynomial cost functions, an important special case in the literature as almost all known theoretical studies on related topics consider polynomial cost functions.
The precise statement is presented below.

\begin{theorem}[Polynomial with Max Degree $\tau$]
\label{thm:coveringpoly}
Suppose $f:\R^n_+ \rightarrow \R$ is a convex multivariate polynomial
with non-negative coefficients and maximum degree $\tau$.  Then, we have
the following.

\begin{compactitem}
\item[(a)]  Suppose the cost function $f$ is a homogeneous polynomial, i.e, only monomials
with degree exactly $\tau$ can have non-zero coefficients. 
Then, there is an $O(\tau \log n)^\tau$-competitive online covering algorithm.

\item[(b)] Suppose that $f$ can be expressed as a sum of $N$ monomials.
Then, there is an $O(\tau \log N)^\tau$-competitive online covering algorithm.
\end{compactitem}
\end{theorem}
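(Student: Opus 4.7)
My plan is to derive both parts of Theorem~\ref{thm:coveringpoly} from the general bound in Theorem~\ref{thm:coveringgeneral} by a careful choice of initialization $\vec{L}$ and, for part (b), an auxiliary surrogate cost function. The unifying idea is that when the cost is ``sufficiently convex''---concretely, homogeneous of degree $\tau$ with non-negative coefficients---one can set $L_i$ proportional to $U_i$ with a ratio that depends only on $n$ or $N$, making $\mu$ polynomial in the problem size and the additive error $\tau \, C(\vec{L})$ comparable to $\copt$. This realizes the philosophy stated in the introduction: sufficient convexity removes the $A$-dependence from the competitive ratio.

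For part (a), where $f$ is homogeneous of degree $\tau$, Assumption~\ref{assumption:kappa} holds with equality and Lemma~\ref{lemma:convexf}(a) becomes the scaling identity $f(\delta \vec{x}) = \delta^\tau f(\vec{x})$. I would initialize $L_i := U_i/M$ for a common factor $M = \poly(n)$. Then $\mu = M$, so $(\tau \ln \mu)^\tau = O(\tau \log n)^\tau$, and by homogeneity $f(\vec{L}) = M^{-\tau} f(\vec{U})$. It then remains to choose $M$ so that $M^{-\tau} f(\vec{U}) = O(\copt)$ while keeping $M = \poly(n)$; this reduces to a structural inequality $f(\vec{U})/\copt \leq n^{O(\tau)}$ for homogeneous degree-$\tau$ polynomials with non-negative coefficients, which I would prove via a monomial-wise AM-GM argument combined with the feasibility condition $A \vec{x}^* \geq \vec{1}$ (which ensures that every covering constraint forces some coordinate $x_i^* \geq 1/(n \cdot a_{ki})$).

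For part (b), where $f = \sum_{j=1}^N c_j m_j$ has monomials $m_j$ of varying degrees $d_j \le \tau$, the plan is to lift to an equivalent instance whose surrogate objective is homogeneous of degree $\tau$. I would introduce an auxiliary variable $w_j \geq 0$ per monomial, augment the covering matrix with rows enforcing $w_j \geq m_j(\vec{x})$, and take $\hat{f}(\vec{w}) := \sum_j c_j w_j^{\tau/d_j}$, which is homogeneous of degree $\tau$ in the $w_j$'s. The lifted instance has $O(N)$ variables, so applying the part (a) argument yields the desired $O(\tau \log N)^\tau$ ratio, and the algorithm's solution on the lifted instance projects back to a feasible $\vec{x}$ of no greater cost for $f$ in the original problem.

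The main obstacle will be the structural inequality $f(\vec{U})/\copt \leq n^{O(\tau)}$ in part (a). Degenerate instances where $\copt = 0$ must be handled separately (there the competitive guarantee degenerates into an additive statement absorbed by the algorithm's initialization), and the non-degenerate case calls for a careful combination of AM-GM with the non-negative-coefficient structure of $f$ and the covering feasibility. A secondary subtlety is verifying that the lifted problem in part (b) continues to satisfy all assumptions required by Algorithm~\ref{alg:covering}---convexity, monotonicity, differentiability with monotone gradient, and $\hat{f}(\vec{0}) = 0$---and that feasibility in the lifted instance translates back to feasibility for the original covering constraints on $\vec{x}$.
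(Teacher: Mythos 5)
There is a genuine gap, and it sits at the heart of your part (a) plan: the structural inequality $f(\vec{U})/\copt \leq n^{O(\tau)}$ that your reduction to Theorem~\ref{thm:coveringgeneral} hinges on is simply false. Take $n=2$, $f(\vec{x}) = x_1^\tau + x_2^\tau$ (homogeneous, convex, non-negative coefficients), and a single constraint $x_1 + \epsilon x_2 \geq 1$. Then $U_1 = 1$, $U_2 = 1/\epsilon$, $\copt = 1$ (set $x_1 = 1$), but $f(\vec{U}) = 1 + \epsilon^{-\tau}$, which is unbounded as $\epsilon \to 0$; no AM--GM argument using feasibility can save this, because the optimum is free to ignore exactly those coordinates whose $U_i$ is huge. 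Consequently, with any fixed $M = \poly(n)$ the additive error $\tau f(\vec{L}) = \tau M^{-\tau} f(\vec{U})$ can dwarf $\copt$, and this is before the further issue that $\vec{U}$ depends on $\min_k a_{ki}$ over all future constraints, so the initialization $L_i = U_i/M$ is not even available to the online algorithm. This dependence on the entries of $A$ is precisely what Theorem~\ref{thm:coveringgeneral} cannot shed and what Theorem~\ref{thm:coveringpoly} is designed to remove: the paper does it not by tuning $\vec{L}$, but by replacing $f$ with a surrogate $\widehat{f}$ whose monomial degrees are inflated by a factor $(1+\lambda)$ with $\lambda = 1/\log n$ (using Lemma~\ref{lemma:lead_term} to guarantee the pure-power terms $x_i^\tau$ exist). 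Then $\nabla \widehat{f}$ is $\lambda$-monotone, so Lemma~\ref{lemma:sharp_convex} applies with $\vec{L} = \vec{0}$ (feasibility via Lemma~\ref{lemma:mono_strict}), giving a $(\tau(1+\lambda)/\lambda)^{O(\tau)}$ ratio with no additive error and no $\mu$, and Lemma~\ref{lemma:homo_approx} (H\"older) shows $f^{1+\lambda}$ and $\widehat{f}$ agree up to $O(1)^\tau$.

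Your part (b) lifting has additional problems beyond inheriting the broken part (a). The coupling constraints $w_j \geq m_j(\vec{x})$ are not covering constraints of the framework's form $\langle \vec{a}_k, \vec{x}\rangle \geq 1$ with a non-negative matrix (they are nonlinear and they upper-bound a product of primal variables by another primal variable), so Algorithm~\ref{alg:covering} and its analysis do not apply to the lifted instance. More fundamentally, the surrogate objective $\sum_j c_j w_j^{\tau/d_j}$ evaluated at the natural point $w_j = m_j(\vec{x})$ is $\sum_j c_j m_j(\vec{x})^{\tau/d_j}$, which is not within any bounded factor of $f(\vec{x}) = \sum_j c_j m_j(\vec{x})$: raising a monomial to the power $\tau/d_j$ distorts it arbitrarily when $m_j(\vec{x})$ is far from $1$, so competitiveness for the lifted objective does not translate back to competitiveness for $f$. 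The paper's part (b) avoids exactly this by using the gentle exponent $1+\lambda$ with $\lambda = 1/\log N$ on each monomial ($\widetilde{f} = \sum_{\vec{d}} c_{\vec{d}}^{1+\lambda}(\vec{x}^{\vec{d}})^{1+\lambda}$), so that $\widetilde{f}$ and $f^{1+\lambda}$ agree within $N^{\lambda} = O(1)$ by H\"older (Lemma~\ref{lemma:tf_g}), and it handles the possible non-convexity of $\widetilde{f}$ by comparing against the convex function $g = f^{1+\lambda}$ in Lemma~\ref{lemma:f_end}. So both halves of the proposal need the surrogate-function route rather than the initialization route.
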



Azar et al.~\cite{AzarCP14} proved a lower bound of $\Omega(\tau \log n)^\tau$ (for sufficiently large $n$) for the special case when $f(\vec{x})$ is the $\tau$-th power of the $\ell_\tau$-norm of a set of linear cost functions,
which is a homogeneous polynomial with degree $\tau$.
Hence, the competitive ratio of our online algorithm is nearly optimal.


\paragraph{Outline of Competitive Analysis.}  
After every round, we wish to give an upper bound of $C(\vec{x})$ in terms of $P(\vec{y}) = \sum_{j} y_j - f^*(\vec{z})$, where $\vec{z} = A^T \vec{y}$.  
Since $P(\vec{y})$ is the difference between the two terms $\sum_{j} y_j$ and $f^*(\vec{z})$, our analysis will handle them separately:

\begin{itemize}[leftmargin=.5cm, topsep=0cm]
\item In every round $k \in [m]$, we will compare the increase in $f(\vec{x})$ with the increase in $y_k$ (from 0) and show that the formal is upper bounded by the latter up to a factor of $\rho$ (Lemma \ref{lemma:yandf}).  
The analysis of this part is identical for both results.
It only requires that at each round, the algorithm can increase $\vec{x}$ according to the rules to satisfy the online covering constraint.
\item At the end of the process, we will give a coordinate-wise upper bound for $\vec{z}$.  
Since $f^*$ is monotone, this gives a lower bound for $P(\vec{y})$.  
We will exploit additional properties of $f$ that lead to a better upper bound to prove Theorem~\ref{thm:coveringpoly} for this part.
\end{itemize}

\subsubsection{Comparing Increases in $f(\vec{x})$ and $y_k$ in Each Round}

\begin{lemma}
\label{lemma:yandf}  
For $k \in [m]$, let $\vec{x}^{(k)}$ denote the primal
vector at the end of round $k$, where $\vec{x}^{(0)} = \vec{L}$ is
the initial vector.
Then, for each $k \in [m]$, at the end of round $k$, we have 
\[
\textstyle
y_k \geq \frac{1}{\rho} \cdot (f(\vec{x}^{(k)}) - f(\vec{x}^{(k-1)})) \enspace.
\]
In particular, summing this inequality over all rounds $k \in [m]$,
at the end of the process we have
\[
\textstyle
\sum_{k \in [m]} y_k \geq \frac{1}{\rho} \cdot (f(\vec{x}^{(m)}) - f(\vec{L})) \enspace.
\]
\end{lemma}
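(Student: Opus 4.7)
The plan is to apply the chain rule to $f(\vec{x})$ viewed as a function of the dual variable $y_k$ during round $k$, and exploit the fact that the algorithm keeps increasing $y_k$ only while the covering constraint $\langle \vec{a}_k, \vec{x}\rangle < 1$ is violated.

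First, within round $k$, the algorithm sets $y_k := 0$ and then continuously increases $y_k$, updating each primal coordinate according to $\frac{dx_i}{dy_k} = \frac{\rho\, a_{ki}\, x_i}{\nabla_i f(\vec{x})}$. Using the chain rule, I would compute
\[
\frac{d f(\vec{x})}{d y_k} \;=\; \sum_{i \in [n]} \nabla_i f(\vec{x}) \cdot \frac{d x_i}{d y_k} \;=\; \sum_{i \in [n]} \nabla_i f(\vec{x}) \cdot \frac{\rho\, a_{ki}\, x_i}{\nabla_i f(\vec{x})} \;=\; \rho \sum_{i \in [n]} a_{ki} x_i \;=\; \rho \langle \vec{a}_k, \vec{x}\rangle.
\]
Notice that the $\nabla_i f(\vec{x})$ factors cancel cleanly, which is precisely the reason the algorithm chose the rate $\frac{\rho a_{ki} x_i}{\nabla_i f(\vec{x})}$ in the first place.

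Next, the inner \texttt{while}-loop keeps increasing $y_k$ only as long as $\langle \vec{a}_k, \vec{x}\rangle < 1$, so throughout the entire interval of integration we have $\frac{d f(\vec{x})}{d y_k} \le \rho$. Integrating over $y_k$ from $0$ to its final value at the end of round $k$ gives
\[
f(\vec{x}^{(k)}) - f(\vec{x}^{(k-1)}) \;=\; \int_0^{y_k} \frac{d f(\vec{x})}{d y_k}\, d y_k \;\le\; \rho \cdot y_k,
\]
which, after dividing by $\rho$, is exactly the per-round claim. Summing over $k \in [m]$ and telescoping yields $\sum_{k \in [m]} y_k \ge \frac{1}{\rho}\bigl(f(\vec{x}^{(m)}) - f(\vec{L})\bigr)$, as desired.

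The only subtle point — but not really an obstacle — is the usual caveat that we treat the continuous-time update as a genuine absolutely continuous trajectory so that the fundamental theorem of calculus applies; the paper already flags this right after the algorithm description. Feasibility of the inner loop (i.e.\ that $y_k$ is indeed finite at the end of round $k$) is not needed here, since the bound is stated conditionally on the round terminating and is in fact established separately via Lemma~\ref{lem:mono_strict_general} / Lemma~\ref{lemma:mono_strict}. No properties of $f$ beyond differentiability and positivity of the $\nabla_i f(\vec{x})$ along the trajectory are used at this stage; in particular, convexity of $f$ and Assumption~\ref{assumption:kappa} are not invoked, which is consistent with the outline noting that this first part of the analysis is identical for both Theorem~\ref{thm:coveringgeneral} and Theorem~\ref{thm:coveringpoly}.
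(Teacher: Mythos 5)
Your proposal is correct and follows essentially the same argument as the paper: both exploit that $\frac{d x_i}{d y_k} = \frac{\rho a_{ki} x_i}{\nabla_i f(\vec{x})}$ makes the gradient factors cancel, so that while $\langle \vec{a}_k, \vec{x}\rangle < 1$ the rate of change of $f$ with respect to $y_k$ is at most $\rho$, and then integrate (the paper phrases it as $1 \ge \frac{1}{\rho}\sum_i \nabla_i f(\vec{x})\,\frac{dx_i}{dy_k}$ and integrates via the path-integral form of the fundamental theorem of calculus, which is the same computation rearranged). Your added remarks on feasibility and on which assumptions are not needed are consistent with the paper's treatment.
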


\begin{proof}
Note that in round $k \in [m]$, the algorithm increases $y_k$ only when $\sum_{i=1}^{n}a_{ki}x_{i}<1$. 
Therefore, while $y_k$ is increased, we have:
\begin{equation} \label{eq:yinc}
		1  \geq \sum_{i=1}^{n}a_{ki}x_{i} 
		       = \frac{1}{\rho} \sum_{i=1}^{n} \nabla_i f(\vec{x}) \cdot \frac{d x_i}{d y_k},
\end{equation}
where the equality comes from $\frac{d x_i}{d y_k} = \frac{\rho a_{ki} x_i}{\nabla_i f(\vec{x})}$.

As $y_k$ increases from 0 until the end of round $k$,
$\vec{x}$ increases from $\vec{x}^{(k-1)}$ to $\vec{x}^{(k)}$.
Hence, integrating both sides of (\ref{eq:yinc}) with respect to $y_k$ over the change during round $k$,
we have:
\[
y_k \geq \frac{1}{\rho} \int_{\vec{x} = \vec{x}^{(k-1)}}^{\vec{x}^{(k)}} \sum_{i \in [n]} \nabla_i f(\vec{x}) \, d x_i = \frac{1}{\rho} \int_{\vec{x} = \vec{x}^{(k-1)}}^{\vec{x}^{(k)}} \langle \nabla f(\vec{x}), d \vec{x} \rangle = \frac{1}{\rho} \cdot (f(\vec{x}^{(k)}) - f(\vec{x}^{(k-1)})) \enspace,
\]
where the last equality comes from the fundamental theorem of calculus for path integrals of vector fields.
	%
	%
\end{proof}


\subsubsection{Proof of Theorem \ref{thm:coveringgeneral}}


We will first show an upper bound for $\vec{z} = A^T \vec{y}$ in terms of $\nabla f$.  

\begin{lemma}
\label{lemma:boundz1}
After the last round $m$, we have 
$\vec{z}^{(m)} \leq \frac{\ln \mu}{\rho} \cdot \nabla f(\vec{x}^{(m)})$.
\end{lemma}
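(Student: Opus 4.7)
The plan is to track how the auxiliary coordinate $z_i$ evolves as a function of $x_i$, by taking the ratio of their update rates. First I would observe that in any round $k$ with $a_{ki} > 0$, dividing $\tfrac{dx_i}{dy_k} = \tfrac{\rho a_{ki} x_i}{\nabla_i f(\vec{x})}$ by $\tfrac{dz_i}{dy_k} = a_{ki}$ yields the single-coordinate relation
\[
dz_i \;=\; \frac{\nabla_i f(\vec{x})}{\rho\, x_i}\, dx_i,
\]
which also holds trivially when $a_{ki} = 0$ since then neither variable changes. This lets me eliminate $y_k$ entirely and view $z_i^{(m)}$ as a path integral along the trajectory of $x_i$, starting at $L_i$ and ending at $x_i^{(m)}$.

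Next I would integrate this relation along the entire (non-decreasing) trajectory of $\vec{x}$ to get
\[
z_i^{(m)} \;=\; \frac{1}{\rho}\int_{L_i}^{x_i^{(m)}} \frac{\nabla_i f(\vec{x})}{x_i}\, dx_i.
\]
The subtlety here — and what I expect to be the main (though ultimately mild) obstacle — is that $\nabla_i f(\vec{x})$ depends on the \emph{whole} vector $\vec{x}$, not just on the current value of $x_i$, so one cannot a priori treat this as a one-variable integral. This is exactly where the monotone-gradient assumption on $f$ is used: since every coordinate of $\vec{x}$ is non-decreasing throughout the algorithm, and $\nabla f$ is monotone, at every point along the trajectory one has $\nabla_i f(\vec{x}) \le \nabla_i f(\vec{x}^{(m)})$.

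Pulling this final-time bound outside the integral then gives
\[
z_i^{(m)} \;\le\; \frac{\nabla_i f(\vec{x}^{(m)})}{\rho}\int_{L_i}^{x_i^{(m)}} \frac{dx_i}{x_i} \;=\; \frac{\nabla_i f(\vec{x}^{(m)})}{\rho}\, \ln\frac{x_i^{(m)}}{L_i} \;\le\; \frac{\ln \mu}{\rho}\cdot \nabla_i f(\vec{x}^{(m)}),
\]
using the hypothesis $x_i^{(m)} \le U_i$ and the definition $\mu = \max_i U_i/L_i$. Since this bound holds coordinate-wise for every $i \in [n]$, it is exactly the claimed vector inequality $\vec{z}^{(m)} \le \tfrac{\ln \mu}{\rho}\, \nabla f(\vec{x}^{(m)})$.
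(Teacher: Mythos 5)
Your proposal is correct and follows essentially the same argument as the paper: eliminate $y_k$ by computing $\frac{dz_i}{dx_i} = \frac{\nabla_i f(\vec{x})}{\rho x_i}$, bound $\nabla_i f(\vec{x}) \le \nabla_i f(\vec{x}^{(m)})$ via monotonicity of $\nabla f$ and the non-decreasing trajectory, and integrate coordinate-wise from $L_i$ to $x_i^{(m)} \le U_i$ to obtain the $\frac{\ln\mu}{\rho}$ factor. Your extra remarks (the $a_{ki}=0$ case and the fact that $\nabla_i f$ depends on the whole vector) are correct clarifications of steps the paper leaves implicit.
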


\begin{proof}
In each round $k \in [m]$, as $y_k$ increases, both $x_i$ and $z_i$ increases.
Specifically, 
\[
\frac{d z_i}{d x_i} = \frac{d z_i}{d y_k} \cdot \frac{d y_k}{d x_i} = a_{ki} \cdot \frac{\nabla_i f(\vec{x})}{\rho a_{ki} x_i} = \frac{\nabla_i f(\vec{x})}{\rho x_i} \leq
\frac{\nabla_i f(\vec{x}^{(m)})}{\rho x_i} \enspace,
\]
where the last inequality follows from the monotonicity of $\nabla_i f(\cdot)$.  Observe that as $\vec{z}$ increases from $\vec{0}$
to $\vec{z}^{(m)} = (z^{(m)}_{1}, z^{(m)}_{2}, \ldots, z^{(m)}_{n})$ at the end of round $m$, each $x_i$ increases from $L_i$ to $x^{(m)}_{i} \leq U_i$.
Hence, integrating $\frac{d z_i}{d x_i} \leq 
\frac{\nabla_i f(\vec{x}^{(m)})}{\rho x_i}$ with respect to $x_i$ throughout the whole process, we have
\[
z^{(m)}_{i} \leq \frac{\ln \frac{x^{(m)}_{i}}{L_i}}{\rho} \cdot \nabla_i f(\vec{x}^{(m)}) \leq \frac{\ln \frac{U_i}{L_i}}{\rho} \cdot \nabla_i f(\vec{x}^{(m)}) \leq \frac{\ln \mu}{\rho} \cdot \nabla_i f(\vec{x}^{(m)}) \enspace,
\]
as required.
\end{proof}

\begin{proof}[Theorem \ref{thm:coveringgeneral}]
Our choice of $\rho = \tau^{\tau-1} \cdot (\ln \mu)^\tau$ satisfies $\tfrac{\ln \mu}{\rho} \le 1$.
Hence, by the above lemma, the monotonicity of $f^*$, and the conclusion of Lemma~\ref{lemma:convexf},
we have 
\[
f^*(\vec{z}^{(m)}) \leq f^* \big( \tfrac{\ln \mu}{\rho} \cdot \nabla f(\vec{x}) \big) \le \big( \tfrac{\ln \mu}{\rho} \big)^{\frac{\tau}{\tau - 1}} \cdot (\tau - 1) 
\cdot C(\vec{x}^{(m)}) \enspace.
\]

Further note that from Lemma~\ref{lemma:yandf},
at the end of round $m$, the dual vector $\vec{y}$
satisfies 
\[
\sum_{k \in [m]} y_k \geq \tfrac{1}{\rho} \cdot (C(\vec{x}^{(m)}) - C(\vec{L})) \enspace.
\]

Hence, by $\rho = \tau^{\tau-1} \cdot (\ln \mu)^\tau$, we have
\[
P(\vec{y}) = \sum_{k \in [m]} y_k - f^*(\vec{z}^{(m)}) \geq \big( \tfrac{1}{\rho} - (\tfrac{\ln \mu}{\rho})^{\frac{\tau}{\tau - 1}} \cdot (\tau - 1) \big) \cdot C(\vec{x}^{(m)})  - \tfrac{C(\vec{L})}{\rho} 
= \tfrac{C(\vec{x}^{(m)})}{(\tau \ln \mu)^\tau} - \tfrac{\tau \cdot C(\vec{L})}{(\tau \ln \mu)^\tau} \enspace,
\]
which implies the result after rearranging, since $P(\vec{y}) \leq \popt \leq \copt$.
\end{proof}

\subsubsection{Proof of Theorem \ref{thm:coveringpoly}}
\label{sec:coveringpoly}

In this subsection, we consider the case when $f$ is a multivariate polynomial with non-negative coefficients.  
If $f$ has maximum degree $\tau$, then for all $\vec{x} \in \R^n_+$, $\langle \nabla f(\vec{x}), \vec{x} \rangle \leq \tau \cdot f(\vec{x})$.

We will first show how to obtain better competitive ratio if the gradient $\nabla f$ satisfies a stronger form of monotonicity.
Next, we will explain how to approximate any polynomial cost function $f$ with a surrogate cost function $\hat{f}$ such that its gradient $\nabla \hat{f}$ satisfies the stronger form of monotonicity.
Finally, we obtain the competitive ratio in Theorem \ref{thm:coveringpoly} by running our online covering algorithm with respect to the surrogate cost function $\hat{f}$.

\paragraph{Better Competitive Ratio from Stronger Monotonicity of $\nabla f$.}
For $\lambda \geq 0$, we say that $\nabla f$ is \emph{$\lambda$-monotone}, if
for each $i \in [n]$, the function $\vec{x} \mapsto \frac{\nabla_i f(\vec{x})}{x_i^\lambda}$ is monotone.
For example, if each $x_i$ appears with degree at least $(\lambda + 1)$ in $f$, then $\nabla f$ is $\lambda$-monotone.  With this assumption,
we prove a different version of Lemma~\ref{lemma:boundz1}

\begin{lemma}
\label{lemma:boundz2}
Suppose that $\nabla f$ is $\lambda$-monotone for some $\lambda > 0$.
Then, after the last round $m$ of Algorithm 1 running with $f$,
we have 
$\vec{z}^{(m)} \leq \frac{1}{\lambda \rho} \cdot \nabla f(\vec{x}^{(m)})$.  This is true even if $f$ is not convex.
\end{lemma}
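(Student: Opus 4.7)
The plan is to mirror the argument of Lemma~\ref{lemma:boundz1}, but replace the crude ``coordinate-wise monotonicity of $\nabla f$'' step with the sharper $\lambda$-monotonicity assumption. As in the previous lemma, the starting point is the differential relation that holds at every instant during the continuous update in any round $k$:
\[
\frac{dz_i}{dx_i} \;=\; \frac{dz_i/dy_k}{dx_i/dy_k} \;=\; \frac{a_{ki}}{\rho a_{ki} x_i / \nabla_i f(\vec{x})} \;=\; \frac{\nabla_i f(\vec{x})}{\rho\, x_i}.
\]
This identity is purely a consequence of Algorithm~\ref{alg:covering}'s update rule and does not use convexity of $f$, which is why the conclusion holds even when $f$ is non-convex.

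The next step is to factor out $x_i^{\lambda-1}$ on the right-hand side, writing
\[
\frac{dz_i}{dx_i} \;=\; \frac{1}{\rho}\cdot\frac{\nabla_i f(\vec{x})}{x_i^{\lambda}}\cdot x_i^{\lambda-1}.
\]
Now I invoke $\lambda$-monotonicity: the map $\vec{x}\mapsto \nabla_i f(\vec{x})/x_i^{\lambda}$ is monotone, and since $\vec{x}$ only grows throughout the algorithm and attains its final value $\vec{x}^{(m)}$, we have $\nabla_i f(\vec{x})/x_i^{\lambda} \leq \nabla_i f(\vec{x}^{(m)})/(x_i^{(m)})^{\lambda}$ at every intermediate moment. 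Plugging this back yields
\[
\frac{dz_i}{dx_i} \;\leq\; \frac{1}{\rho}\cdot\frac{\nabla_i f(\vec{x}^{(m)})}{(x_i^{(m)})^{\lambda}}\cdot x_i^{\lambda-1}.
\]

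Finally, I integrate this inequality over the entire online process. As the algorithm runs, $x_i$ increases monotonically from its initial value $L_i$ (possibly $0$) up to $x_i^{(m)}$, while $z_i$ increases from $0$ up to $z_i^{(m)}$. Integrating both sides in $x_i$ from $L_i$ to $x_i^{(m)}$ gives
\[
z_i^{(m)} \;\leq\; \frac{\nabla_i f(\vec{x}^{(m)})}{\rho\,(x_i^{(m)})^{\lambda}}\cdot\frac{(x_i^{(m)})^{\lambda} - L_i^{\lambda}}{\lambda} \;\leq\; \frac{1}{\lambda\rho}\cdot \nabla_i f(\vec{x}^{(m)}),
\]
which is exactly the claimed coordinate-wise bound. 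Assembling the $n$ coordinates gives the vector inequality $\vec{z}^{(m)}\leq \tfrac{1}{\lambda\rho}\,\nabla f(\vec{x}^{(m)})$.

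I don't foresee a real obstacle: the substantive content is entirely packed into the algebraic identity for $dz_i/dx_i$ and the correct way to apply $\lambda$-monotonicity so that the integral of $x_i^{\lambda-1}$ cancels the $(x_i^{(m)})^{\lambda}$ factor and yields the clean prefactor $\tfrac{1}{\lambda\rho}$ instead of the logarithmic $\tfrac{\ln\mu}{\rho}$ of Lemma~\ref{lemma:boundz1}. The only thing worth double-checking is that we never needed $\vec{L}>\vec{0}$ (the divergence of $\ln$ at $0$ is what forced this in the weaker lemma); here, because $x_i^{\lambda-1}$ is integrable at $0$ for $\lambda>0$, the argument goes through unchanged even when $\vec{L}=\vec{0}$, consistent with the lemma's scope.
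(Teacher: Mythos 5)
Your proof is correct and follows essentially the same route as the paper: the same identity $\frac{dz_i}{dx_i} = \frac{\nabla_i f(\vec{x})}{\rho x_i}$, the same use of $\lambda$-monotonicity to bound $\nabla_i f(\vec{x})/x_i^{\lambda}$ by its final value, and the same integration of $x_i^{\lambda-1}$ to obtain the $\tfrac{1}{\lambda\rho}$ prefactor. Your closing remark about integrability at $x_i=0$ (and hence no need for $\vec{L}>\vec{0}$) matches how the paper uses this lemma with $\vec{L}=\vec{0}$.
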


\begin{proof}
Similar to the proof of Lemma \ref{lemma:boundz1}, we give an upper bound on the vector $\vec{z} = A^T y$ after
the end of round $m$, which we denote by $\vec{z}^{(m)}$.
Recall that we denote the primal vector after round $m$ by $\vec{x}^{(m)}$.
For each $i \in [n]$, 
\[
\textstyle
\frac{d z_i}{d x_i} = \frac{d z_i}{d y_k} \cdot \frac{d y_k}{d x_i} = a_{ki} \cdot \frac{\nabla_i f(\vec{x})}{\rho a_{ki} x_i} = \frac{1}{\rho \lambda} \cdot \frac{\nabla_i f(\vec{x})}{x_i^\lambda}
\frac{d x_i^\lambda}{d x_i} \leq
\frac{1}{\rho \lambda} \cdot \frac{\nabla_i f(\vec{x}^{(m)})}{x_{mi}^\lambda}
\frac{d x_i^\lambda}{d x_i} \enspace,
\]
where the last inequality follows from the $\lambda$-monotonicity of $\nabla f$ and $\vec{x} \leq \vec{x}^{(m)}$.

Observe that as $z_i$ increases from 0 to $z^{(m)}_{i}$,
$x_i$ increases from 0 to $x^{(m)}_{i}$.  Hence, integrating
with respect to $x_i$ throughout the algorithm,
we have for each $i$, $z^{(m)}_{i} \leq \frac{1}{\lambda \rho} 
\cdot \nabla_i f(\vec{x}^{(m)})$.
\end{proof}

\begin{lemma}
\label{lemma:sharp_convex}
Suppose that $f$ is convex, and for all $\vec{x} \in \R^n_+$, $\langle \nabla f(\vec{x}), \vec{x} \rangle \leq \tau \cdot f(\vec{x})$.
Moreover, $\nabla f$ is $\lambda$-monotone for some $0 < \lambda \leq \tau$.
Then, Algorithm~\ref{alg:covering} with $\rho := \frac{\tau^{\tau-1}}{\lambda^\tau}$ and initial primal vector $\vec{L} := \vec{0}$ is $(\frac{\tau}{\lambda})^\tau$-competitive.
\end{lemma}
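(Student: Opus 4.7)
The plan is to combine the two ``ingredient'' lemmas already proved in this section to instantiate the online primal dual analysis. Concretely, Lemma~\ref{lemma:yandf} lower-bounds the dual mass $\sum_k y_k$ in terms of the growth of the covering objective, and Lemma~\ref{lemma:boundz2} (which crucially uses $\lambda$-monotonicity of $\nabla f$) gives a coordinate-wise upper bound on $\vec{z}^{(m)}$ that does \emph{not} depend on $\ln\mu$. Putting them together with the conjugate inequality from Lemma~\ref{lemma:convexf}(b),(c) will yield $P(\vec{y})\ge (\lambda/\tau)^\tau\cdot C(\vec{x}^{(m)})$, which by weak duality gives the stated $(\tau/\lambda)^\tau$-competitiveness.

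In more detail, since $\vec{L}=\vec{0}$ and $f(\vec{0})=0$, Lemma~\ref{lemma:yandf} telescopes to $\sum_{k\in[m]} y_k \ge \tfrac{1}{\rho}\, f(\vec{x}^{(m)})$. From Lemma~\ref{lemma:boundz2} we obtain $\vec{z}^{(m)} \le \gamma\cdot \nabla f(\vec{x}^{(m)})$ with $\gamma := \tfrac{1}{\lambda\rho}$. Substituting $\rho = \tau^{\tau-1}/\lambda^{\tau}$ gives $\lambda\rho = (\tau/\lambda)^{\tau-1}\ge 1$, so $\gamma=(\lambda/\tau)^{\tau-1}\le 1$; this is exactly the range required to apply Lemma~\ref{lemma:convexf}(b). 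Combining monotonicity of $f^*$ with parts (b) and (c) of that lemma yields
\[
f^*\bigl(\vec{z}^{(m)}\bigr) \le f^*\!\bigl(\gamma\,\nabla f(\vec{x}^{(m)})\bigr) \le \gamma^{\tau/(\tau-1)}(\tau-1)\, f(\vec{x}^{(m)}) = (\lambda/\tau)^{\tau}(\tau-1)\, f(\vec{x}^{(m)}).
\]

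Next I would just verify that the two constants multiplying $f(\vec{x}^{(m)})$ in $P(\vec{y}) = \sum_k y_k - f^*(\vec{z}^{(m)})$ combine cleanly. Using $\tfrac{1}{\rho} = \tau\,(\lambda/\tau)^{\tau}$, the gap becomes
\[
P(\vec{y}) \;\ge\; \bigl(\tau - (\tau-1)\bigr)\,(\lambda/\tau)^{\tau}\, f(\vec{x}^{(m)}) \;=\; (\lambda/\tau)^{\tau}\, C(\vec{x}^{(m)}),
\]
so by weak duality $C(\vec{x}^{(m)}) \le (\tau/\lambda)^{\tau}\, P(\vec{y}) \le (\tau/\lambda)^{\tau}\,\copt$, which is the claim.

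The only subtle point that is not immediate from the two ingredient lemmas is that with $\vec{L}=\vec{0}$ one must still argue the algorithm is well defined, i.e., that the update rule $\tfrac{d x_i}{d y_k} = \tfrac{\rho a_{ki} x_i}{\nabla_i f(\vec{x})}$ can actually drive each binding constraint to satisfaction even though $\vec{x}$ starts at the origin. This is precisely what the paper defers to Lemma~\ref{lemma:mono_strict}, and I would simply cite it; the $\lambda$-monotonicity hypothesis is what makes that feasibility argument go through. Apart from this feasibility check and the verification $\lambda\rho\ge 1$ (needed to legitimately invoke Lemma~\ref{lemma:convexf}(b)), the proof is a one-line combination of the pieces already established.
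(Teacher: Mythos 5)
Your proposal is correct and follows essentially the same route as the paper's own proof: invoke Lemma~\ref{lemma:yandf} with $f(\vec{L})=f(\vec{0})=0$, invoke Lemma~\ref{lemma:boundz2} to get $\vec{z}^{(m)}\le \gamma\,\nabla f(\vec{x}^{(m)})$ with $\gamma=\tfrac{1}{\lambda\rho}=(\lambda/\tau)^{\tau-1}\le 1$, apply Lemma~\ref{lemma:convexf}(b),(c) together with monotonicity of $f^*$, and combine via weak duality, with the same arithmetic. Your extra remark about citing Lemma~\ref{lemma:mono_strict} for well-definedness when $\vec{L}=\vec{0}$ matches how the paper handles primal feasibility in Section~\ref{sec:mono_strict}.
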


\begin{proof}
We choose $\rho := \frac{\tau^{\tau-1}}{\lambda^\tau}$ such that $\gamma := \frac{1}{\lambda \rho} = (\frac{\lambda}{\tau})^{\tau - 1} \leq 1$.  
By Lemma~\ref{lemma:boundz2}, we have $\vec{z}^{(m)} \leq \gamma \cdot \nabla f(\vec{x}^{(m)})$.  Since $f$ is convex, we can apply Lemma~\ref{lemma:convexf}.
Observing that
$f^*$ is monotone,
we have 
\[
f^*(\vec{z}^{(m)}) \leq (\tau - 1) \cdot \gamma^{\frac{\tau}{\tau-1}} \cdot f(\vec{x}^{(m)}) = (\tau - 1) \cdot \big( \tfrac{\lambda}{\tau} \big)^{\tau} \cdot C(\vec{x}^{(m)}) \enspace.
\]

On the other hand, Lemma~\ref{lemma:yandf} states that
\[
\textstyle
\sum_{k \in [m]} y_k \geq \frac{1}{\rho} \cdot (f(\vec{x}^{(m)}) - f(\vec{L})) = \frac{\lambda^\tau}{\tau^{\tau-1}} \cdot C(\vec{x}^{(m)}) \enspace,
\]
since $f(\vec{L}) = f(\vec{0}) = 0$.

Therefore, putting things together,
we have 
\[
\textstyle
P(\vec{y}) = \sum_{k \in [m]} y_m - f^*(\vec{z}^{(m)})
\geq \frac{\lambda^\tau}{\tau^{\tau-1}} \cdot C(\vec{x}^{(m)}) - 
(\tau - 1) \cdot (\frac{\lambda}{\tau})^{\tau} \cdot C(\vec{x}^{(m)}) = (\frac{\lambda}{\tau})^\tau \cdot C(\vec{x}^{(m)}) \enspace.
\]
Recall that $P(\vec{y}) \leq \popt \leq \copt$, we have $C(\vec{x}^{(m)}) \leq (\frac{\tau}{\lambda})^\tau \cdot \copt$, as required.
\end{proof}

\paragraph{Homogeneous Polynomial $f$ with Degree $\tau$.} 

Next, we consider a degree-$\tau$ homogeneous polynomial $f$ with non-negative coefficients.
Note that any such polynomial satisfies our assumptions on the convexity of $f$, namely, for all $\vec{x} \in \R^n_+$,
$\langle \nabla f(\vec{x}) , \vec{x} \rangle \leq \tau \cdot f(\vec{x})$.
However, $\nabla f$ could be only $0$-monotone in general, in which case Lemma~\ref{lemma:sharp_convex} cannot be applied.

The idea is to approximate $f$ with another function $\widehat{f}$ such that $\nabla \widehat{f}$ is $\lambda$-monotone for some $\lambda > 0$. 
We will start with some notations for describing polynomials.
We use the vector $\vec{d} = (d_1, d_2, \ldots, d_n)  \in \Z^n$ to describe the degrees of the $x_i$'s in a monomial.  
For a monomial with maximum degree $\tau$, we have $\sum_{i \in [n]} d_i \leq \tau$ (where equality holds if the polynomial is homogeneous).  Observe that there are ${n + \tau \choose \tau} \leq n^\tau$ such vectors $\vec{d}$, for $n \geq 4$.
We denote $\vec{x}^{\vec{d}} := \prod_{i \in [n]} x_i^{d_i}$.
Given a vector $\vec{\alpha}$ with non-zero coordiniates,
we denote the vector $\frac{\vec{x}}{\vec{\alpha}} := (\frac{x_1}{\alpha_1}, \frac{x_2}{\alpha_2}, \ldots, \frac{x_n}{\alpha_n})$.

\begin{lemma}\label{lemma:lead_term}
For any homogeneous convex polynomial function $f$ with non-negative coefficients and degree $\tau$,
if $f$ has a monomial with positive coefficient that involves $x_i$,
then the term $x_{i}^{\tau}$ must also have positive coefficient.
\end{lemma}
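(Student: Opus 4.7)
My plan is to prove Lemma~\ref{lemma:lead_term} by contradiction via a bivariate convexity argument. Suppose $f$ contains a monomial $c\,\vec{x}^{\vec{d}}$ with $c > 0$ and $d_i \geq 1$, while the coefficient of $x_i^{\tau}$ in $f$ is zero; then $\vec{d} \neq \tau\vec{e}_i$, so some $j \neq i$ must have $d_j \geq 1$. The key idea is to restrict $f$ to a two-dimensional subspace that kills the $x_i^{\tau}$ direction yet preserves the offending monomial.

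To carry this out, I would pick any $\vec{w} \in \R_+^n$ with $w_i = 0$ and $w_k > 0$ for every $k \neq i$, and set $h(u, v) := f(u\vec{e}_i + v\vec{w})$. Substituting $x_i = u$ and $x_k = v w_k$ for $k \neq i$ into each monomial of $f$ shows that $h$ is a homogeneous polynomial of degree $\tau$ in $(u, v)$ with non-negative coefficients, $h(u, v) = \sum_{k=0}^{\tau} \beta_k u^k v^{\tau-k}$, where $\beta_{\tau}$ equals the (vanishing) coefficient of $x_i^{\tau}$ in $f$, while $\beta_{d_i} \geq c\prod_{k \neq i} w_k^{d_k} > 0$. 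Moreover, $h$ inherits convexity on $\R_+^2$ from $f$.

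The contradiction then drops out of the midpoint inequality applied to the secant between $(0, 1)$ and $(t, 0)$ for an arbitrary $t > 0$: convexity of $h$ gives $h(t/2,\, 1/2) \leq \tfrac{1}{2}h(0,1) + \tfrac{1}{2}h(t,0) = \tfrac{1}{2}\beta_0$, since $h(t, 0) = \beta_{\tau} t^{\tau} = 0$. Expanding the left side and multiplying by $2^{\tau}$ yields $\sum_{k=0}^{\tau-1}\beta_k t^k \leq 2^{\tau-1}\beta_0$ for all $t > 0$; non-negativity of every summand makes this bound apply termwise, so letting $t \to \infty$ forces $\beta_k = 0$ for every $k \geq 1$, contradicting $\beta_{d_i} > 0$. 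The main obstacle, in my view, is spotting the right restriction---$\vec{w}$ with $w_i = 0$ and all other coordinates strictly positive, so that the restriction simultaneously eliminates the $x_i^{\tau}$ contribution and keeps the offending monomial alive---after which the convexity argument is essentially a one-liner.
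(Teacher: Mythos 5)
Your proof is correct. You perform the same first reduction as the paper --- restricting $f$ to a two-dimensional non-negative cone through the $\vec{e}_i$ direction (the paper takes exactly your $\vec{w} = \vec{1}-\vec{e}_i$, defining $g(x,y) = f(x\vec{e}_i + y(\vec{1}-\vec{e}_i))$) --- so that the restriction is a convex bivariate homogeneous form $\sum_k \beta_k u^k v^{\tau-k}$ with $\beta_\tau = 0$ and $\beta_{d_i} > 0$. Where you diverge is the contradiction mechanism: the paper invokes the second-order characterization of convexity, computes the Hessian determinant of the restricted polynomial, and argues (with a case split $d=1$ versus $2 \le d < \tau$) that its leading monomial has negative coefficient, so the determinant is eventually negative, violating positive semidefiniteness. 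You instead use only the zeroth-order definition of convexity: the midpoint inequality along the secant from $(0,1)$ to $(t,0)$ gives $\sum_{k=0}^{\tau-1} \beta_k t^k \le 2^{\tau-1}\beta_0$ for all $t>0$, and since every $\beta_k \ge 0$ this bound applies termwise, so letting $t \to \infty$ kills all $\beta_k$ with $k \ge 1$, contradicting $\beta_{d_i}>0$. Your route is more elementary --- no differentiation, no determinant expansion, no case analysis --- at the price of leaning essentially on the non-negativity of the coefficients of $f$ (needed for the termwise bound), which is harmless here since the lemma assumes it; the paper's Hessian computation is the more standard convexity test and isolates exactly which second-order quantity fails, but is computationally heavier. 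Both arguments are sound.
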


\begin{proof}
Given $f: \R_+^n \rightarrow \R$, we fix some $i \in [n]$
and define $g: \R_+^2 \rightarrow \R$
as $g(x,y) := f( x \vec{e_{i}} + y(\vec{1}-\vec{e_{i}}))$,
i.e., we apply $f$ to the vector whose $i$-th coordinate is $x$
and all other coordinates are $y$.  Observe that since $f$ is convex,
so is $g$.

Suppose for contradiction's sake, $f$ depends on $x_i$, but
the coefficient of $x_i^\tau$ is zero.  Then, since $f$ is homogeneous with
degree $\tau$, it follows that $g$ can be expressed as
$g(x,y) = \sum_{j=0}^d b_j x^j y^{\tau - j}$,
where $1 \leq d < \tau$ and $b_d > 0$.

We next consider the Hessian matrix $H(g) = \left[
		\begin{matrix}
			\frac{\partial^2 g}{\partial x^2} & \frac{\partial^2 g}{\partial x \partial y} \\
			\frac{\partial^2 g}{\partial y \partial x} & \frac{\partial^2 g}{\partial y^2}
		\end{matrix}
		\right]$.
		
Because $g$ is convex, the matrix $H(g)$ is positive semi-definite,
and hence the determinant $|H(g)|$ is non-negative.
Observe that $|H(g)|$ is a polynomial in $x$ and $y$.
We shall reach a contradiction by showing that $|H(g)|$ is negative
for some $x$ and $y$.

For the case $d=1$, observe that $\frac{\partial^2 g}{\partial x^2} = 0$,
and $\frac{\partial^2 g}{\partial x \partial y} = \frac{\partial^2 g}{\partial y \partial x} > 0$ when $x$ and $y$ are both positive.  Hence,
the determinant is negative in this case.

For the case $2 \leq d < \tau$, we fix some $y > 0$ and
consider large $x > 0$.  Since $d < \tau$, the monomial in $|H(g)|$
with largest degree in $x$ is $x^{2(d-1)} y^{2(\tau - d -1)}$,
whose coefficient is
\[
b_d \cdot \{d(d-1) \cdot (\tau - d) (\tau - d -1) - d^2 (\tau - d)^2 \} < 0
\enspace.
\]

Therefore, for sufficiently large $x$, the determinant $|H(g)|$ is dominated
by the monomial $x^{2(d-1)} y^{2(\tau - d -1)}$, and hence is negative, reaching
the desired contradiction.
\ignore{
	Consider a convex function $f$ with non-negative coefficients and degree $\tau$, and let $g(x_1,x_2)$ be $f(x_1^2\vec{e_{i}},x_2^2(\vec{1}-\vec{e_{i}}))$, then $g$ is also a convex function with degree $2\tau$.
	Assume there is no term for $x_1^\tau$ in $f$, then have
	\begin{eqnarray*}
		g(x_1,x_2) = \sum_{\vec{d}}c_{\vec{d}}x_1^{2d_1}x_2^{2d_2}
	\end{eqnarray*}
	Here, it always satisfies that $0 \le d_1 \le \tau - 1$, and $1 \le d_2 \le \tau$.
	Then the Hessian matrix of $g$ is
	\begin{eqnarray*}
		H(g) & = & 
		\left[
		\begin{matrix}
			\frac{\partial^2 g}{\partial x_1^2} & \frac{\partial^2 g}{\partial x_1 \partial x_2} \\
			\frac{\partial^2 g}{\partial x_2 \partial x_1} & \frac{\partial^2 g}{\partial x_2^2}
		\end{matrix}
		\right] \\
		& = & 
		\left[
		\begin{matrix}
			\sum_{\vec{d}:d_1>0} c_{\vec{d}} \cdot 2d_1(2d_1-1) c_{\vec{d}} x_1^{2d_1-2} x_2^{2d_2} & 
			\sum_{\vec{d}:d_1>0} c_{\vec{d}} \cdot 4 d_1 d_2 c_{\vec{d}} x_1^{2d_1-1} x_2^{2d_2-1} \\
			\sum_{\vec{d}:d_1>0} c_{\vec{d}} \cdot 4 d_1 d_2 c_{\vec{d}} x_1^{2d_1-1} x_2^{2d_2-1} & 
			\sum_{\vec{d}} c_{\vec{d}} \cdot 2d_2(2d_2-1) c_{\vec{d}} x_1^{2d_1} x_2^{2d_2-2} 
		\end{matrix}
		\right]
	\end{eqnarray*}
	
	Let $d_1^*$ ($0<d_1^*<\tau$) be the largest $d_1$ among all the terms of $f$, and $d_2^*$ ($0<d_2^*<\tau$) be the smallest $d_2$, then in the determinant of $H(g)$, the term $x_1^{4d_1^*-2}x_2^{4d_2^*-2}$ has coefficient $ 4d_1^* d_2^* (1-\tau) c_{[d_1^*,d_2^*]}$, which is negative. All other terms have strictly smaller degree than in $x_1$. Therefore, when $x_1$ is sufficiently large, this term dominates all other terms, and the determinant of $H(g)$ is negative, which contradicts with the convexity of $g$. Therefore, $x_1^\tau$ must exist in $f$. Similarly, $x_i^\tau$ must exist for any $i \in [n]$. Hence the lemma is proved.}
\end{proof}

\noindent \textbf{Approximating $f$.}
Suppose $f(\vec{x}) = \sum_{\vec{d}} \, c_{\vec{d}} \, \cdot \vec{x}^{\vec{d}}$.
In view of Lemma~\ref{lemma:lead_term},
we can assume, without loss of generality,
that $f$ depends on every $x_i$.  For
each $i \in [n]$, let $c_i > 0$ be the coefficient
of $x_i^\tau$ in $f$. Let the
vector $\vec{\alpha} \in \R_+^n$ be such that $\alpha_i := c_i^{\frac{1}{\tau}}$.

For $\lambda > 0$, define $\widehat{f} : \R_+^n \rightarrow \R$ by
$\widehat{f}(\vec{x}) := \sum_{\vec{d}} \, c_{\vec{d}} \cdot \vec{\alpha}^{\lambda \vec{d}} \, \cdot \vec{x}^{(1+\lambda)\vec{d}}$.

\begin{claim}
The function $\widehat{f}$ satisfies the following.

\begin{compactitem}
\item[1.] $\nabla \widehat{f}$ is $\lambda$-monotone;

\item[2.] $\langle \nabla \widehat{f} (\vec{x}), \vec{x} \rangle \leq \tau (1 + \lambda) \cdot \widehat{f}(x)$;

\item[3.] $\widehat{f}$ is convex.
\end{compactitem}
\end{claim}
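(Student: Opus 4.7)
The plan is to verify the three properties essentially by direct computation, with the convexity claim relying on a substitution that realizes $\widehat{f}$ as the composition of $f$ with a vector of coordinate-wise convex maps.

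First, Lemma~\ref{lemma:lead_term} together with Assumption~\ref{assumption:function_property} lets me assume without loss of generality that $f$ depends on every $x_i$, so each $c_i > 0$ and $\alpha_i = c_i^{1/\tau} > 0$; the substitution used below is therefore well-defined. For property 1, I will differentiate term by term to get
\[
\nabla_i \widehat{f}(\vec{x}) \;=\; (1+\lambda)\sum_{\vec{d}:\, d_i\ge 1} c_{\vec{d}}\, d_i\, \vec{\alpha}^{\lambda\vec{d}}\; x_i^{(1+\lambda)d_i - 1}\prod_{j\ne i} x_j^{(1+\lambda)d_j},
\]
and then divide by $x_i^\lambda$. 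Since the resulting exponent on $x_i$ becomes $(1+\lambda)(d_i-1) \geq 0$ (using $d_i \geq 1$), every remaining factor has a non-negative exponent and a non-negative coefficient, so $\vec{x}\mapsto \nabla_i \widehat{f}(\vec{x})/x_i^\lambda$ is coordinate-wise monotone increasing on $\R_+^n$.

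For property 2, I will take the inner product of the gradient with $\vec{x}$ monomial by monomial:
\[
\langle \nabla \widehat{f}(\vec{x}), \vec{x}\rangle \;=\; (1+\lambda)\sum_{\vec{d}} c_{\vec{d}} \Bigl(\textstyle\sum_i d_i\Bigr)\, \vec{\alpha}^{\lambda\vec{d}}\, \vec{x}^{(1+\lambda)\vec{d}}.
\]
Every monomial of $f$ has total degree $\sum_i d_i \leq \tau$ (in fact $=\tau$ since $f$ is homogeneous), so this sum is bounded above by $(1+\lambda)\tau\cdot \widehat{f}(\vec{x})$, as claimed.

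For property 3, the key observation is the substitution $y_i := \alpha_i^\lambda\, x_i^{1+\lambda}$: grouping factors gives
\[
\widehat{f}(\vec{x}) \;=\; \sum_{\vec{d}} c_{\vec{d}} \prod_{i\in[n]} \bigl(\alpha_i^\lambda x_i^{1+\lambda}\bigr)^{d_i} \;=\; f(\vec{y}(\vec{x})).
\]
Each coordinate map $y_i$ depends only on $x_i$, and as a one-variable function $t\mapsto \alpha_i^\lambda t^{1+\lambda}$ is convex and non-decreasing on $\R_+$ since $1+\lambda \ge 1$ and $\alpha_i^\lambda > 0$; hence $y_i$ is convex as a function of $\vec{x}$. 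Since $f$ itself is convex and monotone non-decreasing in each argument (Assumption~\ref{assumption:function_property} and monotonicity of $\nabla f$), the standard composition rule --- if $h\colon\R_+^n\to\R$ is convex and coordinate-wise non-decreasing and each $g_i$ is convex, then $h(g_1,\ldots,g_n)$ is convex --- yields convexity of $\widehat{f}$. I expect property 3 to be the only place where one has to do something beyond mechanical differentiation; spotting the substitution $y_i = \alpha_i^\lambda x_i^{1+\lambda}$ is the one conceptual step, after which the claim follows from a textbook composition fact.
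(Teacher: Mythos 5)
Your proof is correct and follows essentially the same route as the paper: statements 1 and 2 by term-by-term inspection of the exponents (every occurrence of $x_i$ has degree at least $1+\lambda$, and every monomial has total degree at most $(1+\lambda)\tau$), and statement 3 via the substitution $\varphi(\vec{x})_i = \alpha_i^\lambda x_i^{1+\lambda}$ together with convexity plus coordinate-wise monotonicity of $f$. The only nitpick is that the monotonicity of $f$ needed for the composition rule comes directly from Assumption~\ref{assumption:function_property}, so invoking monotonicity of $\nabla f$ there is unnecessary.
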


\begin{proof}
The first statement follows because whenever $x_i$ appears, its degree is at least $1 + \lambda$.  The second statement
follows because each monomial has degree at most $(1+\lambda)\tau$.

We next prove that $\widehat{f}$ is convex.
Observe that $\widehat{f}(\vec{x}) = f(\varphi(x))$,
where $\varphi: \R_+^n \rightarrow \R_+^n$
is defined by $\varphi(\vec{x})_i := \alpha_i^\lambda \cdot x_i^{1+\lambda}$.

Since the mapping $t \mapsto t^{1+\lambda}$ is convex for $\lambda > 0$, it follows that for $p, q \geq 0$ such that $p+q=1$,
$\varphi(p \vec{x} + q \vec{y}) \leq p \cdot \varphi(\vec{x}) + q \cdot \varphi(\vec{y})$, where the inequality performs coordinate-wise comparisons.

Since $f$ is monotone,
we have $\widehat{f}(p  \vec{x} + q  \vec{y}) \leq f(p \cdot \varphi(\vec{x}) + q \cdot \varphi(\vec{y})) \leq p \cdot \widehat{f}(\vec{x}) + q \cdot \widehat{f}(\vec{y})$, where the last inequality follows
from the convexity of $f$.
\end{proof}

\ignore{

Therefore, it is always possible to transform $f(\vec{x})$ to one that each $x_{i}^{\tau}$ term has coefficient $1$ by setting $x_i$ to $(c_{\vec{e_i}})^{\frac{1}{\tau}}x_i$.

\begin{lemma}\label{lemma:coeffient_bound}
	After transformation, $\sum_{\vec{d}}c_{\vec{d}} \leq n^{\tau}$.
\end{lemma}
\begin{proof}
	Let $\vec{e_i}$ be the with $i$th dimension $1$ while all other dimensions $0$. From the convexity of $f$, we have
	\begin{eqnarray*}
		\sum_i f(n\vec{e_i}) \ge n \cdot f(\vec{1})
	\end{eqnarray*}
	Since the coefficient of each $x_i^\tau$ term is $1$, $f(n\vec{e_i})=n^\tau$ for any $i$, and $f(\vec{1})=\sum_{\vec{d}}c_{\vec{d}}$, the lemma is proved.
\end{proof}
}

\begin{lemma}[Approximation by $\widehat{f}$.]
\label{lemma:homo_approx}
For $0 < \lambda \leq 1$, suppose
$\widehat{f}$ is defined as above.
Then, we have $\frac{1}{2 n^{2\tau \lambda}} \cdot \widehat{f} \leq f^{1 + \lambda} \leq n^{\lambda \tau} \cdot \widehat{f}$.
%
%
\end{lemma}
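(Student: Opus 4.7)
The plan is to establish both inequalities directly from the expansions of $f$ and $\widehat{f}$, exploiting two features of the homogeneous setting: every nonzero-coefficient index $\vec{d}$ satisfies $\sum_i d_i = \tau$, and the normalization $\alpha_i = c_i^{1/\tau}$ was chosen precisely so that $(\alpha_i x_i)^\tau = c_i x_i^\tau \le f(\vec{x})$, yielding a uniform coordinate bound $\alpha_i x_i \le f(\vec{x})^{1/\tau}$. These two observations will be the workhorses for both directions.

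For the upper bound $f^{1+\lambda} \le n^{\lambda \tau} \widehat{f}$, I would apply Holder's inequality with conjugate exponents $p = 1+\lambda$ and $q = (1+\lambda)/\lambda$. Writing
\[
c_{\vec{d}}\, \vec{x}^{\vec{d}} = \bigl(c_{\vec{d}}^{1/q}\, \vec{\alpha}^{-\vec{d}/q}\bigr) \cdot \bigl(c_{\vec{d}}^{1/p}\, \vec{\alpha}^{\vec{d}/q}\, \vec{x}^{\vec{d}}\bigr)
\]
and summing over $\vec{d}$, Holder produces $f(\vec{x}) \le \bigl(\sum_{\vec{d}} c_{\vec{d}}\, \vec{\alpha}^{-\vec{d}}\bigr)^{\lambda/(1+\lambda)} \widehat{f}(\vec{x})^{1/(1+\lambda)}$; raising to the $(1{+}\lambda)$-th power gives $f^{1+\lambda} \le \bigl(\sum_{\vec{d}} c_{\vec{d}}\, \vec{\alpha}^{-\vec{d}}\bigr)^{\lambda}\, \widehat{f}$. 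The crucial step is then to recognize that $\sum_{\vec{d}} c_{\vec{d}}\, \vec{\alpha}^{-\vec{d}} = f(1/\vec{\alpha})$; writing $1/\vec{\alpha} = \frac{1}{n}\sum_i (n/\alpha_i)\, \vec{e}_i$ and using convexity together with homogeneity,
\[
f(1/\vec{\alpha}) \le \tfrac{1}{n} \sum_i f\bigl((n/\alpha_i)\, \vec{e}_i\bigr) = \tfrac{1}{n} \sum_i (n/\alpha_i)^\tau c_i = n^{\tau - 1} \sum_i (c_i / \alpha_i^\tau) = n^\tau,
\]
where the last equality uses $c_i = \alpha_i^\tau$.

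For the lower bound I would actually prove the strictly stronger statement $\widehat{f}(\vec{x}) \le f(\vec{x})^{1+\lambda}$, which trivially implies the stated $\widehat{f} \le 2 n^{2 \tau \lambda} f^{1+\lambda}$. Non-negativity of the remaining coefficients gives $(\alpha_i x_i)^\tau = c_i x_i^\tau \le f(\vec{x})$; taking the $d_i/\tau$ power and multiplying across $i$, with $\sum_i d_i = \tau$, yields the monomial estimate $\vec{\alpha}^{\vec{d}}\, \vec{x}^{\vec{d}} \le f(\vec{x})$. Consequently each summand of $\widehat{f}$ satisfies
\[
c_{\vec{d}}\, \vec{\alpha}^{\lambda \vec{d}} \bigl(\vec{x}^{\vec{d}}\bigr)^{1+\lambda} = \bigl(c_{\vec{d}}\, \vec{x}^{\vec{d}}\bigr) \cdot \bigl(\vec{\alpha}^{\vec{d}}\, \vec{x}^{\vec{d}}\bigr)^{\lambda} \le \bigl(c_{\vec{d}}\, \vec{x}^{\vec{d}}\bigr) \cdot f(\vec{x})^{\lambda},
\]
and summing over $\vec{d}$ yields $\widehat{f}(\vec{x}) \le f(\vec{x}) \cdot f(\vec{x})^{\lambda} = f(\vec{x})^{1+\lambda}$.

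The main obstacle I anticipate is in the upper bound: a priori the number of monomials and the magnitudes of the coefficients are unconstrained, so it is not at all obvious why the factor $\sum_{\vec{d}} c_{\vec{d}}\, \vec{\alpha}^{-\vec{d}}$ produced by Holder should be polynomial in $n$. The trick of recognizing this as the single evaluation $f(1/\vec{\alpha})$ and controlling it by convexity against the $n$ axis-aligned points $(n/\alpha_i)\vec{e}_i$ is what couples the homogeneous structure to the $n^\tau$ factor, and it is essentially the only place convexity of $f$ enters; the lower bound, by contrast, is a clean monomial-wise AM-GM estimate that uses nothing beyond non-negativity of the coefficients.
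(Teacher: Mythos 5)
Your proof is correct, and it splits into one half that mirrors the paper and one half that does not. The upper bound is essentially the paper's argument: the paper first rescales to $g(\vec{x}) := f(\vec{x}/\vec{\alpha})$ so that each $x_i^\tau$ has coefficient $1$, applies H\"{o}lder's inequality with exponents $1+\lambda$ and $\tfrac{1+\lambda}{\lambda}$ to get $g^{1+\lambda} \le Z^\lambda \widehat{g}$ with $Z = g(\vec{1}) = f(\vec{1}/\vec{\alpha})$, and bounds $Z \le n^\tau$ by exactly your convexity-against-the-axis-points step; your version simply skips the change of variables and works with $f$, $\vec{\alpha}$ directly. The lower bound, however, is genuinely different. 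The paper thresholds the coefficients at $\theta Z$ with $\theta = n^{-2\tau}$, shows the small-coefficient part of $\widehat{g}$ is dominated by the pure-power terms (which survive the threshold), and only then compares $b_{\vec{d}}^{1+\lambda}$ with $b_{\vec{d}}(\theta Z)^\lambda$ on the surviving part; the factor $2n^{2\tau\lambda}$ in the lemma is exactly the price of this two-step comparison. You instead use $(\alpha_i x_i)^\tau = c_i x_i^\tau \le f(\vec{x})$ coordinatewise, hence $\vec{\alpha}^{\vec{d}}\vec{x}^{\vec{d}} \le f(\vec{x})$ for every degree-$\tau$ monomial, which gives the termwise estimate and the stronger clean inequality $\widehat{f} \le f^{1+\lambda}$ with no constant at all. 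This is strictly stronger than what the lemma asserts, needs no convexity beyond the fact, supplied by Lemma~\ref{lemma:lead_term} in the surrounding setup, that each $c_i > 0$, and would marginally simplify the constant bookkeeping in the proofs of Theorem~\ref{thm:coveringpoly}(a) and Corollary~\ref{cor:lp_norm}.
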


\begin{proof}
Define $g : \R_+^n \rightarrow \R$
by $g(\vec{x}) := f(\frac{\vec{x}}{\vec{\alpha}})
= \sum_{\vec{d}} \,  b_{\vec{d}} \cdot \vec{x}^{\vec{d}}$,
where $b_{\vec{d}} = c_{\vec{d}} \cdot (\frac{\vec{1}}{\vec{\alpha}})^{\vec{d}}$.
Since $g$ is obtained from $f$ by scaling each $x_i$,
the convexity of $g$ follows from that of $f$.
In particular, observe that for each $i \in [n]$,
the coefficient of $x_i^\tau$ in $g$
is $b_i = c_i \cdot (\frac{1}{\alpha_i})^\tau = 1$.
Similarly, define $\widehat{g} : \R_+^n \rightarrow \R$
by $\widehat{g}(\vec{x}) := \widehat{f}(\frac{\vec{x}}{\vec{\alpha}})
= \sum_{\vec{d}} \,  b_{\vec{d}} \cdot \vec{x}^{(1+\lambda)\vec{d}}$.
Therefore, it suffices to prove the analogous inequality for $g$ and $\widehat{g}$.

We define $Z := \sum_{\vec{d}} b_{\vec{d}}$.
Observe that $Z = g(\vec{1}) \leq \frac{1}{n} \sum_{i \in [n]} g(n \vec{e}_i) = n^\tau$, where the inequality follows from the convexity of $g$, and the equality follows because $g$ is homogeneous and the coefficient of each $x_i^\tau$ is 1.


Using H\"{o}lder's inequality ($\sum_i s_i t_i \leq (\sum_i s_i^p)^{\frac{1}{p}} \cdot (\sum_i t_i^q)^{\frac{1}{q}}$
with $\frac{1}{p} = \frac{\lambda}{1+\lambda}$ and $\frac{1}{q} = \frac{1}{1 + \lambda}$), we have $g \le (\sum_{\vec{d}}b_{\vec{d}})^{\frac{\lambda}{1 + \lambda}} \cdot \widehat{g}^{\frac{1}{1+\lambda}}$. Thus,
we have the upperbound $g^{\lambda+1} \leq Z^\lambda \cdot \widehat{g} \le n^{\lambda\tau} \cdot \widehat{g}$.

Define $\theta := \frac{1}{n^{2\tau}}$,
and let $A := \{\vec{d} : b_{\vec{d}} \geq \theta Z\}$
index the coefficients that are at least the threshold $\theta Z \leq 1$.  Therefore, all the terms $x_i^\tau$
have coefficients 1 and pass the threshold.

Let $\overline{A}$ be the complement of $A$.
Observe that there are at most $n^\tau$ terms,
and for each $\vec{d}$, $\vec{x}^{(1+\lambda)\vec{d}} \leq \sum_{i\in[n]} x_i^{(1+\lambda)\tau}$.
Hence, 
\[
\sum_{\vec{d} \in \overline{A}} b_{\vec{d}} \cdot \vec{x}^{(1+\lambda)\vec{d}} 
\leq n^\tau \cdot \theta Z \cdot \sum_{i \in [n]} x_i^{(1+\lambda)\tau} \leq \sum_{\vec{d} \in A} b_{\vec{d}} \cdot \vec{x}^{(1+\lambda)\vec{d}} 
\enspace.
\]

Thus, we get that
\[
g(\vec{x})^{1+\lambda} \geq \sum_{\vec{d}} b_{\vec{d}}^{1+\lambda} \cdot  \vec{x}^{(1+\lambda)\vec{d}}
\geq \theta^\lambda Z^\lambda \sum_{\vec{d} \in A} b_{\vec{d}} \cdot \vec{x}^{(1+\lambda)\vec{d}} 
\geq \theta^\lambda Z^\lambda \cdot \frac{1}{2} \cdot \widehat{g}(\vec{x})
\geq \frac{1}{2n^\tau} \cdot \widehat{g}(\vec{x})
\enspace,
\]
as required.
\ignore{

On the other hand, let $Z=\sum_{\vec{d}}c_{\vec{d}}$, then
\begin{eqnarray*}
	f(\vec{x})^{\lambda+1} \ge \sum_{\vec{d}}c_{\vec{d}}^{\lambda+1} \cdot (\vec{x}^{\vec{d}})^{\lambda+1} \ge \sum_{\vec{d}:c_{\vec{d}}n^{4\tau} \ge Z}c_{\vec{d}}(\vec{x}^{\vec{d}})^{\lambda+1}c_{\vec{d}}^{\lambda}
	\ge (\frac{Z}{n^{4\tau}})^\lambda\sum_{\vec{d}:c_{\vec{d}}n^{4\tau} \ge Z}c_{\vec{d}}(\vec{x}^{\vec{d}})^{\lambda+1}
	&\\
	\ge \frac{1}{n^{4\tau\lambda}}\sum_{\vec{d}:c_{\vec{d}}n^{4\tau} \ge Z}c_{\vec{d}}(\vec{x}^{\vec{d}})^{\lambda+1}
\end{eqnarray*}
The third inequality comes from the fact that for each $c_{\vec{d}}$ is at least $\frac{Z}{n^{4\tau}}$.

For the terms with smaller coefficients
\begin{eqnarray*}
\sum_{\vec{d}:c_{\vec{d}}n^{t\tau}<Z}c_{\vec{d}}(\vec{x}^{\vec{d}})^{\lambda+1}
\le \frac{Z}{n^{4\tau}} \sum_{\vec{d}:c_{\vec{d}n^{4\tau}<Z}}(\vec{x}^{\vec{d}})^{\lambda+1}
\le \frac{Z}{n^{4\tau}} \sum_{\vec{d}:c_{\vec{d}n^{4\tau}<Z}}\sum_{i}x_{i}^{\tau(\lambda+1)}
&\\
\le \frac{Z}{n^{3\tau}}\sum_{\vec{d}:c_{\vec{d}}n^{4\tau} \ge Z}c_{\vec{d}}(\vec{x}^{\vec{d}})^{\lambda+1}
\le \frac{1}{n^{2\tau}}\sum_{\vec{d}:c_{\vec{d}}n^{4\tau} \ge Z}c_{\vec{d}}(\vec{x}^{\vec{d}})^{\lambda+1}
\end{eqnarray*}
The third inequality holds because there are at most $n^\tau$ terms where $c_{\vec{d}}n^{4\tau} \ge Z$, and $\sum_{i}x_{i}^{\tau(\lambda+1)} \le \sum_{\vec{d}:c_{\vec{d}}n^{4\tau} \ge Z}c_{\vec{d}}(\vec{x}^{\vec{d}})^{\lambda+1}$, while the last inequality comes from Lemma \ref{lemma:coeffient_bound}.

Combining them together, we get
\begin{eqnarray*}
	\sum_{\vec{d}}c_{\vec{d}}(\vec{x}^{\vec{d}})^{\lambda+1} 
	\le (\frac{1}{n^{2\tau}}+1)\sum_{\vec{d}:c_{\vec{d}}n^{4\tau} \ge Z} c_{\vec{d}} (\vec{x}^{\vec{d}})^{\lambda+1} &\\
	\le 2n^{4\tau\lambda}f(\vec{x})^{\lambda+1}
\end{eqnarray*}

Therefore, the Lemma \ref{lemma:homo_approx} is proved.}
\end{proof}

Putting together, we can use Lemma~\ref{lemma:sharp_convex} to run Algorithm~\ref{alg:covering} on $\widehat{f}$ to obtain an approximation.

\begin{proof}[Theorem \ref{thm:coveringpoly}(a)]
By Lemma~\ref{lemma:sharp_convex}, when Algorithm~\ref{alg:covering}
is run for the function $\widehat{f}$ with 
\mbox{$\rho := \frac{1}{\tau(1+ \lambda)} \cdot \big(\frac{\tau (1+\lambda)}{\lambda} \big)^{\tau (1+\lambda)}$}
and $\lambda := \frac{1}{\log n}$,
the returned primal vector $\vec{x}^{(m)}$
approximates $\widehat{f}^{\frac{1}{1+\lambda}}$
with competitive ratio $\big(\frac{\tau (1+\lambda)}{\lambda} \big)^{\tau} \leq O(\tau \log n)^\tau$.

Furthermore, by Lemma~\ref{lemma:homo_approx},
$f$ and $\widehat{f}^{\frac{1}{1+\lambda}}$
approximate each other with a multiplicative factor of $2^{\frac{1}{1+\lambda}} \cdot n^{\frac{3 \lambda \tau}{1+\lambda}} = O(1)^\tau$.
Therefore, it follows that the
primal vector $\vec{x}^{(m)}$
approximates $f$ with competitive ratio $O(\tau \log n)^\tau$.
\ignore{

satisfies 
\[
\widehat{f}(\vec{x}^{(m)})^{\frac{1}{\lambda+1}} \leq \bigg(\frac{\tau (1+\lambda)}{\lambda} \bigg)^{\tau} \cdot \widehat{f}(\vec{x}_{\widehat{\OPT}})^{\frac{1}{\lambda+1}} = O(\tau \log n)^{\tau} \cdot \widehat{f}(\vec{x}_{\widehat{\OPT}})^{\frac{1}{\lambda+1}} \enspace.
\]

Finally, observing that $\widehat{f}(\vec{x}_{\widehat{\OPT}}) \leq \widehat{f} (\vec{x}_\OPT)$, we have
\begin{eqnarray*}
C(\vec{x}^{(m)}) & 
= & f(\vec{x}^{(m)}) 
\leq n^{\frac{\lambda\tau}{\lambda+1}} \cdot \widehat{f}(\vec{x}^{(m)})^{\frac{1}{1 + \lambda}} 
\leq (e\tau\ln n)^\tau \cdot   \widehat{f}(\vec{x}_{\widehat{\OPT}})^{\frac{1}{1 + \lambda}}
\leq (e\tau\ln n)^\tau \cdot   \widehat{f}(\vec{x}_{\OPT})^{\frac{1}{1 + \lambda}} \\
& \leq & 2^{\frac{1}{\lambda+1}}(e^5\tau\ln n)^\tau \cdot f(\vec{x}_{\OPT}) 
\leq 2(e^5\tau\ln n)^\tau \cdot \copt \enspace, 
\end{eqnarray*}
as required.}
\end{proof}

\begin{corollary}[$\ell_p$-norm of linear functions]
\label{cor:lp_norm}
Suppose there are $l$ vectors $\vec{c}_k \in \R^n_+$ ($k \in [l]$)
such that each $\vec{c}_k$ has at most $d$ non-zero coordinates.
Define $\vec{\lambda}: \R^n_+ \rightarrow \R^l_+$ such that for $k \in [l]$, $\lambda_k(\vec{x}) := \langle \vec{c}_k, \vec{x} \rangle$.
For some $p \geq 2$, define for $k \in [l]$, 
$f_k(\vec{x}) := \lambda_k(\vec{x})^p$.
Define the cost function $f(\vec{x}) := \sum_{k \in [l]} f_k(\vec{x})$.
Then, there is an $O(p \log d)^p$-competitive online covering algorithm
with respect to $f$.  Equivalently,
the competitive ratio is $O(p \log d)$ with respect to the
$\ell_p$-norm $\| \vec{\lambda}(\vec{x}) \|_p$.
\end{corollary}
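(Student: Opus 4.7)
The plan is to mimic the proof of Theorem~\ref{thm:coveringpoly}(a), but apply Lemma~\ref{lemma:homo_approx} to each summand $f_k := \lambda_k^p$ separately, exploiting that $f_k$ depends on only the at most $d$ variables in $\operatorname{supp}(\vec{c}_k)$. Concretely, for each $k \in [l]$ I would regard $f_k$ as a degree-$p$ homogeneous polynomial in the $d$ coordinates of $\operatorname{supp}(\vec{c}_k)$ and invoke Lemma~\ref{lemma:homo_approx} \emph{in dimension $d$} (rather than $n$ or $ld$) to construct a surrogate $\widehat{f}_k$ that is convex, $\lambda$-monotone in those coordinates, has maximum degree $p(1+\lambda)$, and sandwiches $f_k^{1+\lambda}$ as $\tfrac{1}{2 d^{2p\lambda}} \widehat{f}_k \le f_k^{1+\lambda} \le d^{p\lambda} \widehat{f}_k$. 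Set the global surrogate $\widehat{f} := \sum_{k \in [l]} \widehat{f}_k$. Convexity, $\lambda$-monotonicity of the gradient, and the bound $\langle \nabla \widehat{f}(\vec{x}), \vec{x}\rangle \le p(1+\lambda)\widehat{f}(\vec{x})$ all persist under summation, since variables outside $\operatorname{supp}(\vec{c}_k)$ simply do not contribute to $\nabla \widehat{f}_k$ and therefore do not spoil $\lambda$-monotonicity.

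Next, I would apply Lemma~\ref{lemma:sharp_convex} to $\widehat{f}$ with $\tau := p(1+\lambda)$: running Algorithm~\ref{alg:covering} on $\widehat{f}$ with the prescribed $\rho$ and $\vec{L} := \vec{0}$ produces $\vec{x}^{(m)}$ with $\widehat{f}(\vec{x}^{(m)}) \le \bigl(\tfrac{p(1+\lambda)}{\lambda}\bigr)^{p(1+\lambda)} \widehat{f}(\vec{x}^\ast_f)$, where $\vec{x}^\ast_f$ is any offline optimum of the original covering program for $f$. Summing the per-term upper bound yields $\widehat{f} \le 2 d^{2p\lambda} \sum_k f_k^{1+\lambda} \le 2 d^{2p\lambda} f^{1+\lambda}$, using the elementary inequality $\sum_k a_k^{1+\lambda} \le (\sum_k a_k)^{1+\lambda}$, so $\widehat{f}(\vec{x}^\ast_f) \le 2 d^{2p\lambda} \copt^{1+\lambda}$. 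In the reverse direction, $f_k \le (d^{p\lambda} \widehat{f}_k)^{1/(1+\lambda)}$ together with H\"older's inequality $\sum_k \widehat{f}_k^{1/(1+\lambda)} \le l^{\lambda/(1+\lambda)} \widehat{f}^{1/(1+\lambda)}$ gives $f(\vec{x}^{(m)}) \le d^{p\lambda/(1+\lambda)} l^{\lambda/(1+\lambda)} \widehat{f}(\vec{x}^{(m)})^{1/(1+\lambda)}$.

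Choosing $\lambda := 1/\log d$ makes $d^{p\lambda} = O(1)^p$ and $\bigl(\tfrac{p(1+\lambda)}{\lambda}\bigr)^p = O(p \log d)^p$; in the regime of interest (e.g.\ $\log l = O(p\log d)$, which covers the canonical application $p = \log l$ used to approximate the $\ell_\infty$-norm) the residual factor $l^{\lambda/(1+\lambda)}$ is absorbed into $O(p\log d)^p$. Chaining the three inequalities then yields $f(\vec{x}^{(m)}) \le O(p \log d)^p \cdot \copt$, and taking $p$-th roots delivers the $O(p \log d)$ bound for the $\ell_p$-norm $\|\vec{\lambda}(\vec{x})\|_p = f(\vec{x})^{1/p}$. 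The main obstacle is precisely this final translation: because $\widehat{f} = \sum_k \widehat{f}_k$ is the $L_1$-aggregation of the per-term approximations of $f_k^{1+\lambda}$ while $f^{1+\lambda} = (\sum_k f_k)^{1+\lambda}$ behaves as an $L_{1+\lambda}$ quantity, the $L_1$-vs-$L_{1+\lambda}$ gap forces the $l^{\lambda/(1+\lambda)}$ factor, and balancing this against the $(p/\lambda)^p$ contribution from Lemma~\ref{lemma:sharp_convex} is what pins $\lambda$ down to $\Theta(1/\log d)$.
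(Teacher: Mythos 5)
Your construction diverges from the paper at the aggregation step, and this is where a genuine gap appears. You form the surrogate $\widehat{f} := \sum_{k \in [l]} \widehat{f}_k$ and then must compare the $L_1$-aggregate $\sum_k f_k^{1+\lambda}$ with $f^{1+\lambda} = \big(\sum_k f_k\big)^{1+\lambda}$; as you yourself note, H\"older forces an extra factor $l^{\lambda/(1+\lambda)}$, and with $\lambda = \Theta(1/\log d)$ this factor is $l^{\Theta(1/\log d)}$, which is \emph{not} bounded by $O(p\log d)^p$ for general $l$ (take $d = 2$ and $l$ huge: the factor is about $\sqrt{l}$). Your remedy is to restrict to a regime such as $\log l = O(p \log d)$, but Corollary~\ref{cor:lp_norm} is stated unconditionally in $l$, $d$, $p$, so the argument as written does not prove the claimed statement; it proves a weaker, conditional version. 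Everything else in your outline (per-term use of Lemma~\ref{lemma:homo_approx} in dimension $d$, preservation of convexity, $\lambda$-monotonicity and the degree bound under summation, and the invocation of Lemma~\ref{lemma:sharp_convex}) is sound.

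The paper's proof avoids the $l$-dependence precisely by taking the $(1+\lambda)$-th root \emph{before} summing: it defines $h_k := \widehat{f}_k^{1/(1+\lambda)}$, which has the explicit form $\langle \widehat{\vec{c}}_k, \vec{x}^{(1+\lambda)\vec{1}} \rangle^{p/(1+\lambda)}$, and verifies directly that each $h_k$ is convex, satisfies $\langle \nabla h_k(\vec{x}), \vec{x} \rangle = p \cdot h_k(\vec{x})$, and has $\lambda$-monotone gradient (using $p \ge 2$ and $0 < \lambda < 1$); these properties pass to $h := \sum_k h_k$. Lemma~\ref{lemma:sharp_convex} then applies to $h$ with $\tau = p$, giving ratio $(p/\lambda)^p$ with respect to $h$, and Lemma~\ref{lemma:homo_approx} gives a \emph{term-by-term} sandwich $\big(\tfrac{1}{2d^{2\lambda p}}\big)^{\frac{1}{1+\lambda}} h_k \le f_k \le d^{\frac{\lambda p}{1+\lambda}} h_k$, so $f$ and $h$ agree up to $O(1)^p$ with no H\"older step across $k$ and hence no $l$ factor. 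If you replace your $\widehat{f}$ with this $h$, your argument goes through unconditionally; without that change, the final absorption step fails.
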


\begin{proof}
For each $k \in [l]$,
we consider $f_k$ and $\widehat{f}_k$ as defined in Lemma~\ref{lemma:homo_approx}.
Recall that for each $k \in [l]$,
there is some transformation $\varphi_k: \R_+^n \rightarrow \R_+^n$
of the form $\varphi_k(\vec{x})_i := \alpha_i^\lambda \cdot x_i^{1+\lambda}$
such that $\widehat{f}_k(\vec{x}) = f_k(\varphi_k(\vec{x}))$.
Here, we will eventually set $\lambda := \frac{1}{\log d}$.

For each $k \in [l]$, define $h_k := \widehat{f}_k^{\frac{1}{1+\lambda}}$,
which has the form $h_k(\vec{x}) = \langle \, \widehat{\vec{c}}_k, \vec{x}^{(1+\lambda)\vec{1}} \rangle^{\frac{p}{1+\lambda}}$.  One can check readily that
$\langle \nabla h_k(\vec{x}), \vec{x} \rangle = p \cdot h_k(\vec{x})$
and $\nabla h_k$ is $\lambda$-monotone, since $p\geq 2$ and $0 < \lambda < 1$.
Hence, the function $h := \sum_{k \in [l]} h_k$ also satisfies these properties.

By Lemma~\ref{lemma:sharp_convex}, when Algorithm~\ref{alg:covering}
is run for function $h$, the competitive
ratio is $(\frac{p}{\lambda})^p$ with respect to $h$.

Finally, we observe that for each $k \in [l]$,
Lemma~\ref{lemma:homo_approx} states that
$(\frac{1}{2d^{2\lambda p}})^\frac{1}{1+\lambda} \cdot h_k 
\leq f_k \leq d^\frac{\lambda p}{1+\lambda} \cdot h_k$.
Hence, it follows that $f$ and $h$ approximate each other
with a multiplicative factor
of $O(d^\frac{3\lambda p}{1+\lambda}) = O(1)^p$, for $\lambda := \frac{1}{\log d}$.

Hence, it follows that the competitive ratio is $O(p \log d)^p$ with respect to $f$,
as required.
\end{proof}

\paragraph{General polynomials $f$ with Maximum Degree $\tau$.}
For a general convex multivariate polynomial function $f=\sum_{\vec{d}}c_{\vec{d}}x^{\vec{d}}$
with non-negative coefficients and maximum degree,
the previous method does not work if $f$ is non-homogeneous.
For some $0 < \lambda \leq 1$, we use another approximation $\widetilde{f}=\sum_{\vec{d}}c_{\vec{d}}^{\lambda+1}(x^{\vec{d}})^{\lambda+1}$. 
However, $\widetilde{f}$ might not be convex.
Observe that convexity is used crucially in Lemma~\ref{lemma:convexf}(c).  We tackle this issue
by comparing $\widetilde{f}$ with another convex
function $g := f^{1+\lambda}$.  Let $N$ be the number of monomials in $f$ with positive coefficients.  We denote $\widetilde{\tau} := (1+\lambda) \tau$.

\begin{lemma}[Comparing $\widetilde{f}$ and $g$]
\label{lemma:tf_g}
Suppose $\widetilde{f}$ and $g$ are as defined above.
Then, we have:

\begin{enumerate}[topsep=0cm, itemsep=0cm]
	\item $\widetilde{f}$ is differentiable, monotone, and $f(\vec{0})=0$; 	$\nabla\widetilde{f}$ is $\lambda$-monotone;
 $\widetilde{f}^*$ is monotone;
	
		\item $\langle\nabla\widetilde{f}(\vec{x}),\vec{x}\rangle \le \widetilde{\tau} \cdot f({\vec{x}})$;
		
	
\item $\widetilde{f} \le g \le N^{\lambda} \cdot \widetilde{f}$;

\item $\nabla \widetilde{f} \leq \nabla g$;

\item for all $\vec{z} \in \R_+^n$,
$\widetilde{f}^*(\vec{z}) \leq \frac{1}{N^\lambda} \cdot g^*(N^\lambda \cdot \vec{z})$.
\end{enumerate}
\end{lemma}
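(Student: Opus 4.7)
The plan is to verify each item by direct computation on the monomial expansion of $\widetilde{f}$ and $g$, exploiting that $f$ has non-negative coefficients, so that every quantity appearing is non-negative on $\R_+^n$. Throughout, I would let $a_{\vec{d}} := c_{\vec{d}} \vec{x}^{\vec{d}} \ge 0$, so $f(\vec{x}) = \sum_{\vec{d}} a_{\vec{d}}$ and $\widetilde{f}(\vec{x}) = \sum_{\vec{d}} a_{\vec{d}}^{1+\lambda}$, and $g(\vec{x}) = (\sum_{\vec{d}} a_{\vec{d}})^{1+\lambda}$.

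For item (1), differentiability, monotonicity on $\R_+^n$, and $\widetilde{f}(\vec{0}) = 0$ are immediate since $\widetilde{f}$ is a polynomial with non-negative coefficients vanishing at the origin. For $\lambda$-monotonicity, I would note that every appearance of $x_i$ in $\widetilde{f}$ has exponent $(1+\lambda) d_i$ with $d_i \in \Z_{\ge 0}$; after differentiating once and dividing by $x_i^\lambda$ the remaining exponent is $(1+\lambda)(d_i - 1) \ge 0$ whenever $d_i \ge 1$, so $\nabla_i \widetilde{f}(\vec{x}) / x_i^\lambda$ is a polynomial with non-negative coefficients and hence monotone. Monotonicity of $\widetilde{f}^*$ is the standard observation that $\vec{z} \le \vec{z}'$ implies $\langle \vec{x}, \vec{z}\rangle - \widetilde{f}(\vec{x}) \le \langle \vec{x}, \vec{z}'\rangle - \widetilde{f}(\vec{x})$ for every $\vec{x} \ge \vec{0}$, so the supremum grows. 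Item (2) then follows by applying Euler's identity monomial by monomial: each monomial of $\widetilde{f}$ has total degree at most $(1+\lambda)\tau = \widetilde{\tau}$, so $\langle \nabla(c_{\vec{d}}^{1+\lambda}\vec{x}^{(1+\lambda)\vec{d}}), \vec{x}\rangle = (1+\lambda)|\vec{d}| \cdot c_{\vec{d}}^{1+\lambda}\vec{x}^{(1+\lambda)\vec{d}} \le \widetilde{\tau} \cdot c_{\vec{d}}^{1+\lambda}\vec{x}^{(1+\lambda)\vec{d}}$, and we sum over $\vec{d}$.

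For item (3), the lower bound $g \ge \widetilde{f}$ is the elementary inequality $(\sum_i t_i)^p \ge \sum_i t_i^p$ for $p \ge 1$ and $t_i \ge 0$, applied to $p = 1+\lambda$ and $t_{\vec{d}} = a_{\vec{d}}$. The upper bound $g \le N^\lambda \widetilde{f}$ comes from H\"older's inequality with conjugate exponents $1+\lambda$ and $(1+\lambda)/\lambda$: $\sum_{\vec{d}} a_{\vec{d}} \le N^{\lambda/(1+\lambda)} \bigl(\sum_{\vec{d}} a_{\vec{d}}^{1+\lambda}\bigr)^{1/(1+\lambda)}$, where the sum runs over the at most $N$ monomials with positive coefficients; raising to the power $1+\lambda$ gives the claim.

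For item (4), I would compute directly $\nabla_i g(\vec{x}) = (1+\lambda) f(\vec{x})^\lambda \nabla_i f(\vec{x}) = \tfrac{1+\lambda}{x_i} f(\vec{x})^\lambda \sum_{\vec{d}} d_i a_{\vec{d}}$, while $\nabla_i \widetilde{f}(\vec{x}) = \tfrac{1+\lambda}{x_i} \sum_{\vec{d}} d_i a_{\vec{d}}^{1+\lambda}$. Using $a_{\vec{d}} \le \sum_{\vec{d}'} a_{\vec{d}'} = f(\vec{x})$, we have $a_{\vec{d}}^{1+\lambda} = a_{\vec{d}} \cdot a_{\vec{d}}^\lambda \le a_{\vec{d}} \cdot f(\vec{x})^\lambda$, and the coordinate-wise bound follows. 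Finally, item (5) is obtained by plugging the inequality $\widetilde{f}(\vec{x}) \ge g(\vec{x})/N^\lambda$ from (3) into the definition of the conjugate and rescaling: $\widetilde{f}^*(\vec{z}) \le \sup_{\vec{x}\ge\vec{0}}\{\langle\vec{x},\vec{z}\rangle - g(\vec{x})/N^\lambda\} = \tfrac{1}{N^\lambda}\sup_{\vec{x}\ge\vec{0}}\{\langle\vec{x}, N^\lambda \vec{z}\rangle - g(\vec{x})\} = \tfrac{1}{N^\lambda} g^*(N^\lambda \vec{z})$. Overall, each piece is an application of a single standard tool (Euler's identity, power-mean, H\"older, or the definition of the Fenchel conjugate); the only step where one must be careful with exponents is item (4), which I expect to be the main bookkeeping obstacle.
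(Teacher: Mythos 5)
Your proof is correct and follows essentially the same route as the paper's: direct termwise verification for items (1)--(2), superadditivity of $t \mapsto t^{1+\lambda}$ plus H\"older for (3), the coefficient-wise bound $c_{\vec{d}}\,\vec{x}^{\vec{d}} \le f(\vec{x})$ (equivalently $a_{\vec{d}} \le f(\vec{x})$) for (4), and conjugate rescaling for (5). Note that your item (2) establishes the bound with $\widetilde{f}(\vec{x})$ on the right-hand side, which is the intended reading of the statement.
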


\begin{proof}
The first two statements can be verified directly.  
In the third statement, the lower bound is straightforward,
and the upper bound follows from H\"{o}lder's inequality.

We next verify the fourth statement.  
We shall use the inequality that for each $\vec{d}$, $f(\vec{x}) \geq c_{\vec{d}} \cdot \vec{x}^{\vec{d}}$ in the following argument.
For each $i \in [n]$,
\begin{eqnarray*}
\nabla_i g(\vec{x}) & = & (1+\lambda) \cdot f(\vec{x})^\lambda \cdot \nabla_i f(\vec{x}) \\
& = & (1+\lambda) \cdot f(\vec{x})^\lambda \cdot \sum_{\vec{d}: d_i \geq 1} \, c_{\vec{d}} \cdot d_i \cdot \vec{x}^{\vec{d} - \vec{e_i}} \\
& \geq &
\sum_{\vec{d}: d_i \geq 1} \, c_{\vec{d}}^{(1+ \lambda)} \cdot (1+\lambda) d_i \cdot \vec{x}^{(1+\lambda)\vec{d} - \vec{e_i}} \\
& = &  \nabla_i \widetilde{f}(\vec{x}), 
\end{eqnarray*}
as required.

For the fifth statement, we have
$$\widetilde{f}^*(\vec{z}) 
		= \max_{\vec{x} \ge \vec{0}} \big\{ \langle\vec{x},\vec{z}\rangle - \widetilde{f}(\vec{x}) \big\}
		\le \max_{\vec{x} \ge \vec{0}} \big\{ \langle\vec{x},\vec{z}\rangle - \frac{g(\vec{x})}{N^{\lambda}} v\}
		= \frac{1}{N^{\lambda}}\max_{\vec{x} \ge \vec{0}} \big\{ \langle\vec{x},N^{\lambda}\vec{z}\rangle - g(\vec{x}) \big\}
		= \frac{1}{N^{\lambda}} \cdot g^{*}(N^{\lambda}\vec{z}),$$
where the inequality follows from $\widetilde{f} \geq \frac{g}{N^\lambda}$.
\end{proof}

Even though $\widetilde{f}$ might not be convex,
we can still apply Algorithm~\ref{alg:covering} on it.

\begin{lemma}\label{lemma:f_end}
If we apply Algorithm \ref{alg:covering} on $\tilde{f}$, then at the end of round $m$, we have:
\[
\widetilde{f}^{*}(\vec{z}^{(m)}) \le \big( \frac{N^{\lambda}}{\rho\lambda} \big)^{\frac{\widetilde{\tau}}{\widetilde{\tau}-1}} \cdot (\widetilde{\tau}-1) \cdot \widetilde{f}(\vec{x}^{(m)})
\enspace,
\]
where $\rho$ is chosen such that $\frac{N^\lambda}{\rho \lambda} \leq 1$.
\end{lemma}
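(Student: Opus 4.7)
The plan is to bridge between $\widetilde{f}$ and $g$ using the five properties established in Lemma~\ref{lemma:tf_g}. The main obstacle is that $\widetilde{f}$ need not be convex, so the crucial Lemma~\ref{lemma:convexf}(c), which bounds $f^*(\gamma \cdot \nabla f(\vec{x}))$, cannot be applied to $\widetilde{f}$ directly. I will instead apply it to $g = f^{1+\lambda}$ (which is convex, since $f \geq 0$ is convex and $t \mapsto t^{1+\lambda}$ is convex and non-decreasing on $\R_+$), and transfer the resulting bound back to $\widetilde{f}^*$.

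First I would invoke Lemma~\ref{lemma:boundz2} on the function $\widetilde{f}$ actually fed to Algorithm~\ref{alg:covering}. Its hypothesis is satisfied because $\nabla \widetilde{f}$ is $\lambda$-monotone by Lemma~\ref{lemma:tf_g}(1), and crucially Lemma~\ref{lemma:boundz2} does not require convexity. This yields $\vec{z}^{(m)} \leq \tfrac{1}{\lambda \rho} \cdot \nabla \widetilde{f}(\vec{x}^{(m)})$, which by Lemma~\ref{lemma:tf_g}(4) can be upgraded to $\vec{z}^{(m)} \leq \tfrac{1}{\lambda \rho} \cdot \nabla g(\vec{x}^{(m)})$.

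Next, using monotonicity of $\widetilde{f}^*$ (Lemma~\ref{lemma:tf_g}(1)) followed by the conjugate comparison of Lemma~\ref{lemma:tf_g}(5), I would convert from $\widetilde{f}^*$ to $g^*$:
\[
\widetilde{f}^*(\vec{z}^{(m)}) \;\leq\; \widetilde{f}^*\!\left( \tfrac{1}{\lambda \rho} \nabla g(\vec{x}^{(m)}) \right) \;\leq\; \tfrac{1}{N^\lambda} \cdot g^*\!\left( \tfrac{N^\lambda}{\lambda \rho} \nabla g(\vec{x}^{(m)}) \right).
\]
Now I would apply Lemma~\ref{lemma:convexf}(c) to the convex function $g$ with parameter $\widetilde{\tau} = (1+\lambda)\tau$; the hypothesis holds because $\langle \nabla g(\vec{x}), \vec{x} \rangle = (1+\lambda) f(\vec{x})^\lambda \langle \nabla f(\vec{x}), \vec{x} \rangle \leq (1+\lambda) \tau f(\vec{x})^{1+\lambda} = \widetilde{\tau} \cdot g(\vec{x})$. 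Setting $\gamma := \tfrac{N^\lambda}{\lambda \rho}$, which lies in $(0,1]$ by the choice of $\rho$, the lemma gives $g^*(\gamma \nabla g(\vec{x}^{(m)})) \leq \gamma^{\widetilde{\tau}/(\widetilde{\tau}-1)} (\widetilde{\tau} - 1) \cdot g(\vec{x}^{(m)})$.

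Finally I would substitute and apply the envelope $g \leq N^\lambda \cdot \widetilde{f}$ from Lemma~\ref{lemma:tf_g}(3), so that the two factors of $N^\lambda$ exactly cancel:
\[
\widetilde{f}^*(\vec{z}^{(m)}) \;\leq\; \tfrac{1}{N^\lambda} \cdot \left( \tfrac{N^\lambda}{\lambda \rho} \right)^{\widetilde{\tau}/(\widetilde{\tau}-1)} (\widetilde{\tau} - 1) \cdot N^\lambda \widetilde{f}(\vec{x}^{(m)}) \;=\; \left( \tfrac{N^\lambda}{\lambda \rho} \right)^{\widetilde{\tau}/(\widetilde{\tau}-1)} (\widetilde{\tau} - 1) \cdot \widetilde{f}(\vec{x}^{(m)}),
\]
which is exactly the claimed bound. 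The only non-routine step is ensuring all five parts of Lemma~\ref{lemma:tf_g} fit together in the right order: the $\lambda$-monotonicity bounds $\vec{z}^{(m)}$, the gradient inequality swaps $\widetilde{f}$ for $g$ inside $\nabla$, the conjugate inequality swaps $\widetilde{f}^*$ for $g^*$ at the price of a factor $N^\lambda$, and the envelope inequality restores $\widetilde{f}$ at the price of another factor $N^\lambda$ — the two factors being conjugate to one another is what makes the argument clean.
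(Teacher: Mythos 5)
Your proof is correct and follows essentially the same route as the paper: bound $\vec{z}^{(m)}$ via the non-convex Lemma~\ref{lemma:boundz2}, pass from $\nabla\widetilde{f}$ to $\nabla g$ and from $\widetilde{f}^*$ to $g^*$ via Lemma~\ref{lemma:tf_g}, apply Lemma~\ref{lemma:convexf} to the convex function $g$, and return to $\widetilde{f}$ using $g \le N^\lambda \widetilde{f}$. Your explicit check that $\langle \nabla g(\vec{x}), \vec{x} \rangle \le \widetilde{\tau}\, g(\vec{x})$ is a detail the paper leaves implicit, but otherwise the arguments coincide.
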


\begin{proof}
Observe that
Lemma \ref{lemma:boundz2} does not require convexity,
and hence we still have
\[
\vec{z}^{(m)} \le \frac{\nabla\widetilde{f}(\vec{x}^{(m)})}{\rho\lambda}
\leq \frac{\nabla g(\vec{x}^{(m)})}{\rho\lambda}
\enspace.
\] 

Since $\widetilde{f}$ might not be convex,
we compare $\widetilde{f}$ with $g$, and apply
Lemma~\ref{lemma:convexf} to $g$ in the following.
\begin{eqnarray*}
\widetilde{f}^*(\vec{z}^{(m)}) 
& \le & \widetilde{f}^*(\frac{1}{\rho \lambda} \cdot \nabla g(\vec{x}^{(m)})) 
\leq
 \frac{1}{N^{\lambda}} \cdot g^{*} \left(\frac{N^\lambda}{\rho \lambda} \cdot \nabla g(\vec{x}^{(m)}) \right) \\
& \leq & 
\frac{1}{N^{\lambda}} \cdot
\left(\frac{N^{\lambda}}{\rho\lambda} \right)^{\frac{\widetilde{\tau}}{\widetilde{\tau}-1}} \cdot (\widetilde{\tau}-1)
\cdot g(\vec{x}^{(m)})
\leq
\left(\frac{N^{\lambda}}{\rho\lambda} \right)^{\frac{\widetilde{\tau}}{\widetilde{\tau}-1}} \cdot (\widetilde{\tau}-1) \cdot \widetilde{f}(\vec{x}^{(m)}),
\end{eqnarray*}
where $\frac{N^\lambda}{\rho \lambda} \leq 1$ is needed for Lemma~\ref{lemma:convexf}.
\end{proof}
%
%

\begin{proof}[Theorem \ref{thm:coveringpoly}(b)]
When we apply Algorithm~\ref{alg:covering}
to $\widetilde{f}$, 
from Lemma \ref{lemma:yandf}, 
at the end we have
$\sum_{j \in [m]}y_{j} \ge \frac{1}{\rho} \cdot \widetilde{f}(\vec{x}^{(m)})$. 
Combining with Lemma~\ref{lemma:f_end}, we have:
\begin{eqnarray*}
P(\vec{y}) = \sum_{j \in [m]}y_{j} - \widetilde{f}^*(\vec{z}^{(m)})
\ge \frac{1}{\rho} \cdot \widetilde{f}(\vec{x}^{(m)}) -
\left(\frac{N^{\lambda}}{\rho\lambda} \right)^{\frac{\widetilde{\tau}}{\widetilde{\tau}-1}} \cdot (\widetilde{\tau}-1) \cdot \widetilde{f}(\vec{x}^{(m)}).
\end{eqnarray*}

We pick $\rho := (\frac{N^{\lambda}}{\lambda})^{\widetilde{\tau}} \cdot \widetilde{\tau}^{\widetilde{\tau}-1}$.
Then, $\frac{N^\lambda}{\rho \lambda} = \left( \frac{\lambda}{\widetilde{\tau} N^\lambda} \right)^{\widetilde{\tau}-1} \leq 1$,
and we have
$P(\vec{y}) \ge \left( \frac{\lambda}{\widetilde{\tau} N^\lambda} \right)^{\widetilde{\tau}} \cdot \widetilde{f}(\vec{x}^{(m)})$.

We pick $\lambda := \frac{1}{\log N}$.
Hence, the vector $\vec{x}^{(m)}$
is $O(\tau \log N)^\tau$-competitive
with respect to $\widetilde{f}^{\frac{1}{1+\lambda}}$.

Finally, observing that $f$ and $\widetilde{f}^{\frac{1}{1+\lambda}}$
approximate each other with a multiplicative factor of $N^\frac{\lambda}{1 + \lambda} = O(1)$,
we conclude that $\vec{x}^{(m)}$ is also
$O(\tau \log N)^\tau$-competitive with respect to $f$.
%
%
%
%
%
%
%
\end{proof}

\subsection{Primal Feasibility}  
\label{sec:mono_strict}

Observe that in Algorithm~\ref{alg:covering},
in round $k \in [m]$,
each $x_i$ is increased according to the rule 
\mbox{$\frac{d x_i}{d y_k} = \frac{\rho a_{ki} x_i}{\nabla_i f(\vec{x})}$}.
So far we have assumed that if $a_{ki} > 0$, then $x_i$ will be
increased as $y_k$ increases to makes sure the covering constraint can be satisfied (i.e., $\sum_{i = 1}^n a_{ki} x_i \ge 1$).  We now resolve some technical issues concerning the increase of $\vec{x}$.

\begin{itemize}[leftmargin=.5cm, topsep=0cm]
\item What happens if $\nabla_i f(\vec{x}) = 0$?
Indeed, if $f$ is a polynomial such that every $x_i$ occurs with degree at least $2$, then $\nabla_i f(\vec{0}) = 0$.
Suppose $i \in [n]$ is such that $a_{ki} > 0$
and $\nabla_i f(\vec{x}) = 0$.  This means that currently
we can increase $x_i$ such that the rate of change in $f$ is 0.
If there exists some $\epsilon > 0$ such that for all $t \in [0, \epsilon]$, $\nabla_i f(\vec{x} + t \vec{e}_i) = 0$,
then we can increase $x_i$ by $\epsilon$ without changing the value of $f$.  Therefore, we can increase $x_i$ for each such $i$, and assume that for all $i \in [n]$ such that $a_{ki} > 0$,
for all $\epsilon > 0$, $\nabla_i f(\vec{x} + \epsilon \vec{e}_i) > 0$.
This assumption is used in Lemma~\ref{lemma:mono_strict}.
\item If $\vec{x}$ is initialized to $\vec{0}$,
then $\vec{x} \equiv \vec{0}$ is a solution to 
$\frac{d x_i}{d y_k} = \frac{\rho a_{ki} x_i}{\nabla_i f(\vec{x})}$.
One way to resolve this is to initialize $\vec{x} := \vec{L}$ such that each coordinate has some small non-zero value to give each $x_i$ some non-zero momentum to increase.  
However, this introduce an extra additive error term $C(\vec{L})$ as in Theorem~\ref{thm:coveringgeneral}. 
We formally explain this approach in Lemma \ref{lem:mono_strict_general}.

In order to keep a multiplicative competitive ratio while initializing $\vec{x} = \vec{0}$ as in Theorem \ref{thm:coveringpoly}, we show in Lemma~\ref{lemma:mono_strict} that our $\lambda$-monotone condition on $\nabla f$ ensures that $x_i$ can be increased strictly as $y_k$ increases.  This is analogous to the situation that
the differential equation $\frac{d x}{d t} = \sqrt{x}$ with initial condition $x(0) = 0$ has a trivial zero solution,
but what we need is the existence of a non-trivial solution $x(t) = \frac{t^2}{4}$.
\end{itemize}

\begin{lemma}[$y_k$ will stop increasing---general case]
\label{lem:mono_strict_general}
Consider round $k \in [m]$ and suppose for some $i \in [n]$, \mbox{$a_{ki} > 0$}.  
Suppose further that at the beginning of round $k$, the $i$-th coordinate $x_i$ takes some non-zero value $L_i$.  
Then, eventually the constraint $\langle \vec{a}_k, \vec{x} \rangle \geq 1$ will be satisfied, and $y_k$ will stop increasing.
\end{lemma}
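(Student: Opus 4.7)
The plan is to focus on a single coordinate $i^*$ satisfying the hypotheses of the lemma ($a_{ki^*}>0$ and $x_{i^*}(0)=L_{i^*}>0$ at the beginning of round $k$) and show that this one coordinate alone is driven up at least exponentially fast, so that in finite $y_k$-time the product $a_{ki^*}x_{i^*}$ reaches $1$ and the covering constraint $\langle\vec{a}_k,\vec{x}\rangle\geq 1$ is triggered.

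Suppose, for the sake of contradiction, that during round $k$ the covering constraint is never satisfied, so $y_k$ can be made arbitrarily large. While the constraint is violated we have $a_{ki^*}\,x_{i^*}\leq\sum_i a_{ki}x_i<1$, which gives the bound $x_{i^*}\leq 1/a_{ki^*}$; analogously, every coordinate with $a_{ki}>0$ stays bounded, and coordinates with $a_{ki}=0$ are not updated at all (since $\frac{dx_i}{dy_k}=\rho a_{ki}x_i/\nabla_i f(\vec{x})=0$). Therefore the whole trajectory $\vec{x}(y_k)$ lies in a bounded, closed region $K$. By continuity, $\nabla_{i^*}f$ attains a finite supremum $M<\infty$ on $K$.

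For any portion of the trajectory on which $\nabla_{i^*}f(\vec{x})>0$, the update rule together with $\nabla_{i^*}f(\vec{x})\leq M$ gives the differential inequality
\[
\frac{dx_{i^*}}{dy_k}\;=\;\frac{\rho\, a_{ki^*}\,x_{i^*}}{\nabla_{i^*}f(\vec{x})}\;\geq\;\frac{\rho\, a_{ki^*}}{M}\,x_{i^*}\,,
\]
which integrates to $x_{i^*}(y_k)\geq L_{i^*}\cdot\exp\!\bigl(\tfrac{\rho a_{ki^*}}{M}\,y_k\bigr)$. But the right-hand side exceeds $1/a_{ki^*}$ once $y_k>\tfrac{M}{\rho a_{ki^*}}\ln\tfrac{1}{a_{ki^*}L_{i^*}}$, contradicting $x_{i^*}\leq 1/a_{ki^*}$. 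Hence the covering constraint must be met within this finite $y_k$, and the inner while-loop terminates.

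The only delicate point is the treatment of the zero set of $\nabla_{i^*}f$, where the defining ODE is formally singular. This is precisely what the convention described in the bullet points preceding the lemma handles: whenever $\nabla_{i^*}f(\vec{x})=0$, we advance $x_{i^*}$ instantaneously along $\vec{e}_{i^*}$ through the maximal interval on which $\nabla_{i^*}f$ stays zero, with no change in $y_k$. Such jumps either carry $x_{i^*}$ directly up to $1/a_{ki^*}$ and terminate the round at once, or deposit us at a point where $\nabla_{i^*}f>0$ along $\vec{e}_{i^*}$, at which the exponential-growth argument above resumes. Either way, a finite $y_k$ suffices to satisfy the round-$k$ constraint, completing the proof.
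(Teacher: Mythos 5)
Your proof is correct and follows essentially the same route as the paper's: bound the trajectory of $\vec{x}$ during round $k$ by a fixed vector, bound $\nabla_{i^*} f$ along it, and conclude that $x_{i^*}$ grows at a positive rate in $y_k$ until the covering constraint is met. The only cosmetic differences are that the paper bounds the gradient via the standing assumption that $\nabla f$ is monotone (so $\nabla_{i^*} f(\vec{x}) \le \nabla_{i^*} f(\vec{U})$, avoiding your appeal to continuity of the gradient, which is not explicitly assumed) and uses $x_{i^*} \ge L_{i^*}$ to get a constant (linear-growth) lower bound instead of your exponential one; both yield the same conclusion.
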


\begin{proof}  Observe that for every $i \in [n]$ such that
$a_{ki} > 0$, round $k$ finishes as soon as $x_i$ gets at least $\frac{1}{a_{ki}}$.  Hence, we have an upper bound $\vec{U}$ for the vector $\vec{x}$ throughout round $k$.  In particular,
since $\nabla f$ is monotone, we have $\nabla f(\vec{x}) \leq \nabla f(\vec{U})$.

Therefore, for each $i \in [n]$ such that $a_{ki} > 0$
and $x_i$ is initialized to some $L_i > 0$, we have

\noindent $\frac{d x_i}{d y_k} = \frac{\rho a_{ki} x_i}{\nabla_i f(\vec{x})}
\geq \frac{\rho a_{ki} L_i}{\nabla_i f(\vec{U})}$, which is some positive constant independent of $y_k$. It follows that if $y_k$ keeps on increasing, eventually $x_i$ will reach at least $\frac{1}{a_{ki}}$ and $y_k$ will stop increasing.
\end{proof}


\begin{lemma}[$y_k$ will stop increasing---$\lambda$-monotone case]
\label{lemma:mono_strict}
Consider round $k \in [m]$ and suppose for some $i \in [n]$, \mbox{$a_{ki} > 0$}.  
Suppose further that at the beginning of round $k$, the primal vector takes value $\vec{x}^{(0)}$ such that for all $\epsilon > 0$, $\nabla_i f(\vec{x}^{(0)} + \epsilon \vec{e}_i) > 0$.
Then, eventually the constraint $\langle \vec{a}_k, \vec{x} \rangle \geq 1$ will be satisfied, and $y_k$ will stop increasing.
\end{lemma}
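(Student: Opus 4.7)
The plan is to reduce to the case when the relevant coordinate $x_i^{(0)}$ equals zero (if $x_i^{(0)} > 0$, then Lemma~\ref{lem:mono_strict_general} applies directly, since the assumption $\nabla_i f(\vec{x}^{(0)} + \epsilon \vec{e}_i) > 0$ for all $\epsilon > 0$ together with monotonicity gives $\nabla_i f(\vec{x}) > 0$ on the relevant region). The argument will mirror the general case, but will use the $\lambda$-monotonicity of $\nabla f$ to rewrite the update rule in a form that reveals a non-trivial solution even when $x_i^{(0)} = 0$.

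First I would bound $\vec{x}$ throughout round $k$. Since the round terminates as soon as $x_i \geq 1/a_{ki}$, the vector $\vec{x}$ is confined to a compact box $\vec{x} \leq \vec{U}$, where $U_i := 1/a_{ki}$ for coordinates $i$ with $a_{ki} > 0$, and $U_j := x_j^{(0)}$ for coordinates $j$ with $a_{kj} = 0$ (which cannot change because $\frac{dx_j}{dy_k} = 0$). By the hypothesis $\nabla_i f(\vec{x}^{(0)} + \epsilon \vec{e}_i) > 0$ together with the monotonicity of $\nabla_i f$, we have $\nabla_i f(\vec{U}) > 0$, so the quantity
\[
M \;:=\; \frac{\nabla_i f(\vec{U})}{U_i^\lambda}
\]
is finite and positive. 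Applying $\lambda$-monotonicity to any $\vec{x} \leq \vec{U}$ with $x_i > 0$ gives $\nabla_i f(\vec{x}) \leq M \cdot x_i^\lambda$.

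Substituting this into the update rule yields the key differential inequality
\[
\frac{dx_i}{dy_k} \;=\; \frac{\rho\, a_{ki}\, x_i}{\nabla_i f(\vec{x})} \;\geq\; \frac{\rho\, a_{ki}}{M} \cdot x_i^{1-\lambda}.
\]
Rewriting as $\frac{d(x_i^{\lambda})}{dy_k} = \lambda x_i^{\lambda-1} \frac{dx_i}{dy_k} \geq \frac{\lambda \rho a_{ki}}{M}$ and integrating over $y_k$ starting from $0$ gives $x_i(y_k)^\lambda \geq \frac{\lambda \rho a_{ki}}{M}\, y_k$, i.e.,
\[
x_i(y_k) \;\geq\; \Bigl(\tfrac{\lambda \rho a_{ki}}{M}\, y_k\Bigr)^{1/\lambda}.
\]
Hence $x_i$ reaches $1/a_{ki}$ at some finite value of $y_k$, at which point the constraint $\langle \vec{a}_k, \vec{x}\rangle \geq 1$ is satisfied and $y_k$ stops.

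The main subtlety I expect is resolving the non-uniqueness of the ODE at $x_i^{(0)} = 0$: when $\nabla_i f(\vec{x}^{(0)}) = 0$, the update rule $\frac{dx_i}{dy_k} = \frac{\rho a_{ki} x_i}{\nabla_i f(\vec{x})}$ is formally indeterminate at the start, and $x_i \equiv 0$ is consistent with the equation, exactly as in the illustrative example $\tfrac{dx}{dt}=\sqrt{x}$ mentioned in the excerpt. This must be handled as an algorithmic specification: when such an indeterminacy arises, the algorithm selects the non-trivial trajectory guaranteed by the inequality above (which can be made rigorous by perturbing $x_i$ up by a small $\delta > 0$, applying Lemma~\ref{lem:mono_strict_general}, and observing that the bound on $y_k$ obtained from the integrated inequality $x_i(y_k) \geq (\tfrac{\lambda \rho a_{ki}}{M} y_k)^{1/\lambda}$ is uniform in $\delta$ and passes to the limit $\delta \to 0$). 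Once this interpretive point is in place, the rest of the argument is a direct calculation.
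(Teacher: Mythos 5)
Your proposal is correct and follows essentially the same route as the paper's proof: bound $\vec{x}\leq\vec{U}$ during round $k$, use $\lambda$-monotonicity to get $\nabla_i f(\vec{x})\leq \tfrac{\nabla_i f(\vec{U})}{U_i^\lambda}\,x_i^\lambda$, integrate the resulting differential inequality for $x_i^\lambda$, and conclude $x_i$ grows at least like $y_k^{1/\lambda}$ so the constraint is eventually satisfied. Your closing discussion of the non-uniqueness at $x_i^{(0)}=0$ matches the paper's remarks preceding the lemma, so no gap remains.
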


\begin{proof}
Observe that for $i \in [n]$, $x_i$ will be increased
to at most $\frac{1}{a_{ki}}$; if $x_i$ is already at least
$\frac{1}{a_{ki}}$ at the beginning of round $k$,
then round $k$ is finished immediately.
Hence, we can obtain an upper bound $\vec{U}$ for the vector $\vec{x}$
throughout round $k$.

We have $\frac{d x_i}{d y_k} = \frac{\rho a_{ki} x_i}{\nabla_i f(\vec{x})}$.  Since $\nabla_i f(\vec{x}) > 0$ when the $i$th coordinate is increased,  we can write
\[
\textstyle
\rho a_{ki} = \frac{\nabla_i f(\vec{x})}{x_i} \cdot \frac{d x_i}{d y_k} \leq \frac{\nabla_i f(\vec{U})}{U_i^\lambda} \cdot x_i^{\lambda - 1} \cdot \frac{d x_i}{d y_k} \enspace,
\]
where the inequality holds because $\nabla f$ is $\lambda$-monotone.

Integrating with respect to $y_k$ as $y_k$ increases from 0 to $t$,
we have
$\rho a_{ki} t  \leq \frac{\nabla_i f(\vec{U})}{\lambda \cdot U_i^\lambda} \cdot (x_i(t)^\lambda - x_{0i}^\lambda)$.

Hence, $x_i(t) \geq (\frac{\lambda \rho a_{ki} U_i^\lambda}{\nabla_i f(\vec{U})} \cdot t + x_{0i}^\lambda)^{\frac{1}{\lambda}}$
increases strictly as a function of $t$.

Moreover, observing that the last expression tends to $+\infty$ as $t$ tends to $+\infty$, we can conclude that eventually the covering constraint $\sum_{i\in[n]} a_{ki} x_i \geq 1$ will be satisfied.
\end{proof}

\ignore{

\hubert{resume editing from here....}

\begin{lemma}
\label{lemma:polynomial_gradient_ratio}
Given $\tau>1$, $f(\delta\vec{x})=\delta^{\tau}f(\vec{x})$ holds for any $\delta>1$ and $\vec{x} \in R_{+}^{n}$ if and only if $\langle \nabla f(\vec{x}),\vec{x} \rangle = \tau f(\vec{x})$ for any $\vec{x} \in R_{+}^{n}$.
\end{lemma}

\begin{proof}
	We prove one direction first. If $f(\delta\vec{x})=\delta^{\tau}f(\vec{x})$ always holds, then take differentiation of both sides over $\vec{x}$, we get 
	\begin{align*}
	\nabla f(\delta\vec{x})=\delta^{\tau - 1}\nabla f(\vec{x})
	\end{align*}
	Take differentiation of the same equation over $\delta$, we get
	\begin{align*}
	\langle\nabla f(\delta\vec{x}),\vec{x}\rangle = \tau\delta^{\tau-1}f(\vec{x})
	\end{align*}
	Therefore
	\begin{align*}
	\langle\nabla f(\vec{x}),\vec{x}\rangle 
	&= \frac{\langle\nabla f(\delta\vec{x}),\vec{x}\rangle}{\delta^{\tau-1}} \\
	&= \tau f(\vec{x})
	\end{align*}
	
	Then we prove the other direction. If $\langle\nabla f(\vec{x}),\vec{x}\rangle=\tau f(\vec{x})$, then let $g(\gamma)=\ln f(\gamma\vec{x})$. Differentiate both sides over $\gamma$, we have
	\begin{align*}
		dg(\gamma) &= \frac{\langle \nabla f(\gamma\vec{x}), \vec{x} \rangle}{f(\gamma\vec{x})}d\gamma \\
		&= \frac{\langle \nabla f(\gamma\vec{x}), \gamma\vec{x} \rangle}{\gamma f(\gamma\vec{x})}d\gamma \\
		&= \frac{\tau}{\gamma}d\gamma
	\end{align*}
	
	Taking integration of $\gamma$ from $1$ to $\delta$, we have $g(\delta)-g(1)=\tau(\ln \delta-\ln 1)=\tau\ln\delta$. Therefore, $f(\delta\vec{x})=\delta^{\tau}f(\vec{x})$.
\end{proof}

\begin{corollary}
	\label{corollary:ratio_inequality}
	Given $\tau>1$, if $\langle \nabla f(\vec{x}),\vec{x} \rangle < \tau f(\vec{x})$ for any $\vec{x} \in R_{+}^{n}$, then it always holds that $f(\delta\vec{x})<\delta^{\tau}f(\vec{x})$ for any $\delta>1$.
\end{corollary}

This corollary holds as the proof of the second direction of Lemma \ref{lemma:polynomial_gradient_ratio} still holds when changing the equal sign to inequality.

\begin{corollary}
	For any $\delta>0$, if it always satisfies that $f(\delta\vec{x})=\delta^{\tau}f(\vec{x})$, where $\tau>1$, then $f^{*}(\nabla f(\vec{x}))=(\tau-1)f(\vec{x})$.
\end{corollary}

\begin{proof}
	According to the definition of convex conjugate, $f^{*}(\nabla f(\vec{a}))=\max_{\vec{x} \in \R_+^n} \big\{\langle\vec{x},\nabla f(\vec{a})\rangle - f(\vec{x}) \big\}$. Take $h(\vec{x})=\big\{\langle\vec{x},\nabla f(\vec{a})\rangle - f(\vec{x}) \big\}$, then $\nabla h(\vec{x})=\nabla f(\vec{a}) - \nabla f(\vec{x})$. As $\nabla f(\vec{x})$ is monotone increasing, $h(\vec{x})$ achieves its largest value when $\nabla h(\vec{x}) = \vec{0}$. Set $\vec{x}=\vec{a}$, then $f^{*}(\nabla f(\vec{a}))=\langle\vec{a},\nabla f(\vec{a})\rangle - f(\vec{a})$. Therefore, $f^{*}(\nabla f(\vec{x}))=\langle\vec{x},\nabla f(\vec{x})\rangle - f(\vec{x})=(\tau-1)f(\vec{x})$.
\end{proof}

Note that $f(\delta \vec{x})=\delta^{\tau}f(\vec{x})$ satisfies when $f(\vec{x})$ is a polynomial with every term degree-$\tau$. Due to the convexity of function $f$, $\frac{\nabla_{i}f(\vec{x})}{x_i^q}$ is always monotone increasing when $q=0$. In order to get a better approximation ratio for degree-$\tau$ polynomial case, we need to find an appropriate function $g$.

\begin{lemma}
	If $f(\vec{x})$ is a homogeneous polynomial function of degree-$\tau$, then this algorithm is $(e\tau\ln n)^\tau$ competitive.
\end{lemma}

\begin{proof}
	Denote $f(\vec{x})$ as $\sum\limits_{\tau_1, \dots, \tau_n : \sum\limits_{i=1}^{n} \tau_i = \tau} a_{\tau_1, \dots, \tau_n} x^{\tau_1} \dots x^{\tau_n}$, where $\tau_{i}\geq 0$ for all $i$, and $f(\vec{x})$ is a convex function, then we first show that for any $i$, either $\tau_{i}\geq 1$ or $\tau_{i}=0$.
	
	Assume $0<\tau_{i}<1$, then $\nabla_{i}^{2}f(\vec{x})$ must contain one term $-\frac{a}{x_{i}^{2-\tau_{i}}}$, where $a$ is a constant. This term tends to $-\infty$ when $x_{i}\rightarrow 0^{+}$. Therefore it contradicts the condition that $f(\vec{x})$.
	
	Take $g(\vec{x})=\sum\limits_{\tau_1, \dots, \tau_n : \sum\limits_{i=1}^{n} \tau_i = \tau} a_{\tau_1, \dots, \tau_n}^{1+q} x^{\tau_1(1+q)} \dots x^{\tau_n(1+q)}$, then $\frac{\nabla_{i}g(\vec{x})}{x_{i}^{q}}$ is monotone increasing as either $\tau_{i}\geq 1$ or $\tau_{i}=0$. According to Lemma \ref{lemma:sharp_convex}, we can achieve the approximation ratio of $(\frac{\tau(q+1)}{q})^{\tau(q+1)}$ if we want to minimize $g(\vec{x})$.
	
	Assume $\tau_{i}$ is natural number for all $i$, then there are at most $n^{\tau}$ terms. Therefore
	\begin{align*}
		(n^\tau)^{-q}f(\vec{x})^{1+q} \leq g(\vec{x}) \leq f(\vec{x})^{1+q}
	\end{align*}
	
	In terms of $f(\vec{x})$, and let $C(\vec{x})$ and $P(\vec{y})$ be the primal and dual of approximating $g(\vec{x})$, we have
	\begin{align*}
		P(\vec{y})^{\frac{1}{1+q}} \leq g(\vec{x})^{\frac{1}{1+q}} \leq f(\vec{x}) \leq g(\vec{x})^{\frac{1}{1+q}}n^{\frac{q\tau}{1+q}} \leq C(\vec{x})^{\frac{1}{1+q}}n^{\frac{q\tau}{1+q}}
	\end{align*}
	
	Therefore, the approximation ratio we can get is $n^{\frac{q\tau}{1+q}}(\frac{C(\vec{x})}{P(\vec{y})}) \leq (\frac{\tau(q+1)}{q})^\tau n^{\frac{q\tau}{1+q}}$.
	
	Take $q$ such that $\frac{q+1}{q}=\ln n$, then $(\frac{\tau(q+1)}{q})^\tau n^{\frac{q\tau}{1+q}}=(e\tau\ln n)^\tau$, thus the algorithm is $(e\tau\ln n)^\tau$ competitive.
\end{proof}

}

\section{Convex Online Packing}
\label{sec:packing}

For the convex online packing problem, we will make explicit assumptions on the packing cost function $f^*$ and exploit them to design and analyze our online algorithms.
Hence, even though we use the same notation, it is more natural to treat $\vec{y}$ as the primal vector and $\vec{x}$ as the dual vector.
Again, we start with some regularity assumptions that we have discussed in the preliminary.

\ignore{
By exchanging the primal and dual, of convex online cover problem, we get the notation of convex online packing problem.

\begin{minipage}[t]{0.45\textwidth}
	\begin{align*}
		\text{(Dual) min } & f(\vec{x})  \\
		\text{subject to: } & A\vec{x} \geq \vec{1} \\
		& \vec{x} \in \R_+^n
	\end{align*}	
\end{minipage}
\hfill\vrule\hfill
\begin{minipage}[t]{0.45\textwidth}
	\begin{align*}
		\text{(Primal) max } & \langle\vec{y},\vec{1}\rangle - f^{*}(\vec{z}) \\
		\text{subject to: } & A^{T}\vec{y} = \vec{z} \\
		& \vec{y} \in \R_+^m \\
		& \vec{z} \in \R_+^n
	\end{align*}
\end{minipage}

For the sake of consistency, we stick to use $C(\vec{x})$ and $P(\vec{y})$.

}


\begin{assumption}
\label{assumption:function_property}
The function $f^*$ is convex, differentiable, and monotone, and $f^*(\vec{0})=0$. 
\end{assumption}

Similar to the covering case, we will further assume that the marginal cost is monotone.
Note that the marginal packing cost function, i.e., $\nabla f^*$, being monotone is not the same as that the marginal covering cost function, i.e., $\nabla f$, being monotone.

\begin{assumption}
The gradient $\nabla f^*$ is monotone.
\end{assumption}

Note that there are instances for which the offline convex packing problem whose objective can be arbitrarily large. 
Consider, for example, a convex packing problem with a linear cost function $f^*(\vec{z}) = \tfrac{1}{2n} \sum_{i \in [n]} z_i$ and the first request is $\vec{a}_1 = \vec{1}$.
Then, by letting $y_1$ to be arbitrarily large, we can get a feasible packing assignment with arbitrarily large objective.
Obviously, such instances are not interesting for practical purposes.
Hence, we will assume that the offline optimal is bounded.

\begin{assumption}
The offline convex packing problem has bounded optimal objective.
\end{assumption}

Our main result of for the online convex packing problem is an online algorithm with the optimal competitive ratio for polynomial cost functions.

\begin{theorem}[Polynomial Cost Functions]
\label{thm:packingpoly}
Suppose the packing cost function $f^*$ is a convex polynomial with non-negative coefficients, zero constant term, and maximum degree $\tau$.
Then, there is an $O(\tau)$-competitive online algorithm for the online convex packing problem.
\end{theorem}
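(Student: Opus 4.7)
The plan is to use the simple prediction scheme sketched in the introduction with a parameter $\rho > 1$ tuned at the end: upon arrival of request $k$ with vector $\vec{a}_k$, continuously raise $y_k$ (and hence $\vec{z} := A^T \vec{y}$) as long as $\langle \vec{a}_k, \vec{x}\rangle \le 1$, where the covering dual is maintained as $\vec{x} := \nabla f^*(\rho \vec{z})$. Per the online primal dual framework of Section~\ref{sec:prelim}, once we establish $C(\vec{x}) \le \alpha \cdot P(\vec{y})$ for $\alpha = O(\tau)$, weak duality together with feasibility of $\vec{x}$ will yield $\popt \le \copt \le C(\vec{x}) \le \alpha \cdot P(\vec{y})$, giving the claimed competitive ratio.

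For feasibility, each round $k$ must terminate exactly when $\langle \vec{a}_k, \vec{x}\rangle = 1$: the alternative, in which $y_k$ grows unboundedly without ever triggering the stopping condition, would force $\popt = +\infty$ and contradict the standing boundedness assumption. Monotonicity of $\nabla f^*$ ensures $\vec{x}$ only grows with $\vec{z}$, so every covering constraint remains satisfied at all later times.

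The core is an exact termwise comparison between $C(\vec{x})$ and $P(\vec{y})$. Decompose $f^* = \sum_{d=1}^\tau f^{*(d)}$ into its homogeneous components. Since $\vec{x} = \nabla f^*(\rho \vec{z})$, the Fenchel identity $f(\vec{x}) = \langle \vec{x}, \rho \vec{z}\rangle - f^*(\rho \vec{z})$ combined with Euler's theorem applied to each $f^{*(d)}$ yields
\[
C(\vec{x}) \;=\; \sum_{d=1}^\tau (d-1)\,\rho^d\, f^{*(d)}(\vec{z}),
\]
so the degree-$1$ contribution drops out automatically. For the primal side, the stopping rule supplies $dy_k \ge \langle \vec{x}, d\vec{z}\rangle$ throughout each round, and $\langle \nabla f^*(\rho \vec{z}), d\vec{z}\rangle = d\bigl(\tfrac{1}{\rho} f^*(\rho\vec{z})\bigr)$ is an exact differential; telescoping over the rounds produces $\sum_j y_j \ge \tfrac{1}{\rho}\, f^*(\rho \vec{z}) = \sum_{d=1}^\tau \rho^{d-1}\, f^{*(d)}(\vec{z})$, hence
\[
P(\vec{y}) \;\ge\; \sum_{d=1}^\tau \bigl(\rho^{d-1} - 1\bigr)\, f^{*(d)}(\vec{z}).
\]
Because every $f^{*(d)}(\vec{z})$ is non-negative, a termwise comparison gives $\tfrac{C(\vec{x})}{P(\vec{y})} \le \max_{2 \le d \le \tau} \tfrac{(d-1)\rho^d}{\rho^{d-1}-1}$. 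Taking $\rho := 1 + \tfrac{1}{\tau}$, Bernoulli's inequality yields $\rho^{d-1} - 1 \ge (d-1)/\tau$ and $\rho^d \le \rho^\tau \le e$, so the maximum is at most $e\tau = O(\tau)$.

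The most delicate step is establishing $\sum_j y_j \ge \tfrac{1}{\rho}\, f^*(\rho \vec{z})$. A tempting rate-based shortcut, namely bounding the instantaneous rate $\tfrac{d f^*(\vec{z})}{dy_k}$ by a small multiple of $\tfrac{d \sum_j y_j}{dy_k} = 1$, fails when $f^*$ contains low-degree (in particular, linear) monomials, because in those coordinates $\nabla f^*(\vec{z})$ is not meaningfully smaller than $\nabla f^*(\rho \vec{z})$. The line-integral identity above circumvents this, and the Fenchel identity for $C(\vec{x})$ simultaneously annihilates the matching low-degree terms (the factor $(d-1)$ kills $d=1$), so only the $d \ge 2$ monomials enter the final ratio -- which is precisely what allows the clean $O(\tau)$ bound to survive for non-homogeneous polynomial $f^*$.
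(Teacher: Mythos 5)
Your proof is correct, and it takes a genuinely different route from the paper's. The paper proves this theorem by reduction: it first establishes Theorem~\ref{thm:packinggeneral} under the minimum-degree condition $\nabla f^*(\rho\vec{z}) \ge \rho^{\lambda-1}\nabla f^*(\vec{z})$ with $\lambda>1$, and then (Section~\ref{sec:packingpoly}) strips off the linear part by writing $f^* = \langle\vec{c},\cdot\rangle + \widehat{f}^*$ with $\vec{c}=\nabla f^*(\vec{0})$, absorbing $\langle A\vec{c},\vec{y}\rangle$ into the objective, rescaling each constraint by $b_k = 1-\langle\vec{a}_k,\vec{c}\rangle$ (skipping rounds with $b_k\le 0$), and invoking the general theorem with $\lambda=2$ and $\rho=2$. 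You instead run the algorithm directly on $f^*$ and replace the paper's two coarse bounds ($C(\vec{x})\le(\tau-1)f^*(\rho\vec{z})$ from Lemma~\ref{lemma:convexf}(c) and $f^*(\rho\vec{z})\ge\rho^{\lambda}f^*(\vec{z})$) by exact per-degree identities: Euler's theorem on each homogeneous component turns the Fenchel identity into $C(\vec{x}) = \sum_{d}(d-1)\rho^{d}f^{*(d)}(\vec{z})$, the same line-integral argument as Lemma~\ref{lemma:yinc} gives $P(\vec{y}) \ge \sum_{d}(\rho^{d-1}-1)f^{*(d)}(\vec{z})$, and a termwise comparison with $\rho = 1+\tfrac{1}{\tau}$ yields the ratio $e\tau$. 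What this buys is that the degree-one terms cancel automatically on both sides, so no reformulation or constraint rescaling is needed, and the comparison is tight per component rather than through the worst-case minimum degree; the paper's route buys a statement (Theorem~\ref{thm:packinggeneral}) valid beyond polynomials.

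One step you should spell out is termination of each round, since with linear terms present the paper's Lemma~\ref{lemma:packing_feasible} does not apply verbatim (its proof uses $\lambda>1$), and ``$y_k$ grows unboundedly'' does not by itself force $\popt=+\infty$: you need the objective to grow at a rate bounded away from zero. This does hold in your setting: if $\langle\vec{a}_k,\vec{c}\rangle\ge 1$ the round ends immediately because $\vec{x}\ge\vec{c}$; otherwise, since $\nabla\widehat{f}^*(\vec{z}) \le \tfrac{1}{\rho}\nabla\widehat{f}^*(\rho\vec{z})$, while $\langle\vec{a}_k,\vec{x}\rangle<1$ one has $\langle\vec{a}_k,\nabla f^*(\vec{z})\rangle \le \langle\vec{a}_k,\vec{c}\rangle + \tfrac{1}{\rho}\bigl(\langle\vec{a}_k,\vec{x}\rangle-\langle\vec{a}_k,\vec{c}\rangle\bigr)$, hence $\tfrac{\partial P(\vec{y})}{\partial y_k} \ge \bigl(1-\tfrac{1}{\rho}\bigr)\bigl(1-\langle\vec{a}_k,\vec{c}\rangle\bigr) > 0$, and unboundedness would contradict the standing assumption that the offline packing optimum is bounded. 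With this two-line addition your argument is complete.
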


Due to a lower bound result of Huang and Kim~\cite{HuangK15}, the above ratio is asymptotically tight even for the special case of a single resource with a degree-$\tau$ polynomial cost function.

As a key intermediate step for proving Theorem \ref{thm:packingpoly}, we show the following positive results for any cost functions that are, informally speaking, (1) ``at most as convex as'' degree-$\tau$ polynomials and (2) ``at least as convex as'' degree-$\lambda$ polynomials, for some parameters $\tau \ge \lambda > 1$.

\begin{theorem}
\label{thm:packinggeneral}
Suppose the packing cost function $f^*$ satisfies the following conditions:
\begin{enumerate}[leftmargin=1cm, topsep=0cm, itemsep=0cm]
\item There exists some $\tau$ such that for all $\vec{z} \in \R^n_+$, $\langle \nabla f^*(\vec{z}), \vec{z} \rangle \leq \tau \cdot f^*(\vec{z})$.
\item There exists some $\lambda > 1$ such that for all $\rho \geq 1$, for all $\vec{z} \in \R^n_+$,
$\nabla f^*(\rho \vec{z}) \geq \rho^{\lambda-1} \cdot \nabla f^*(\vec{z})$.
\end{enumerate}
Then, there is an $O \big( \tfrac{\tau \lambda}{\lambda - 1} \big)$-competitive online algorithm for the online convex packing problem.
\end{theorem}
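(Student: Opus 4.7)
The plan is to apply the primal--dual algorithm sketched in the techniques overview: maintain $\vec{z} := A^T \vec{y}$ and dual prices $\vec{x} := \nabla f^*(\rho \vec{z})$ for a parameter $\rho > 1$ to be tuned, and upon arrival of request $k$ continuously raise $y_k$ (and hence $\vec{z}$ and $\vec{x}$) so long as $\langle \vec{a}_k, \vec{x}\rangle < 1$.  Monotonicity of $\nabla f^*$ ensures that $\vec{x}$ is non-decreasing; if $y_k$ is ever increased, the round halts at the instant $\langle \vec{a}_k, \vec{x}\rangle = 1$, whereas if $y_k$ remains $0$ the inequality already held at the start of round $k$. Either way $\langle \vec{a}_k, \vec{x}^{(m)}\rangle \ge 1$ for every $k$, so the final $\vec{x}^{(m)}$ is feasible for the covering dual, and weak duality yields $\popt \le \copt \le f(\vec{x}^{(m)})$.

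The next step is to bound $f(\vec{x}^{(m)})$ in terms of the packing objective. Since $\vec{x}^{(m)} = \nabla f^*(\rho \vec{z}^{(m)})$, the Fenchel--Young equality gives $f(\vec{x}^{(m)}) = \rho\langle \vec{x}^{(m)}, \vec{z}^{(m)}\rangle - f^*(\rho \vec{z}^{(m)})$; invoking hypothesis (1) at $\vec{u}=\rho \vec{z}^{(m)}$ bounds $\rho\langle \vec{x}^{(m)}, \vec{z}^{(m)}\rangle \le \tau f^*(\rho \vec{z}^{(m)})$, and hence $f(\vec{x}^{(m)}) \le (\tau - 1)\, f^*(\rho \vec{z}^{(m)})$. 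To connect $f^*$ back to $\sum_k y_k$, I will use the fact that while $y_k$ is being increased we have $\langle \vec{a}_k, \vec{x}\rangle \le 1$: the chain rule yields
\[
\frac{d\, f^*(\rho \vec{z})}{d y_k} \;=\; \rho \langle \nabla f^*(\rho \vec{z}), \vec{a}_k\rangle \;=\; \rho \langle \vec{x}, \vec{a}_k\rangle \;\le\; \rho,
\]
so integrating over the whole algorithm gives $f^*(\rho \vec{z}^{(m)}) \le \rho \sum_k y_k$. Hypothesis (2) rearranges to $\nabla f^*(\vec{z}) \le \rho^{-(\lambda-1)}\, \nabla f^*(\rho \vec{z}) = \rho^{-(\lambda-1)}\vec{x}$, and the same differentiation applied to $f^*(\vec{z})$ produces $f^*(\vec{z}^{(m)}) \le \rho^{-(\lambda-1)} \sum_k y_k$, and consequently $P(\vec{y}^{(m)}) \ge (1 - \rho^{-(\lambda-1)}) \sum_k y_k$.

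Chaining the three estimates produces
\[
\frac{\popt}{P(\vec{y}^{(m)})} \;\le\; \frac{f(\vec{x}^{(m)})}{P(\vec{y}^{(m)})} \;\le\; \frac{(\tau-1)\,\rho}{\,1 - \rho^{-(\lambda-1)}\,}.
\]
Setting $\rho := e$ and using the elementary inequality $1 - e^{-x} \ge x/(1+x)$ with $x = \lambda - 1$ gives $1 - \rho^{-(\lambda-1)} \ge (\lambda-1)/\lambda$, so the competitive ratio is at most $e(\tau-1)\lambda/(\lambda-1) = O(\tau\lambda/(\lambda-1))$, matching the claim. The main subtlety I anticipate is a primal--feasibility issue analogous to Section~\ref{sec:mono_strict}: one must verify that continuously raising $y_k$ actually drives $\langle \vec{a}_k, \vec{x}\rangle$ up to $1$ in finite time rather than stalling asymptotically. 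This should follow from hypothesis (2) (which forces $\nabla f^*$ to grow strictly along every coordinate that is being pushed) together with the boundedness assumption on the offline optimum, but filling in that detail is where the routine check lies.
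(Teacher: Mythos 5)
Your proposal is correct and takes essentially the same route as the paper's proof: the same algorithm maintaining $\vec{x} = \nabla f^*(\rho \vec{z})$, dual feasibility plus weak duality, the bound $f(\vec{x}^{(m)}) \le (\tau-1) f^*(\rho \vec{z}^{(m)})$ (your Fenchel--Young step is exactly Lemma~\ref{lemma:convexf}(c) with the roles of $f$ and $f^*$ reversed), and the integrated rate bound $\sum_k y_k \ge \tfrac{1}{\rho} f^*(\rho \vec{z}^{(m)})$ (Lemma~\ref{lemma:yinc}); the only difference is the final bookkeeping, where you bound $f^*(\vec{z}^{(m)}) \le \rho^{-(\lambda-1)} \sum_k y_k$ directly and take $\rho = e$, while the paper uses $f^*(\rho\vec{z}) \ge \rho^{\lambda} f^*(\vec{z})$ and optimizes $\rho = \lambda^{1/(\lambda-1)}$, both yielding $O(\tau\lambda/(\lambda-1))$. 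The termination detail you flag is not a gap: it is handled exactly as you anticipate (Lemma~\ref{lemma:packing_feasible}), and in fact follows from the same computation as your bound on $f^*(\vec{z}^{(m)})$ --- while $y_k$ increases, $\partial P/\partial y_k \ge 1 - \rho^{-(\lambda-1)} > 0$, so if the constraint were never satisfied the offline optimum would be unbounded, contradicting the boundedness assumption.
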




Note that if the cost function $f^*$ is a polynomial with maximum degree $\tau$ and minimum degree $\lambda \ge 2$, then Theorem \ref{thm:packinggeneral} implies that there is an $O( \tau )$-competitive online algorithm.
To further prove Theorem \ref{thm:packingpoly} for polynomials with linear terms, we will explain in Section \ref{sec:packingpoly} how to handle the linear terms and reduce the problem to polynomials of degree at least $2$.


\subsection{Online Convex Packing Algorithm for Theorem \ref{thm:packinggeneral}}

The details are described in Algorithm~\ref{alg:packing}.

\begin{algorithm}[H]
	\label{alg:packing}
	\SetAlgoLined
	\textbf{Initialize:} $\vec{x}:= \vec{z} :=\vec{0}$\;
	\While{constraint vector $\vec{a}_k = (a_{k1}, \ldots, a_{kn})$ arrives in round $k$}{
		Set $y_{k} :=0$; \\
		\While{$\sum_{i=1}^{n}a_{ki}x_{i}<1$}{
			Continuously increase $y_k$: \\
			Simultaneously for each $i \in [n]$, increase
			$z_i$ at rate $\frac{d z}{d y_k} = a_{ki}$, and \\
			increase $\vec{x}$ according to $\vec{x}=\nabla f^{*}(\rho\vec{z})$\;
		}
	}
	\caption{Convex online packing}
\end{algorithm}

Here, the vector $\vec{x}$ plays an auxiliary role
and is initialized to $\vec{0}$. Throughout the algorithm,
we maintain the invariant $\vec{z} = A^T \vec{y}$
and $\vec{x} = \nabla f^*(\rho \vec{z})$ for some 
parameter $\rho > 1$ to be determined later.
In round $k \in [m]$, the vector $\vec{a}_k = (a_{k1}, a_{k2},
\ldots, a_{kn})$ is given. The variable $y_k$ is initialized
to 0, and is continuously increased while $\sum_{i \in [n]} a_{ki} x_i < 1$.  To maintain $\vec{z} = A^T \vec{y}$,
for each $i \in [n]$, $z_i$ is increased at rate $\frac{d z}{d y_k} = a_{ki}$.  As the coordinates of $\vec{z}$ are increased,
the vector $\vec{x}$ is increased according to the invariant $\vec{x} = \nabla f^*(\rho \vec{z})$.
Since $\nabla f^*$ is monotone,
as $y_k$ increases, both $\vec{z}$ and $\vec{z}$ increase
monotonically.  
We show in Lemma~\ref{lemma:packing_feasible} that unless
the offline packing problem is unbounded,
eventually $\sum_{i \in [n]} a_{ki} x_i$ reaches 1, at which moment
$y_k$ stops increasing and round $k$ finishes.

Observe that the coordinates of $\vec{x}$ are increased monotonically throughout the algorithm.
Below we show that at the end of the
process, the constraints $\sum_{i \in [n]} a_{ji} x_i \geq 1$ are satisfied for all $j \in [m]$.  
Hence, the vector $\vec{x}$ is feasible for the covering problem.

In the rest of the section, for $k \in [m]$, we let $\vec{z}^{(k)}$ denote the vector $\vec{z}$ at the end of round $k$, where $\vec{z}^{(0)} := \vec{0}$.

\begin{lemma}[Dual Feasibility]
\label{lemma:packing_feasible}
Recall our assumption that the offline optimal packing objective is bounded.
Then, in each round $k\in[m]$, eventually we have $\sum_{i \in [n]} a_{ki} x_i \geq 1$, and $y_k$ will stop increasing.
\end{lemma}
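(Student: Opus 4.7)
The plan is to argue by contradiction: if in some round $k$ the variable $y_k$ increases without bound, then the packing objective $P(\vec{y})$ itself would grow without bound, contradicting the assumption that the offline optimum is finite (since the all-$y_j \geq 0$ condition is the only feasibility restriction for packing, so any $\vec{y}$ reached by the algorithm is a feasible packing solution whose objective lower-bounds $\popt$).

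The key calculation is to bound the rate of growth of $P(\vec{y})$ with respect to $y_k$ while the inner loop is active, i.e., while $\sum_i a_{ki} x_i < 1$. Since $\sum_j y_j$ contributes at rate exactly $1$ as $y_k$ grows, and $\vec{z}$ is updated so that $\frac{d z_i}{d y_k} = a_{ki}$, we have
\[
\frac{d}{d y_k}\, P(\vec{y}) \;=\; 1 \;-\; \langle \nabla f^*(\vec{z}), \vec{a}_k \rangle.
\]
To relate the second term to the stopping condition, I would use Assumption~2 of Theorem~\ref{thm:packinggeneral}: since $\rho \geq 1$, $\nabla f^*(\rho \vec{z}) \geq \rho^{\lambda - 1} \nabla f^*(\vec{z})$, so
\[
\nabla f^*(\vec{z}) \;\leq\; \rho^{-(\lambda - 1)}\, \nabla f^*(\rho \vec{z}) \;=\; \rho^{-(\lambda - 1)}\, \vec{x}.
\]
Taking the inner product with the non-negative vector $\vec{a}_k$ and using the fact that the inner loop runs only while $\langle \vec{a}_k, \vec{x}\rangle < 1$, we obtain
\[
\frac{d}{d y_k}\, P(\vec{y}) \;\geq\; 1 - \rho^{-(\lambda - 1)} \langle \vec{a}_k, \vec{x}\rangle \;>\; 1 - \rho^{-(\lambda - 1)}.
\]
Since we choose $\rho > 1$ (and $\lambda > 1$), this constant is strictly positive.

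Thus, if $y_k$ could grow arbitrarily large in round $k$, integrating the above bound along the trajectory would force $P(\vec{y}) \to \infty$. But $\vec{y}$ is always feasible for the offline packing program (only non-negativity is required), so $P(\vec{y}) \leq \popt$, contradicting the boundedness of $\popt$. Therefore $y_k$ must stop increasing, which by construction of the algorithm happens precisely when $\sum_i a_{ki} x_i \geq 1$.

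\textbf{Expected obstacle.} The calculation itself is short; the only subtlety is making sure that the algorithm's trajectory is well-defined as $\vec{z}$ (and hence $\vec{x} = \nabla f^*(\rho \vec{z})$) evolves continuously, so that the rate computations above are justified pointwise. Given the monotonicity of $\nabla f^*$ and the assumption that $f^*$ is differentiable, this is essentially a regularity remark rather than a genuine hurdle.
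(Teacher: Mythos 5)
Your proposal is correct and follows essentially the same argument as the paper's proof: bounding $\tfrac{\partial P(\vec{y})}{\partial y_k}$ below by $1 - \rho^{-(\lambda-1)} > 0$ via the assumption $\nabla f^*(\rho\vec{z}) \geq \rho^{\lambda-1}\nabla f^*(\vec{z})$ and the loop condition $\langle \vec{a}_k, \vec{x}\rangle < 1$, then contradicting the boundedness of the offline packing optimum. No substantive differences to note.
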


\begin{proof}
During round $k \in [m]$,
the algorithm increases $y_k$ only when $\sum_{i=1}^{n}a_{ki}x_{i}<1$.
Therefore, recalling $\vec{z} = A^T \vec{y}$, when the algorithm increases $y_k$, it also increases each $z_i$ at rate $\frac{d z_i}{d y_k} = a_{ki}$.

Hence, we have
\begin{align*}
\tfrac{\partial P(\vec{y})}{\partial y_k} = 1 - \langle \vec{a}_k, \nabla f^*(\vec{z}) \rangle
& \geq 1 - \tfrac{1}{\rho^{\lambda - 1}} \cdot \langle \vec{a}_k, \nabla f^*(\rho \vec{z}) \rangle \\
& = 1 - \tfrac{1}{\rho^{\lambda - 1}} \cdot \langle \vec{a}_k, \vec{x} \rangle \geq 1 - \tfrac{1}{\rho^{\lambda - 1}},
\end{align*}
where the first inequality follows from the assumption $\nabla f^*(\rho \vec{z}) \geq \rho^{\lambda - 1} \cdot \nabla f^*(\vec{z})$,
and the last inequality follows because $\langle \vec{a}_k, \vec{x} \rangle < 1$ when $y_k$ is increased.

Therefore, suppose for contrary that $\langle \vec{a}_k, \vec{x} \rangle$ never reaches 1,
then the objective function $P(\vec{y})$ increases at least at some
positive rate $1 - \frac{1}{\rho^{\lambda - 1}}$ (recalling $\rho > 1$ and $\lambda \geq 2$) as $y_k$ increases, which means
the offline packing problem is unbounded, contradicting our assumption.
\end{proof}

\ignore{

In the algorithm, $\rho>1$ is a variable whose value will be determined later.

\begin{lemma}
	At the end of the algorithm, both $C(\vec{x})$ and $P(\vec{y})$ are feasible.
\end{lemma}

\begin{proof}
	$P(\vec{y})$ is always feasible during the algorithm because
	\begin{enumerate}
		\item $\vec{y}$ can only be increased during the whole process, therefore $\vec{y} \in \R_+^m$.
		\item In the algorithm, we always maintain $z_{i}=\sum_{j=1}^{m}y_{j}a_{ji}$. Since all the entries of $A$ are nonnegative, it always holds that $A^{T}\vec{y} = \vec{z}$, and $\vec{z} \in \R_+^n$.
	\end{enumerate}
	Also, after a new constraint comes, it only halts when $\sum_{i=1}^{n}a_{ki}x_{i}\geq1$. Since $f^{*}$ is strictly convex and monotone, $f*(\rho\vec{z})$ is unbounded. Therefore, we can always reach the condition that $\sum_{i=1}^{n}a_{ki}x_{i}\geq1$.
\end{proof}
}

\subsection{Competitive Analysis for Theorem \ref{thm:packinggeneral}}

\begin{lemma}[Bounding Increase in $\vec{y}$]
\label{lemma:yinc}
For $k \in [m]$, let $\vec{z}^{(k)}$
denote the vector $\vec{z}$ at the end of round $k$,
where $\vec{z}^{(0)} := \vec{0}$.
Then, 
at the end of round $k$ when $y_k$ stops increasing (by Lemma~\ref{lemma:packing_feasible})
\[
y_k \geq \tfrac{1}{\rho} \cdot \big( f^*(\rho \vec{z}^{(k)}) - f^*(\vec{z}^{(k-1)}) \big) \enspace.
\]
In particular, since $f^*(\vec{0})  = 0$, this implies that at the end of the algorithm,
\[
\sum_{k \in [m]} y_k \geq \tfrac{1}{\rho} \cdot f^*(\rho \vec{z}^{(m)}) \enspace.
\]
\end{lemma}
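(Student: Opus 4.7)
The plan is to compute $\tfrac{d}{dy_k} f^*(\rho \vec{z})$ during round $k$ by the chain rule, compare it with the rate at which $y_k$ itself is being increased, and integrate. While $y_k$ is being increased in round $k$, the algorithm maintains the invariant $\vec{x} = \nabla f^*(\rho \vec{z})$, and increases each $z_i$ at rate $\tfrac{dz_i}{dy_k} = a_{ki}$. The chain rule then gives
\[
\tfrac{d}{dy_k} f^*(\rho \vec{z}) \;=\; \rho \sum_{i=1}^n \nabla_i f^*(\rho \vec{z}) \cdot a_{ki} \;=\; \rho \, \langle \vec{a}_k, \vec{x} \rangle.
\]
The key observation is that the inner while loop only continues to increase $y_k$ as long as $\langle \vec{a}_k, \vec{x} \rangle < 1$, so throughout round $k$ the above derivative is at most $\rho$.

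Integrating this inequality with respect to $y_k$ over the course of round $k$---during which $\vec{z}$ moves from $\vec{z}^{(k-1)}$ to $\vec{z}^{(k)}$ and $y_k$ moves from $0$ to its terminal value (which exists by Lemma~\ref{lemma:packing_feasible})---yields the per-round bound
\[
f^*(\rho \vec{z}^{(k)}) - f^*(\rho \vec{z}^{(k-1)}) \;\leq\; \rho \cdot y_k.
\]
This is the per-round statement of the lemma. (Note: as written, the statement has $f^*(\vec{z}^{(k-1)})$ rather than $f^*(\rho \vec{z}^{(k-1)})$; I believe this is a typo, since the latter is what the chain-rule computation naturally produces and what telescopes correctly in the summation. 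Of course $f^*(\vec{z}^{(k-1)}) \le f^*(\rho \vec{z}^{(k-1)})$ by monotonicity of $f^*$ and $\rho \ge 1$, so the stronger $\rho$-version also implies the stated form only if we reverse the inequality, which we cannot do; this supports that the $\rho$ was inadvertently dropped.)

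For the ``in particular'' part, summing the per-round bound over $k = 1, \dots, m$ makes the right-hand side telescope:
\[
\sum_{k=1}^m y_k \;\geq\; \tfrac{1}{\rho} \sum_{k=1}^m \bigl( f^*(\rho \vec{z}^{(k)}) - f^*(\rho \vec{z}^{(k-1)}) \bigr) \;=\; \tfrac{1}{\rho}\bigl(f^*(\rho \vec{z}^{(m)}) - f^*(\vec{0})\bigr) \;=\; \tfrac{1}{\rho} f^*(\rho \vec{z}^{(m)}),
\]
using $\vec{z}^{(0)} = \vec{0}$ and $f^*(\vec{0}) = 0$.

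There is no real obstacle: the only subtle point is remembering to use the invariant $\vec{x} = \nabla f^*(\rho \vec{z})$ (rather than $\nabla f^*(\vec{z})$), which is precisely what produces the extra factor of $\rho$ when differentiating the composition $f^*(\rho \vec{z})$ with respect to $y_k$, and which matches the $\tfrac{1}{\rho}$ on the right-hand side of the claim.
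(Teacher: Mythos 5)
Your proof is correct and is essentially the paper's own argument: the stopping condition $\langle \vec{a}_k,\vec{x}\rangle < 1$ combined with the invariant $\vec{x}=\nabla f^*(\rho\vec{z})$, integrated over round $k$ (the paper phrases this as the path integral of $\langle\nabla f^*(\rho\vec{z}),d\vec{z}\rangle$, which is the same computation as your chain rule in $y_k$). Your typo diagnosis is also right: the computation yields $y_k \ge \tfrac{1}{\rho}\bigl(f^*(\rho\vec{z}^{(k)})-f^*(\rho\vec{z}^{(k-1)})\bigr)$, and the printed per-round form with $f^*(\vec{z}^{(k-1)})$ is in fact false in general (consider a round that terminates immediately with $y_k=0$ while $\vec{z}^{(k-1)}\neq\vec{0}$ and $\rho>1$); the auction analogue, Lemma~\ref{lemma:yinc_auction}, states the $\rho$-version, and only the telescoped bound $\sum_{k} y_k \ge \tfrac{1}{\rho}\,f^*(\rho\vec{z}^{(m)})$ is used in the proof of Theorem~\ref{thm:packinggeneral}, so nothing downstream is affected.
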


\begin{proof} 
Recall again that $y_k$ increases only when $\langle \vec{a}_k, \vec{x} \rangle < 1$, we have
\[
1 \geq \sum_{i \in [n]} a_{ki} x_i = \sum_{i \in [n]} x_i \cdot \frac{d z_i}{d y_k} \enspace.
\]
Hence, integrating this with respect to $y_k$
throughout round $k$, and observing that $\vec{x} = \nabla f^*(\rho \vec{z})$, we have
\[
y_k \geq \int_{\vec{z} = \vec{z}^{(k-1)}}^{\vec{z}^{(k)}} 
\langle \nabla f^*(\rho \vec{z}), d \vec{z} \rangle
= \frac{1}{\rho} \cdot (f^*(\rho \vec{z}^{(k)}) - f^*(\vec{z}^{(k-1)})) \enspace,
\]
where the last equality comes from the fundamental theorem of calculus for path integrals of vector fields.
\end{proof}


\begin{proof}[Theorem \ref{thm:packinggeneral}]
Suppose $\vec{z}^{(m)} = A^T \vec{y}$ is the vector
at the end of the algorithm, and $\vec{x}^{(m)} = \nabla f^*(\rho \vec{z}^{(m)})$.
Then, we have
\[
C(\vec{x}^{(m)}) = f(\nabla f^*(\rho \vec{z}^{(m)})) \leq (\tau - 1) \cdot f^*(\rho \vec{z}^{(m)}) \enspace,
\]
where the inequality follows from applying Lemma~\ref{lemma:convexf}(c) with the roles of $f$ and $f^*$ reversed.

On the other hand, 
\begin{equation}
\label{eq:packing1}
\textstyle
P(\vec{y}) = \sum_{k \in [m]} y_k - f^*(\vec{z}^{(m)}) \geq \frac{1}{\rho} \cdot f^*(\rho \vec{z}^{(m)}) - f^*(\vec{z}^{(m)}) \enspace.
\end{equation}

Hence, it follows that
\begin{equation}
\label{eq:calc2}
\frac{C(\vec{x})}{P(\vec{y})} \leq 
\frac{(\tau - 1) \cdot f^*(\rho \vec{z}^{(m)})}{\frac{1}{\rho} \cdot f^*(\rho \vec{z}^{(m)}) - f^*(\vec{z}^{(m)})}
\leq 
\frac{(\tau - 1) \cdot \rho^\lambda \cdot f^*(\vec{z}^{(m)})}{\frac{1}{\rho} \cdot \rho^\lambda \cdot f^*( \vec{z}^{(m)}) - f^*(\vec{z}^{(m)})} 
= (\tau -1 )\cdot \frac{\rho^{\lambda}}{\rho^{\lambda-1} - 1} \enspace,
\end{equation}
where the penultimate inequality follows because 
the assumption $\nabla f^*(\rho \vec{z}) \geq \rho^{\lambda - 1} \cdot \nabla f^*(\vec{z})$ implies
that
$f^*(\rho \vec{z}^{(m)}) = 
\rho \int_{\vec{z} = \vec{0}}^{\vec{z}^{(m)}}
\langle \nabla f^*(\rho \vec{z}), d \vec{z} \rangle
\geq \rho^\lambda \int_{\vec{z} = \vec{0}}^{\vec{z}^{(m)}}
\langle \nabla f^*(\vec{z}), d \vec{z} \rangle = \rho^\lambda \cdot f^*(\vec{z}^{(m)}))$,
and the function $t \mapsto \frac{t}{\frac{t}{\rho} - f^*(\vec{z}^{(m)})}$ is decreasing.

Choosing $\rho := \lambda^{\frac{1}{\lambda-1}}$
and observing that $\vec{x}^{(m)}$ is feasible for the covering problem,
we have 
\[
\frac{\popt}{P(\vec{y})} \leq \frac{\copt}{P(\vec{y})}
\leq \frac{C(\vec{x}^{(m)})}{P(\vec{y})} \leq (\tau-1) \cdot \frac{\lambda^{\frac{\lambda}{\lambda-1}}}{\lambda - 1} \enspace.
\]
The result then follows because $\lambda^{\frac{1}{\lambda - 1}} = \big(1 + (\lambda - 1) \big)^{\frac{1}{\lambda - 1}} \le e$.
\end{proof}

\subsection{Proof of Theorem \ref{thm:packingpoly}}
\label{sec:packingpoly}

\paragraph{Polynomial with Linear Terms.}
Observe that if a polynomial $f^*$ has linear terms, then for any $\lambda > 1$, it does not satisfy the condition that $\nabla f^*(\rho \vec{z}) \geq \rho^{\lambda - 1} \cdot \nabla f^*(\vec{z})$ for any $\vec{z} \in \R_{+}^n$ and any $\rho > 1$. 
We write $f^*(\vec{z}) = \langle \vec{c}, \vec{z} \rangle + {\widehat{f}}^*(\vec{z})$, where $\vec{c} = \nabla f^*(\vec{0})$ and ${\widehat{f}}^*(\vec{z})$ contains terms with degree at least $2$.  
Therefore, for all $\rho > 1$, $\nabla {\widehat{f}}^*(\rho \vec{z}) \geq \rho \cdot \nabla {\widehat{f}}^*(\vec{z})$.

Then, the objective function becomes
\[
P(\vec{y}) = \langle \vec{1}, \vec{y} \rangle - f^*(A^T y) = \langle \vec{1} - A \vec{c}, \vec{y} \rangle - {\widehat{f}}^*(A^T y) \enspace.
\]

Moreover,
the corresponding covering problem becomes
$\min_{\vec{x} \geq 0} \widehat{f}(\vec{x})$ subject to $A \vec{x} \geq \vec{1} - A \vec{c}$.  In other words,
in round $k \in [m]$, as the vector $\vec{a}_k$ arrives,
the covering constraint becomes $\langle \vec{a}_k, \vec{x} \rangle \geq b_k$, where $b_k := 1 -  \langle \vec{a}_k, \vec{c} \rangle$.
If $b_k \leq 0$, then the constraint is automatically satisfied, and
we set $y_k = 0$ such that round $k$ finishes immediately.  Otherwise, we can run Algorithm ~\ref{alg:packing} using the function ${\widehat{f}}^*$ and the constraint vector $\frac{\vec{a}_k}{b_k}$ in round $k$.

Next, we present a proof of Theorem \ref{thm:packingpoly} based on the above discussion.


\begin{proof}[Theorem \ref{thm:packingpoly}]
As discussed above, we write $\vec{c} := \nabla f^*(\vec{0})$ and ${\widehat{f}}^*(\vec{x}) := f^*(\vec{x}) - \langle \vec{c}, \vec{x} \rangle$ as the convex polynomial containing the terms of $f^*$ with degree at least $\lambda = 2$.
(Observe that if $f^*$ contains only linear terms, then the problem is trivial because the objective is unbounded when there is some round $k \in [m]$ such that $b_k > 0$.)


For ease of exposition, we can assume
that for each $k \in [m]$, $b_k := 1 - \langle \vec{a}_k, \vec{c} \rangle > 0$.  Otherwise, we can essentially ignore the variable $y_k$ by setting it to 0.  We denote $B$ as the diagonal matrix whose $(k,k)$-th entry is $b_k$.  By writing $\vec{w} := B \vec{y}$,
the objective function can be expressed in terms of $\vec{w}$
as $\widehat{P}(\vec{w}) := \langle \vec{1}, \vec{w} \rangle - {\widehat{f}}^*((B^{-1}A)^T \vec{w})$.

Hence, we can run Algorithm~\ref{alg:packing} using function ${\widehat{f}}^*$
such that in round $k \in [m]$, when the vector $\vec{a}_k$ arrives,
we can transform it by dividing each coordinate by $b_k$ before passing it to the algorithm.

By Theorem~\ref{thm:packinggeneral}, using $\lambda = 2$, it follows that the algorithm has competitive ratio $O(\tau)$.
\end{proof}

\section{Online Combinatorial Auction with Non-separable Production Cost}
\label{sec:auction}

In this section, we explain how to extend our algorithm for the online convex packing problem to a more general problem known as online combinatorial auction with production cost.

In an online combinatorial auction, there is a seller with $n$ types of items (i.e., resources) that are known upfront and a convex production cost function $g = f^* : \R_{+}^n \rightarrow \R_{+}$.
Producing $z_i$ units of resource $i$ for all $i \in [n]$ incurs a production cost of $f^*(\vec{z})$.
There are $m$ buyers (i.e., requests) that arrive online.
Each buyer $j$ is associated with a value function\footnote{For efficiency issues, we might need to assume that
$v_j$ is supermodular (i.e., for subsets $A$ and $B$,
$v_j(A) + v_j(B) \leq v_j(A \cap B) + v_j(A \cup B)$), but
otherwise we do not need further assumptions.
} $v_j : 2^{[n]} \rightarrow \R_{+}$ such that $v_j(S)$ is buyer $j$'s value for getting a subset of items $S \subseteq [n]$.
On the arrival of a buyer, the seller must choose a subset of items $S_j$ to allocate to the buyer immediately without any information of future buyers.
The objective is to maximize the social welfare, which is defined as total value of buyers, $\sum_{j \in [m]} v_j(S_j)$, minus the production cost $f^*(\vec{z})$, where
$z_i$ is the number of $j$'s such that $i \in S_j$.

We assume that $f^*$ have the same properties as in Section~\ref{sec:packing}, i.e., $f^*$ is convex and differentiable,
and both $f^*$ and $\nabla f^*$ are monotone, and $f^*(\vec{0}) = 0$.  In addition,
we shall consider a couple more technical assumptions on $f^*$, which are true for most interesting functions such as polynomials.

\begin{theorem}
\label{th:poly_auction}
Suppose the cost function $f^*$ is a convex polynomial with non-negative coefficients, zero constant term, and maximum degree $\tau$.
Then, there is an $O(\tau)$-competitive online algorithm running for the online combinatorial auction problem with production cost $f^*$.
\end{theorem}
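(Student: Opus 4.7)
The plan is to adapt Algorithm~\ref{alg:packing} to the combinatorial auction setting. The algorithm maintains a demand vector $\vec{z} \in \R^n_+$ (initialized to zero) and the corresponding price vector $\vec{x} := \nabla f^*(\rho \vec{z})$ for some $\rho > 1$. When buyer $j$ arrives, the algorithm allocates a subset $S_j \subseteq [n]$ chosen by greedy utility maximization with respect to $\vec{x}$, then updates $\vec{z} \gets \vec{z} + \vec{1}_{S_j}$ and $\vec{x}$ accordingly. Before running the algorithm, I would handle the degree-one monomials in $f^*$ using the same substitution as in Theorem~\ref{thm:packingpoly}: write $f^*(\vec{z}) = \langle \vec{c}, \vec{z}\rangle + \widehat{f}^*(\vec{z})$ with $\vec{c} := \nabla f^*(\vec{0})$ and replace each value function by $\hat{v}_j(S) := v_j(S) - \langle \vec{1}_S, \vec{c}\rangle$, reducing to the case where $\widehat{f}^*$ has minimum degree $\lambda = 2$.

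The competitive analysis proceeds in parallel to Theorems~\ref{thm:packinggeneral} and~\ref{thm:packingpoly}, using the primal--dual framework. On the dual side, Fenchel duality for the LP relaxation of the auction yields $\OPT \leq \sum_j u_j^*(\vec{x}) + f(\vec{x})$ for any $\vec{x} \geq \vec{0}$, where $u_j^*(\vec{x}) := \max_{S} [v_j(S) - \langle \vec{1}_S, \vec{x}\rangle]$ is the buyer's optimal utility at price $\vec{x}$ and $f := (f^*)^*$ is the convex conjugate of $f^*$. Evaluating at $\vec{x} = \vec{x}_m$ (the final price) and invoking a Lemma~\ref{lemma:convexf}(c)-type bound with $f$ and $f^*$ swapped (using $\langle \nabla f^*(\vec{z}), \vec{z}\rangle \leq \tau\, f^*(\vec{z})$ for polynomial $f^*$ of degree $\tau$), I obtain $f(\vec{x}_m) \leq (\tau - 1)\, f^*(\rho \vec{z}_m)$.

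On the primal side, the greedy allocation property combined with convexity of $f^*$ telescopes --- exactly as in Lemma~\ref{lemma:yinc} --- into the progress inequality $\sum_j v_j(S_j) \geq \tfrac{1}{\rho}\, f^*(\rho \vec{z}_m)$. Combining with the minimum-degree bound $f^*(\vec{z}_m) \leq \rho^{-\lambda} f^*(\rho \vec{z}_m)$, which for $\lambda = 2$ and $\rho = 2$ gives $f^*(\vec{z}_m) \leq \tfrac{1}{2} \sum_j v_j(S_j)$, I find $\text{ALG} = \sum_j v_j(S_j) - f^*(\vec{z}_m) \geq \tfrac{1}{2}\sum_j v_j(S_j)$. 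Plugging this into $\OPT \leq \sum_j v_j(S_j) + f(\vec{x}_m) \leq (1 + (\tau-1)\rho)\sum_j v_j(S_j)$ yields $\OPT \leq (4\tau - 2)\,\text{ALG}$, which is $O(\tau)$.

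The main technical obstacle is reconciling the \emph{discrete} nature of auction allocations with the \emph{continuous} price updates in Algorithm~\ref{alg:packing}. In the packing case, Lemma~\ref{lemma:yinc} integrates the greedy condition $\langle \vec{a}_k, \vec{x}\rangle \leq 1$ over an infinitesimal change in $\vec{z}$, whereas in the auction case each round causes a discrete jump $\vec{z}_j = \vec{z}_{j-1} + \vec{1}_{S_j}$. The natural pre-allocation greedy $S_j := \arg\max_S [v_j(S) - \langle \vec{1}_S, \vec{x}_{j-1}\rangle]$ only gives $v_j(S_j) \geq \langle \vec{1}_{S_j}, \vec{x}_{j-1}\rangle$, while the convexity inequality $f^*(\rho \vec{z}_j) - f^*(\rho \vec{z}_{j-1}) \leq \rho \langle \vec{x}_j, \vec{1}_{S_j}\rangle$ needed for the telescope involves the \emph{post-allocation} price $\vec{x}_j$. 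To close this gap, I would adopt the \emph{self-consistent} greedy rule $S_j := \arg\max_S [v_j(S) - \langle \vec{1}_S, \nabla f^*(\rho(\vec{z}_{j-1} + \vec{1}_S))\rangle]$, which by construction guarantees $v_j(S_j) \geq \langle \vec{1}_{S_j}, \vec{x}_j\rangle$ and hence supports the telescoping through convexity. This self-consistent optimization is non-standard since the ``price'' of a subset depends on the subset itself, but it remains tractable for the structured classes of practical interest --- notably separable production costs (recovering the setting of~\cite{BlumGMS11, HuangK15}, where each item's price depends only on its own demand) and supermodular $v_j$ as noted in the footnote of the theorem.
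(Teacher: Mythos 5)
Your high-level scaffolding (posting prices $\vec{x}=\nabla f^*(\rho\vec{z})$, absorbing the linear part of $f^*$ into the values as in Section~\ref{sec:packingpoly}, the conjugate bound $f(\nabla f^*(\rho\vec{z}))\le(\tau-1)f^*(\rho\vec{z})$ from Lemma~\ref{lemma:convexf}(c), and $\rho=2$ once the minimum degree is $2$) matches the paper, but the way you handle the discrete allocation breaks the primal--dual accounting, and the resulting algorithm is in fact not $O(\tau)$-competitive. The step ``$\OPT\le\sum_j v_j(S_j)+f(\vec{x}_m)$'' silently replaces the dual utility terms $u_j^*(\vec{x}_m)=\max_S\{v_j(S)-\langle\vec{1}_S,\vec{x}_m\rangle\}$ by $v_j(S_j)$; nothing in your argument gives $u_j^*(\vec{x}_m)\le v_j(S_j)$, and it is false: under the self-consistent rule a buyer can be priced out entirely ($S_j=\emptyset$, $v_j(S_j)=0$) while the final prices on the items he wants stay low, so $u_j^*(\vec{x}_m)$ stays large. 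Concretely, take one item, $f^*(z)=z^\tau$, $\rho=2$, and a single buyer with $v_1(\{1\})=v$ where $1<v<\tau 2^{\tau-1}$: your rule charges $\nabla f^*(2)=\tau 2^{\tau-1}>v$ for the unit, the buyer gets nothing, the algorithm's objective is $0$, yet $\OPT\ge v-1>0$, so the ratio is unbounded rather than $O(\tau)$. The root cause is that your progress inequality $\sum_j v_j(S_j)\ge\tfrac1\rho f^*(\rho\vec{z}_m)$ only pays for the production-cost side of the dual; it never credits the buyers' utilities $u_j$, which are part of $\OPT$, and with one-shot integral purchases against post-allocation marginal prices that utility can be forfeited completely.

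The paper resolves exactly this difficulty by making the within-round allocation \emph{fractional and continuous} (Algorithm~\ref{alg:auction}): during round $k$ a time parameter $t\in[0,1]$ advances, $y_{kS}$ is raised at rate $1$ for a set $S$ that maximizes $v_k(S)-x_S(t)$ up to slack $\ve$, the vectors $\vec{z}$ and $\vec{x}=\nabla f^*(\rho\vec{z})$ move continuously, and a dual variable $u_k$ accrues at rate $r(t)=\max_A\{v_k(A)-x_A(t)\}$. Lemma~\ref{lemma:feas_xu} gives feasibility of $(\vec{x},\vec{u})$, and Lemma~\ref{lemma:yinc_auction} (the continuous analogue of your telescoping, proved by a path integral as in Lemma~\ref{lemma:yinc}) shows the value collected in round $k$ covers $u_k$ \emph{plus} $\tfrac1\rho\big(f^*(\rho\vec{z}^{(k)})-f^*(\rho\vec{z}^{(k-1)})\big)$ up to an $\ve R_k$ error, which is then controlled using the local Lipschitz property of $\nabla f^*$ and the assumption $\inf_{\|\vec{z}\|_1\ge R}f^*(\vec{z})/\|\vec{z}\|_1>0$. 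In the priced-out example above the continuous process keeps selling fractions of the item until its price reaches $v$, capturing a constant fraction of $\OPT$; an integral guarantee requires a separate rounding step (the paper defers this to Huang and Kim), not the discrete self-consistent greedy alone. To repair your write-up you would need both this continuous (or otherwise utility-preserving) allocation rule and an accounting in which each round's primal gain dominates $u_j$ plus the increment of $\tfrac1\rho f^*(\rho\vec{z})$.
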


%

\ignore{
Since the production cost $f^*$ is monotone, we can also assume that the value function $v_j$ is also monotone in the sense that if there are $A \subsetneq B$ such that $v_j(A) > v_j(B)$, then subset $B$ will never be chosen over $A$.
}

Consider the following standard convex program relaxation of combinatorial auction with production costs and its Fenchel dual program:
\begin{align*}
\max_{\vec{y} \ge 0} & \textstyle \quad P(\vec{y}) \defeq \sum_{j \in [m]} \sum_{S \subseteq [n]} v_j(S) \cdot y_{jS} - f^*(\vec{z}) \quad \text{\rm s.t.} \\
\forall i \in [n] : & \textstyle \quad \sum_{j \in [m]} \sum_{S \ni i} y_{jS} = z_i \\
\forall j \in [m] : & \textstyle \quad \sum_{S \subseteq [n]} y_{jS} \le 1 \\
& \\
\min_{\vec{x}, \vec{u} \ge 0} & \textstyle \quad C(\vec{x}, \vec{u}) \defeq f(\vec{x}) + \sum_{j \in [m]} u_j \quad \text{\rm s.t.} \\
\forall j \in [m], \forall S \subseteq [n] : & \textstyle \quad \sum_{i \in S} x_i + u_j \ge v_j(S) 
\end{align*}

Observe that in the packing problem,
the objective function
can be succinctly expressed as $P(\vec{y}) := \langle \vec{v}, \vec{y} \rangle - f^*(A^T \vec{y})$,
where the coordinates of $\vec{v}$ are indexed by $[m] \times 2^{[n]}$,
and $A$ is the $\{0,1\}$-matrix whose rows are indexed by 
$[m] \times 2^{[n]}$ and columns are indexed by $[n]$ 
such that for $j \in [m]$, $S \subseteq [n]$ and $i \in [n]$, the $(jS,i)$-th entry is 1 \emph{iff} $i \in S$.
We also denote $x_S := \sum_{i \in S} x_i$.

In this paper, we present an online algorithm that solves the above convex program fractionally.
In each round $k \in [m]$, the value function $v_k$ arrives,
and the algorithm irrevocably chooses non-negative values for $y_{kS}$ for all $S \subseteq [n]$  such that $\sum_{S \subseteq [n]} y_{kS} \leq 1$.  Observe that the algorithm knows the rows of $A$ in advance, although it may not know the number $m$ of rounds.
Translating fractional algorithms into integral one is relatively straightforward and readers are referred to Huang and Kim~\cite{HuangK15} for details.
The algorithm is given in Algorithm \ref{alg:auction}.

\begin{algorithm}[h]
		\SetAlgoLined
	\textbf{Initialize:} $\vec{x} := \vec{z} := \vec{0}$\;
	\While{buyer $k$ with value function $v_k$ arrives in round $k$}{
		Set $\vec{y}_{k} := \vec{0}$ and $u_{k} := 0$; \\
		
		\While{$t$ increases continuously from $0$ to $1$}{
			
			Define $r(t) := \max_{A \subseteq [n]} \gamma_A(t)$,
			where $\gamma_A(t) := v_k(A) - \sum_{i \in A} x_i(t)$.
			
			Pick any $S \subseteq [n]$ that attains $r(t)$.
			
			\While{$\gamma_S(t) \geq r(t) - \ve$}{
				Increase $t$ continuously:
				
				\begin{enumerate}[leftmargin=0.7cm, topsep=0cm, itemsep=-0.1cm]
				\item Increase $y_{kS}$ at rate $\frac{d y_{kS}}{dt} = 1$.
				\item To maintain the invariant $\vec{z} = A^T \vec{y}$,
				
			        for each $i \in S$, $z_i$ is increased at rate $\frac{d z_i}{d t} =1$.
			\item As $\vec{z}$ is increased, we maintain $\vec{x} := \nabla f^*(\rho \vec{z})$ for some parameter $\rho > 1$.
			\item Increase $u_k$ at rate $\frac{d u_k}{d t}	= r(t)$. (Note that $u_k$ is for analysis only.)
			\end{enumerate}
					
			}

		%
			%
			%
			%

		}
	}
	\caption{Online combinatorial auction with production cost}
	\label{alg:auction}
\end{algorithm}

\noindent \textbf{Explanation of Algorithm.}
In each round $k \in [m]$, 
the variables are changed continuously as functions of some time parameter $t \in [0,1]$
such that $t=0$ corresponds to the beginning of round $k$,
and round $k$ finishes when $t$ reaches 1.
Initially, $\vec{y}_k(0) := \vec{0}$ and $u_k(0) := 0$.
  The following invariants
are maintained.

\noindent \textbf{Invariant 1:} The sum $\sum_{S \subseteq [n]} y_{kS}(t)$ increases at rate 1 with respect to $t$.

In fact, the algorithm ensures that at any time $t$, there is exactly one $S \subseteq [n]$ such that $y_{kS}$ is increased at rate 1.
This ensures that when $t$ reaches 1, the sum
$\sum_{S \subseteq [n]} y_{kS}(t)$ is 1 to maintain
the feasibility of $\vec{y}$.  To decide which $y_{kS}$'s to increase at time $t$, a parameter is defined $r(t) := \max_{A \subseteq [n]} \gamma_A(t)$, where
$\gamma_A(t) := v_k(A) - x_A(t)$.
Intuitively,
the sets $S$ that attain $r(t)$ are the most worthwhile to be selected.
One technical issue is that whether $r(t)$ can be computed efficiently.  If the set function $v_k$ is supermodular,
then $r(t)$ can be computed efficiently.


\noindent \textbf{Invariant 2:} A parameter $\ve > 0$ is chosen such that a variable $y_{kS}$ is increased at time $t$ only if $\gamma_S(t) \geq r(t) - \ve$.

We shall see that this invariant is used to bound the competitive ratio.
One might attempt to define Invariant 2 with $\ve = 0$.
The problem is that when $y_{kS}$ is increased, the vectors $\vec{z} = A^T \vec{y}$
and $x := \nabla f^*(\rho \vec{z})$ will also be increased such that
the set $S$ might no longer satisfy $\gamma_S(t) \geq r(t)$, even
when $t$ is increased infinitesimally.  
To choose the value of $\ve$ and facilitate the implementation of the algorithm, we place some technical assumptions on $f^*$,
which are true for interesting functions.

\begin{compactitem}
\item The gradient $\nabla f^*$ is locally Lipschitz with respect to the $\ell_1$-norm, i.e.,
for all $\vec{z}$, for all $R > 0$, there exists some $L$ such that
$\| \vec{z} - \vec{a} \|_1$, $\| \vec{z} - \vec{b} \|_1$  $\leq R$ implies that
$\| \nabla f^*(\vec{a}) - \nabla f^*(\vec{b}) \|_1 \leq L \cdot \|\vec{a} - \vec{b}\|_1$.  

\item For all $R > 0$, the infimum  $\inf_{\|\vec{z}\|_1 \geq R} \frac{f^*(\vec{z})}{\|\vec{z}\|_1}$ is positive.  This assumption means that the production cost cannot be zero as long as some resource is being used, and must grow at least proportionately as more resources are used.
\end{compactitem}

We shall see that it is sufficient to choose 
$\ve := \frac{1}{10} \inf_{\|\vec{z}\|_1 \geq R} \frac{f^*(\vec{z})}{\|\vec{z}\|_1}  > 0$,
where $R$ depends on the first value function $v_1$ and the local Lipschitz
constant of $\nabla f^*$ around $\vec{0}$.

\noindent \textbf{Continuous vs Discrete Increments.}
Observe that during round $k$, the algorithm needs
to change the $y_{kS}$ to increase when $\gamma_S(t) < r(t) - \ve$;
this means that $\gamma_S(t)$ must have decreased by at least $\ve$ from the time
$S$ is selected.
Moreover, observe that $r(t) \geq \gamma_\emptyset(t) \geq 0$.
Hence, it follows that there can be at most $2^n \cdot \frac{r(0)}{\ve}$ changes of $S$ before $t$ reaches 1.



We show that
the local Lipschitz property of $\nabla f^*$ implies that
instead of monitoring $r(t)$ continuously, the algorithm can be implemented by discrete increments, even though it is more convenient to analyze it continuously.

Notice that in round $k$, the $\ell_1$-norm of the vector $\vec{z} = A^T \vec{y}$ can increase by at most $n$, which happens if the complete set $[n]$ is chosen throughout.  Hence,
it follows that $\| \rho \vec{z} \|_1$ can change by at most $\rho n$ according to the $\ell_1$-distance.  Let $L$ be the local Lipschitz constant for $\nabla f^*$ in this vicinity of $\rho \vec{z}$ during round $k$.  Observe that the mapping $t \mapsto \vec{x}(t) := \nabla f^*(\rho \vec{z}(t))$ is $L \rho n$-Lipschitz with respect to the $\ell_1$-norm.

Therefore, if $t$ is increased by $\delta := \frac{\ve}{L \rho n}$,
$\| \vec{x}(t) \|_1$ can increase by at most $\ve$.
Hence, it follows that if $\gamma_S(t_0) = r(t_0)$, then for all $t \in [t_0, t_0 + \delta]$,
we have $\gamma_S(t) \geq \gamma_S(t_0) - \ve \geq r(t) - \ve$,
since $r(t)$ is non-increasing.
As a result, we can increase $t$ at increments of $\delta$,
and compute $r(t)$ and change subsets $S$ for only $\frac{1}{\delta}$ times before $t$ reaches 1, and round $k$ finishes.

\ignore{

Hence, we need to increase $y_{kS}$ for $S \in \mathcal{S}_t$ together.
One can see that increasing each of them at the same rate $\frac{1}{|\mathcal{S}_t|}$ to maintain Invariant 1 might not work,
because for each $S \in \mathcal{S}_t$,
the quantity $v_k(S) - x_S(t)$ decreases at rate $\frac{d x_S}{d t}$,
which might not be the same if we increase the $y_{kS}$'s uniformly
at the same rate. (Here, we also observe that
if $S_1 \subsetneq S_2$, then $\frac{d x_{S_1}}{d t} \leq \frac{d x_{S_2}}{d t}$, and this is why we can consider only minimal subsets in $\mathcal{S}_t$.)

Observing that 
$\frac{d x_i}{d t} = \rho \sum_{j \in [n]} \frac{\partial \nabla_i f^*(\rho \vec{z})}{d z_j} \sum_{S \in \mathcal{S}_t: j \in S} \frac{d y_{kS}}{d t}$ is a linear combination
of the $\frac{d y_{kS}}{d t}$'s,
one can choose $\frac{d y_{kS}}{d t} := \beta_S(t)$ for all $S \in \mathcal{S}_t$ such that $\sum_{S \in \mathcal{S}_t} \beta_S(t) = 1$,
and for all $S \in \mathcal{S}_t$, $x_S(t)$ increases at the same rate.
}

\noindent \textbf{Feasibility of Covering Problem.}
Observe that to maintain the feasibility of $(\vec{x}, \vec{u})$,
at the end of round $k$,
one can simply set $u_k := \max_{S \subseteq [n]} v_k(S) - x_S$.
However, to facilitate the competitive analysis,
we increase $u_k$ at rate $\frac{d u_k}{d t} = r(t)$ throughout round $k$.  The next lemma shows that this also maintains
the feasibility of $(\vec{x}, \vec{u})$.

\begin{lemma}[Feasibility of $(\vec{x}, \vec{u})$]
\label{lemma:feas_xu}
After each round $k \in [m]$,
for all $S \subseteq [n]$,
$u_k \geq v_k(S) - x_S$.
\end{lemma}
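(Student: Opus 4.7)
The plan is to fix an arbitrary subset $S \subseteq [n]$ and track all relevant quantities inside round $k$ as functions of the time parameter $t \in [0,1]$, with $u_k(0) = 0$. Write $x_S(t) := \sum_{i \in S} x_i(t)$ and $\gamma_S(t) := v_k(S) - x_S(t)$. The goal is to show that at $t = 1$, $u_k(1) + x_S(1) \geq v_k(S)$.

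First I would observe that $x_S(t)$ is non-decreasing in $t$. This is because throughout round $k$ each $z_i(t)$ is increased at rate either $0$ or $1$, so $\vec{z}(t)$ grows componentwise, and since $\nabla f^*$ is assumed monotone, the vector $\vec{x}(t) = \nabla f^*(\rho \vec{z}(t))$ also grows componentwise. Hence $x_S(t) \leq x_S(1)$ for all $t \in [0,1]$.

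Next, by the very definition $r(t) = \max_{A \subseteq [n]} \gamma_A(t)$, one has the pointwise bound $r(t) \geq \gamma_S(t) = v_k(S) - x_S(t)$ at every time $t$. Combined with the monotonicity of $x_S$ established above, this strengthens to the uniform bound $r(t) \geq v_k(S) - x_S(1)$ for every $t \in [0,1]$. Integrating this over $[0,1]$ and recalling that $u_k$ is initialized to $0$ and satisfies $\tfrac{d u_k}{dt} = r(t)$ throughout round $k$, we obtain
\[
u_k \;=\; \int_0^1 r(t)\, dt \;\geq\; v_k(S) - x_S(1),
\]
which rearranges to the claim.

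There is essentially no obstacle. The only place where a property of $f^*$ enters is through the monotonicity of $\nabla f^*$, which is used purely to deduce that $x_S(t)$ is non-decreasing so that $x_S(t)$ can be replaced by $x_S(1)$ inside the integral. In particular, this argument is completely insensitive to Invariant~2 (the $\varepsilon$-slack governing which subset $S$ is currently being incremented): since $u_k$ grows at the \emph{full} rate $r(t) = \max_A \gamma_A(t)$ rather than at $\gamma_{S(t)}(t)$, one never needs to keep track of which subset is actively selected at any given moment.
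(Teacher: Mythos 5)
Your proof is correct and follows essentially the same route as the paper's: monotonicity of $\vec{x}$ gives $x_S(t) \le x_S(1)$, hence $\tfrac{d u_k}{dt} = r(t) \ge v_k(S) - x_S(t) \ge v_k(S) - x_S(1)$, and integrating over $t \in [0,1]$ yields the claim. Your added remarks (why $\vec{x}$ is monotone via $\nabla f^*$, and why Invariant~2 plays no role here) are accurate but only make explicit what the paper leaves implicit.
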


\begin{proof}
Recall that the time parameter $t \in [0,1]$ denotes the beginning
of round $k$ with $t=0$ and the end of round $k$ with $t=1$.  Observing that $\vec{x}$ is increased monotonically,
for all $t \in [0,1]$, $x_S(t) \leq x_S(1)$.

Hence, we have $\frac{d u_k}{d t} = r(t) \geq v_k(S) - x_S(t) \geq v_k(S) - x_S(1)$.  Integrating with respect to $t$ from 0 to 1 gives the result.
\end{proof}

\noindent \textbf{Competitive Analysis.}
The analysis is along the same lines as that of the convex packing problem.
The main difference lies in how we bound the increase in $\vec{y}$.
Concretely, we will use the following lemma, which is an analogue of Lemma \ref{lemma:yinc} in the convex packing problem. 

\begin{lemma}[Bounding Increase in $\vec{y}$]
\label{lemma:yinc_auction}
For $k \in [m]$, let $\vec{z}^{(k)}$
denote the vector $\vec{z}$ at the end of round $k$,
where $\vec{z}^{(0)} := \vec{0}$.
Then, 
at the end of round $k$,
\[
\sum_{S \subseteq [n]} v_k(S) \cdot y_{kS} \geq u_k + \tfrac{1}{\rho} \cdot \big( f^*(\rho \vec{z}^{(k)}) - f^*(\rho \vec{z}^{(k-1)}) \big) - \ve R_k \enspace,
\]
where $R_k$ is the amount of time in $[0,1]$ during round $k$ in which
some non-empty set $S$ is chosen to increase the variable $y_{kS}$.

In particular, since $f^*(\vec{0})  = 0$, this implies that at the end of the algorithm,
\[
\langle \vec{v}, \vec{y} \rangle = \sum_{j \in [m]} \sum_{S \subseteq [n]} v_j(S) \cdot y_{jS} \geq \sum_{j \in [m]} u_j + \frac{1}{\rho} \cdot f^* \big( \rho \vec{z}^{(m)} \big) - \ve \sum_{j \in [m]} R_j \enspace.
\]
\end{lemma}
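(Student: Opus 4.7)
The strategy mirrors Lemma~\ref{lemma:yinc} for convex packing: we analyze rates of change against the time parameter $t \in [0,1]$ and integrate. Fix round $k$, and for each $t$ let $S(t) \subseteq [n]$ denote the set currently selected by the algorithm. By construction of the algorithm we have four rate identities: $\tfrac{d}{dt} \sum_{T} v_k(T)\, y_{kT} = v_k(S(t))$; $\tfrac{d u_k}{dt} = r(t)$; $\tfrac{dz_i}{dt} = \mathbb{1}[i \in S(t)]$; and using $\vec{x} = \nabla f^*(\rho \vec{z})$ together with the chain rule and the fundamental theorem for path integrals of vector fields,
\[
\tfrac{1}{\rho}\, \tfrac{d}{dt} f^*(\rho \vec{z}) \;=\; \langle \nabla f^*(\rho \vec{z}),\, \tfrac{d\vec{z}}{dt} \rangle \;=\; \sum_{i \in S(t)} x_i(t) \;=\; x_{S(t)}(t).
\]

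Next I would exploit the algorithm's invariant that $y_{kS(t)}$ is increased only while $\gamma_{S(t)}(t) \geq r(t) - \ve$, i.e.,
\[
v_k(S(t)) \;\geq\; r(t) + x_{S(t)}(t) - \ve.
\]
The key refinement is that the $-\ve$ slack is only actually used when $S(t) \neq \emptyset$. Indeed, assuming the usual normalization $v_k(\emptyset) = 0$, whenever $\emptyset$ is selected we must have $\gamma_{\emptyset}(t) = 0 = r(t)$ at the moment of selection (since $r(t) \geq \gamma_\emptyset = 0$ and $\emptyset$ attains the maximum), and because nothing is added to $\vec{z}$ afterward, both $r(t)$ and $v_k(\emptyset) = 0 = x_\emptyset$ remain at $0$ throughout that interval. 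Hence on $\{t : S(t) = \emptyset\}$ all four rate quantities vanish and the inequality $v_k(S(t)) \geq r(t) + x_{S(t)}(t)$ holds without any $\ve$-loss, while on $\{t : S(t) \neq \emptyset\}$ we lose at most $\ve$ per unit time.

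Integrating the pointwise bound over $t \in [0,1]$ gives
\[
\sum_{T \subseteq [n]} v_k(T)\, y_{kT} \;\geq\; \int_0^1 r(t)\, dt \;+\; \int_0^1 x_{S(t)}(t)\, dt \;-\; \ve \cdot R_k,
\]
and the first two integrals are exactly $u_k$ and $\tfrac{1}{\rho}\bigl(f^*(\rho \vec{z}^{(k)}) - f^*(\rho \vec{z}^{(k-1)})\bigr)$ respectively, yielding the per-round claim. Summing over $k \in [m]$ the $f^*$ terms telescope (using $\vec{z}^{(0)} = \vec{0}$ and $f^*(\vec{0}) = 0$) to give the final ``in particular'' statement. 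The only delicate step is the handling of the $S(t) = \emptyset$ case to get $\ve R_k$ rather than $\ve$ in the error; everything else is a direct calculation parallel to Lemma~\ref{lemma:yinc}.
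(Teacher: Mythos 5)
Your proposal is correct and takes essentially the same route as the paper's proof: the same pointwise inequality in $t$ with the $\ve$-loss charged only on times when a non-empty set is selected, the same path-integral identity $\int \langle \nabla f^*(\rho\vec{z}), d\vec{z}\rangle = \tfrac{1}{\rho}\big(f^*(\rho\vec{z}^{(k)}) - f^*(\rho\vec{z}^{(k-1)})\big)$, and the same telescoping over rounds. The only cosmetic difference is your added normalization $v_k(\emptyset)=0$, which the paper neither assumes nor needs: while $\emptyset$ is selected, $\vec{z}$ and hence $\vec{x}$ and $r(t)$ are frozen, so $\gamma_\emptyset(t)=r(t)$ holds exactly (and $x_\emptyset=0$), giving the lossless pointwise bound regardless of the value of $v_k(\emptyset)$.
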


\begin{proof} 
Recall that by Invariant 2, at time $t \in [0,1]$,
$y_{kS}$ increases only if $\gamma_S(t) \geq r(t) - \ve$.
On the other hand, observe that if the empty set $S = \emptyset$ is chosen
such that $y_{k\emptyset}$ is increased, both $\vec{z}$ and $\vec{x}$ remains the same.  Hence, the invariant $\gamma_\emptyset(t) = r(t)$ is actually maintained with no error term.  

We define the indicator function $\chi : [0,1] \rightarrow \{0,1\}$ such that
$\chi(t) = 1$ \emph{iff} a non-empty set $S$ is chosen to increase $y_{kS}$ at time $t$.  Then, we have $\gamma_S(t) \geq r(t) - \ve \cdot \chi(t)$

Recall that $\frac{d u_k}{d t} = r(t)$.  Hence, if $y_{kS}$ is increased at time $t$, we have

\[ v_k(S) \geq \frac{d u_k}{d t} + x_S(t) - \ve \cdot \chi(t).\]

By Invariant 1, $\sum_{S' \subseteq [n]} \frac{d y_{kS'}}{d t} = 1$.
Observe that for other $S' \neq S$, $\frac{d y_{kS'}}{d t} = 0$.  Hence, we can multiply the above equation by $\frac{d y_{kS}}{d t} = 1$,  and include
the zero terms in the sum for $S' \neq S$ to obtain the following.

\[ \sum_{S \subseteq [n]} v_k(S) \cdot \frac{d y_{kS}}{d t}
\geq \frac{d u_k}{d t} + \sum_{S \subseteq [n]} \sum_{i \in S} x_i \cdot \frac{d y_{kS}}{d t} - \ve \cdot \chi(t) = \frac{d u_k}{d t} +  \sum_{i \in [n]} x_i \cdot \frac{d z_i}{d t} - \ve \cdot \chi(t),\]

\noindent where the last equality follows from
interchanging the order of summation, and $\frac{d z_i}{d t} = \sum_{S \subseteq [n]: i \in S} \frac{d y_{kS}}{d t}$.

Observe that $\vec{x} = \nabla f^*(\rho \vec{z})$.
Hence, integrating with respect to $t$ from $0$ to $1$,
the vector $\vec{z}$ increases from $\vec{z}^{(k-1)}$
to $\vec{z}^{(k)}$, and we have:
\[
\sum_{S \in [n]} v_k(S) \cdot y_{kS} \geq u_k + \int_{\vec{z} = \vec{z}^{(k-1)}}^{\vec{z}^{(k)}} 
\langle \nabla f^*(\rho \vec{z}), d \vec{z} \rangle - \ve \cdot R_k
= u_k + \frac{1}{\rho} \cdot (f^*(\rho \vec{z}^{(k)}) - f^*(\rho \vec{z}^{(k-1)})) - \ve \cdot R_k \enspace,
\]
where $R_k := \int_0^1 \chi(t) dt$ is the amount of time
in which a non-empty set $S$ is chosen to increase $y_{kS}$,
and the last equality comes from the fundamental theorem of calculus for path integrals of vector fields.
\end{proof}

\begin{proof} [Theorem~\ref{th:poly_auction}]
We first use the trick in Section~\ref{sec:packingpoly}
to absorb the linear terms of $f^*$ into $\langle \vec{v}, \vec{y} \rangle$ in the objective function $P(\vec{y}) = \langle \vec{v}, \vec{y} \rangle - f^*(A^T y)$.  Hence, we can assume that
each term in $f^*$ has degree at least 2.

By Lemma \ref{lemma:yinc_auction}, after round $m$, we have
\[
\textstyle
P(\vec{y}) = \langle \vec{v}, \vec{y} \rangle - f^*(\vec{z}^{(m)}) \ge \langle \vec{1}, \vec{u} \rangle + \frac{1}{\rho} \cdot f^* \big( \rho \vec{z}^{(m)} \big) - f^*(\vec{z}^{(m)}) - \ve R,
\]
where $R := \sum_{j \in [m]} R_j$.

Recalling that $\vec{x} = \nabla f^*(\rho \vec{z}^{(m)})$,
$C(\vec{x}, \vec{u}) \defeq \langle \vec{1}, \vec{u} \rangle +  f(\nabla f^*(\rho \vec{z}^{(m)})) \leq \langle \vec{1}, \vec{u} \rangle 
+ (\tau - 1) \cdot f^*(\rho \vec{z}^{(m)})$,
where the last inequality follows from Lemma~\ref{lemma:convexf}(c),
since the polynomial $f^*$ is has maximum degree $\tau$.

Hence, similar to inequality (\ref{eq:calc2}) in the proof of
Theorem~\ref{thm:packinggeneral},
we have:

\[\textstyle
\frac{C(\vec{x}, \vec{u})}{P(\vec{y})} \leq 
\frac{\langle \vec{1}, \vec{u} \rangle + (\tau - 1) \cdot f^*(\rho \vec{z}^{(m)})}{\langle \vec{1}, \vec{u} \rangle + \frac{1}{\rho} \cdot f^*(\rho \vec{z}^{(m)}) - f^*(\vec{z}^{(m)}) - \ve R}
\leq \max \{1,
\frac{(\tau - 1) \cdot f^*(\rho \vec{z}^{(m)})}{\frac{1}{\rho} \cdot f^*(\rho \vec{z}^{(m)}) - f^*(\vec{z}^{(m)}) - \ve R} \}.
\]

Observe that the second argument of the maximum operator is exactly the same expression appearing in inequality (\ref{eq:calc2}) apart from the negative $\ve R$ error term in the denominator.
We use the assumption that every term in $f^*$ has degree at least $\lambda = 2$
and set $\rho := 2$.  

We next show how to choose $\ve > 0$ such that $\ve R \leq \frac{1}{10} \cdot f^*(\vec{z}^{(m)})$.  Given the first value function $v_1$,
we can give a lower bound on $\|\vec{z}^{(1)}\|_1$ using the local Lipschitz property of $f^*$.

Let $\beta := \max_{S \subseteq [n]} v_1(S) - v_1(\emptyset)$.
We can assume $\beta > 0$; otherwise, the empty set will be chosen in this round, and $\vec{z}$ and $\vec{x}$ remains zero.

Observe that as $t$ increases from 0 to 1, $\| \rho \vec{z} \|_1 \leq \rho n$.
Suppose $L$ is the local Lipschitz constant of $\nabla f^*$ in the $\ell_1$-ball of radius $\rho n$ around $\vec{0}$. Then, the function $\vec{z} \mapsto \vec{x} := \nabla f^*(\rho \vec{z})$ has local Lipschitz constant $\rho L$.  

Observe that regardless of the value of $\ve$,
if the empty set is ever chosen, the norm $\|\vec{x}\|_1$ must have increased by at least $\beta$, which means $\|\vec{z}\|_1$ has increased by at least $\frac{\beta}{\rho L}$.
On the other hand, if the empty set is never chosen,
then as $t$ increases, $\|\vec{z}\|_1$ increases at rate at least 1.
Hence, in any case, no matter what the value of $\ve$ is, at the end of the first round, $\|\vec{z}^{(1)}\|_1  \geq R_0 := \min \{1, \frac{\beta}{\rho L}\}$.

Hence, after seeing the first value function $v_1$, the algorithm can choose $\ve := \frac{1}{10} \inf_{\|\vec{z}\|_1 \geq R_0} \frac{f^*(\vec{z})}{\|\vec{z}\|_1}$.  Observing that $\vec{z}$ increases monotonically throughout
the algorithm, at the end of round $m$, $\|\vec{z}^{(m)}\|_1 \geq R_0$.
Hence, we have $\ve R \leq \frac{1}{10} \cdot \frac{f^*(\vec{z}^{(m)})}{\|\vec{z}^{(m)}\|_1} \cdot R \leq \frac{1}{10} \cdot f^*(\vec{z}^{(m)})$,
where the last inequality follows because during the time period of measure $R$ in which a non-empty set is chosen, $\|\vec{z}\|_1$ increases at rate at least 1.

Therefore, by a similar analysis as in the proof of Theorem~\ref{thm:packinggeneral},
the competitive ratio is $O(\tau)$, as required.
\end{proof}

{
\bibliography{chk15ref}
\bibliographystyle{plain}
}

\end{document}